\newtheorem{theorem}{Theorem}[section]
\newtheorem{cor}[theorem]{Corollary}
\newtheorem{prop}[theorem]{Proposition}
\theoremstyle{definition}
\theoremstyle{remark}
\newtheorem{remark}[theorem]{Remark}
\theoremstyle{conjecture}
\newtheorem{conj}[theorem]{Conjecture}
\numberwithin{equation}{section}
\newcommand{\bU}{\mathbf{U}}
\newcommand{\Tr}{\mathrm{Tr}\,}
\newcommand{\linc}{l^{\boxslash}}
\newcommand{\ldec}{l^{\boxbslash}}
\newcommand{\rmhard}{\mathrm{hard}}
\newcommand{\rmsoft}{\mathrm{soft}}
\newcommand{\rmAi}{\mathrm{Ai}}
\newcommand{\bbE}{\mathbb{E}}
\newcommand{\Var}{\mathrm{Var}}
\def\picHamLines{\begin{tikzpicture}[scale=0.4, every node/.style={transform shape}]

\draw[thick] (0,0)--(0,9)--(9,9)--(9,0)--(0,0);
\node[below, left] at (-0.2, -0.2) {\Huge $(0,0)$};
\node[below, right] at (9.2, -0.2) {\Huge $(1,0)$};
\node[above, left] at (-0.2, 9.2) {\Huge $(0,1)$};
\node[above,right] at (9.2, 9.2) {\Huge $(1,1)$};


\draw[ultra thick,color=white!25!black,fill=black] (2,1) circle (3pt);
\draw[ultra thick,color=white!25!black,fill=black] (5,2) circle (3pt);
\draw[ultra thick,color=white!25!black,fill=black] (6,3) circle (3pt);
\draw[ultra thick,color=white!25!black,fill=black] (8,4) circle (3pt);
\draw[ultra thick,color=white!25!black,fill=black] (7,5) circle (3pt);
\draw[ultra thick,color=white!25!black,fill=black] (3,6) circle (3pt);
\draw[ultra thick,color=white!25!black,fill=black] (4,7) circle (3pt);
\draw[ultra thick,color=white!25!black,fill=black] (1,8) circle (3pt);

\draw (0,0)--(2,1)--(5,2)--(6,3)--(8,4)--(9,9);
\draw (6,3)--(7,5)--(9,9);

\end{tikzpicture}}
\def\picHamPerm{\begin{tikzpicture}[scale=0.4, every node/.style={transform shape}]

\draw[thick] (0,0)--(0,9)--(9,9)--(9,0)--(0,0);

\draw[ultra thick,color=white!25!black,fill=black] (2,1) circle (3pt);
\draw[ultra thick,color=white!25!black,fill=black] (5,2) circle (3pt);
\draw[ultra thick,color=white!25!black,fill=black] (6,3) circle (3pt);
\draw[ultra thick,color=white!25!black,fill=black] (8,4) circle (3pt);
\draw[ultra thick,color=white!25!black,fill=black] (7,5) circle (3pt);
\draw[ultra thick,color=white!25!black,fill=black] (3,6) circle (3pt);
\draw[ultra thick,color=white!25!black,fill=black] (4,7) circle (3pt);
\draw[ultra thick,color=white!25!black,fill=black] (1,8) circle (3pt);

\draw (0,0)--(2,1)--(5,2)--(6,3)--(8,4)--(9,9);
\draw (6,3)--(7,5)--(9,9);


\node[below] at (1,-0.2) {\Huge $1$};
\node[below] at (2,-0.2) {\Huge $2$};
\node[below] at (3,-0.2) {\Huge $3$};
\node[below] at (4,-0.2) {\Huge $4$};
\node[below] at (5,-0.2) {\Huge $5$};
\node[below] at (6,-0.2) {\Huge $6$};
\node[below] at (7,-0.2) {\Huge $7$};
\node[below] at (8,-0.2) {\Huge $8$};

\node[left] at (-0.2,8) {\Huge $1$};
\node[left] at (-0.2,1) {\Huge $2$};
\node[left] at (-0.2,6) {\Huge $3$};
\node[left] at (-0.2,7) {\Huge $4$};
\node[left] at (-0.2,2) {\Huge $5$};
\node[left] at (-0.2,3) {\Huge $6$};
\node[left] at (-0.2,5) {\Huge $7$};
\node[left] at (-0.2,4) {\Huge $8$};

\end{tikzpicture}}
\begin{document}

\title[Finite size corrections and longest
increasing subsequences]{Finite size corrections relating to distributions of the length of longest
increasing subsequences}
\author{Peter J. Forrester and Anthony Mays}
\address{School of Mathematics and Statistics, 
ARC Centre of Excellence for Mathematical \& Statistical Frontiers,
University of Melbourne, Victoria 3010, Australia}
\email{pjforr@unimelb.edu.au}

\date{\today}


\begin{abstract}
Considered are the large $N$, or large intensity, forms of the distribution of the length of the longest increasing
subsequences for various models. Earlier work has established that after centring and scaling, the limit laws for  these distributions relate to certain 
distribution functions at the hard edge known from random matrix theory. 
By analysing the hard to soft edge transition, we supplement and extend results of Baik and Jenkins for
the Hammersley model and symmetrisations, which give  that the leading 
correction is proportional to $z^{-2/3}$, where $z^2$ is the intensity of the Poisson rate,
and provides a functional form as derivates of the limit law.
Our methods give the functional form 
 both in terms of Fredholm operator theoretic quantities, and in terms of Painlev\'e transcendents. For random
permutations and their symmetrisations, numerical analysis of exact enumerations and simulations gives compelling evidence that the leading
corrections are proportional to $N^{-1/3}$, and moreover provides an approximation to their graphical forms.

\end{abstract}


\maketitle

\section{Introduction}

Taking a viewpoint of random matrix theory in probability theory, it is very natural to ask
about the rate of convergence to universal laws. Consider for example the spacing distribution,
$p_2(s)$ say, between consecutive eigenvalues in ensembles with unitary symmetry. 
Here the subscript is the Dyson index $\beta = 2$ for unitary symmetry. The corresponding
universal law, obtained by taking the large $N$ limit of an ensemble with unitary
symmetry and scaling the mean spacing to unity,
tells us that \cite{Dy62a}
\begin{equation}\label{1.1}
p_2(s) = {d^2 \over d s^2} \log \Big ( 1 - \mathbb K_{(0,s)}^{\rm sine} \Big ),
\end{equation}
where $K_{(0,s)}^{\rm sine}$ is the Fredholm determinant of the integral operator 
on $(0,s)$ with the so-called sine kernel
\begin{equation}\label{1.1a}
K^{\rm sine}(x,y) = {\sin \pi (x - y) \over \pi (x - y)}.
\end{equation}
An example of an ensemble with unitary symmetry is the set of $N \times N$
unitary matrices chosen with Haar (uniform) measure. With $p_{2,N}^{U(N)}(s)$
denoting the spacing distribution between consecutive eigenvalues in this
ensemble, the limit theorem relating to (\ref{1.1}) is that
\begin{equation}\label{1.1b}
\lim_{N \to \infty} \Big ( { 2 \pi \over N}  \Big )^2 p_{2,N}^{U(N)}(2 \pi s/N ) = p_2(s).
\end{equation}
The rate of convergence question may be posed by asking for a tight bound on
\begin{equation}\label{1.1c}
\sup_{0\le s \le  N} \Big | p_{2,N}^{U(N)}(2 \pi s/N )  - p_2(s)  \Big |.
\end{equation}

Less ambitious, but more in keeping with an applied mathematics viewpoint on this
aspect of random matrix theory, is to ask for the leading term in the large $N$
asymptotic expansion of the difference 
\begin{equation}\label{1.1d}
p_{2,N}^{U(N)}(2 \pi s/N )  - p_2(s)
\end{equation}
occurring in (\ref{1.1c}) for $s$ fixed. Indeed this question is central to probing the Keating--Snaith
hypothesis \cite{KS00a} relating the statistical distribution of the eigenvalues of Haar distributed
random unitary matrices to the statistical distribution of the zeros of the Riemann zeta
function on the critical line \cite{Od01,BBLM06,FM15,BFM17}. Here one uses Odlyzko's
data set \cite{Od01} of over $10^9$ high precision consecutive zeros about a zero number near $10^{23}$
to obtain the empirical spacing distribution. From a graphical viewpoint this appears identical
to $p_2(s)$. This is in keeping with the Montgomery--Odlyzko law (see e.g.~\cite{Sa04}) equating 
the scaled local statistics of the Riemann zeros infinitely high up the critical line
to the limiting bulk scaled eigenvalue statistics from any random matrix ensemble with
unitary symmetry. However there are finite size effects --- even though $10^{23}$ is huge
on an absolute scale, it is the logarithm of the zero number which is the relevant measure
of size. The extraordinary statistics provided by Odlyzko's data set allows for the functional
form of the analogue of the difference (\ref{1.1d}), where now $p_{2,N}^{U(N)}(2 \pi s/N ) $ is
replaced by the empirically determined spacing distribution, to be accurately determined. The
Keating--Snaith hypothesis predicts that this difference will be identical to the difference (\ref{1.1d})
for an appropriate value of $N$ and rescaling of $s$. Hence the applied interest in
(\ref{1.1d}) for fixed $s$ and large $N$, a study of which was undertaken in \cite{BBLM06,FM15,BFM17}.

\begin{figure}
\centering
     \begin{minipage}{0.45\textwidth}
	\begin{tikzpicture}[scale=1, every node/.style={transform shape}]
         \node[inner sep=0pt] at (0,0) {\includegraphics[height=4cm, align=t]{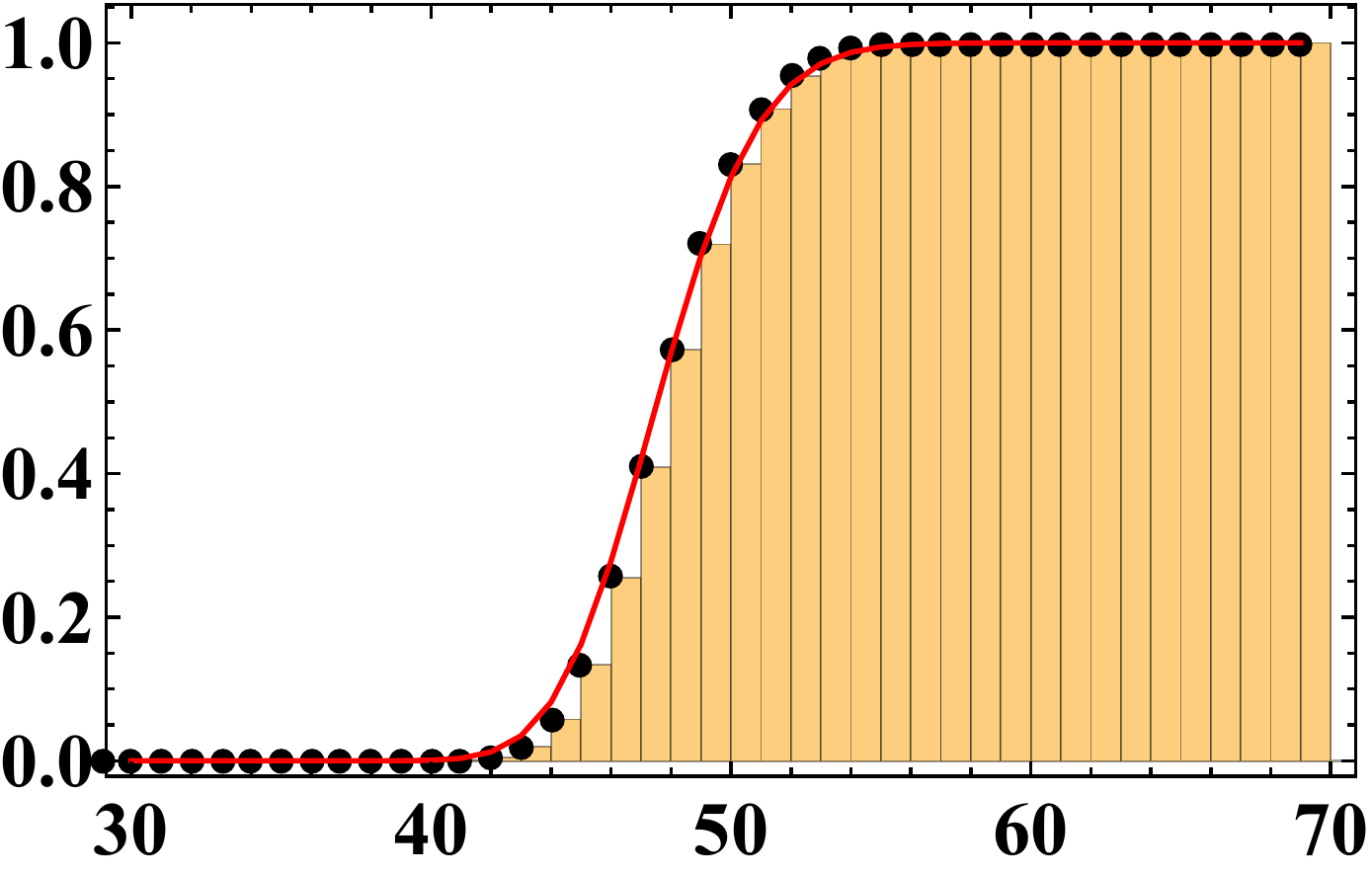}};
         \node at (3.5,-1.7) {\textbf{\textit{l}}};
         \node at (0.3,2.4) {$\Pr\big( l^{\Box}_{700} \leq l\big)$};
         \end{tikzpicture}
	\end{minipage} \quad
	\begin{minipage}{0.45\textwidth}
	\begin{tikzpicture}[scale=1, every node/.style={transform shape}]
         \node[inner sep=0pt] at (0,0) {\includegraphics[height=4.17cm, align=t]{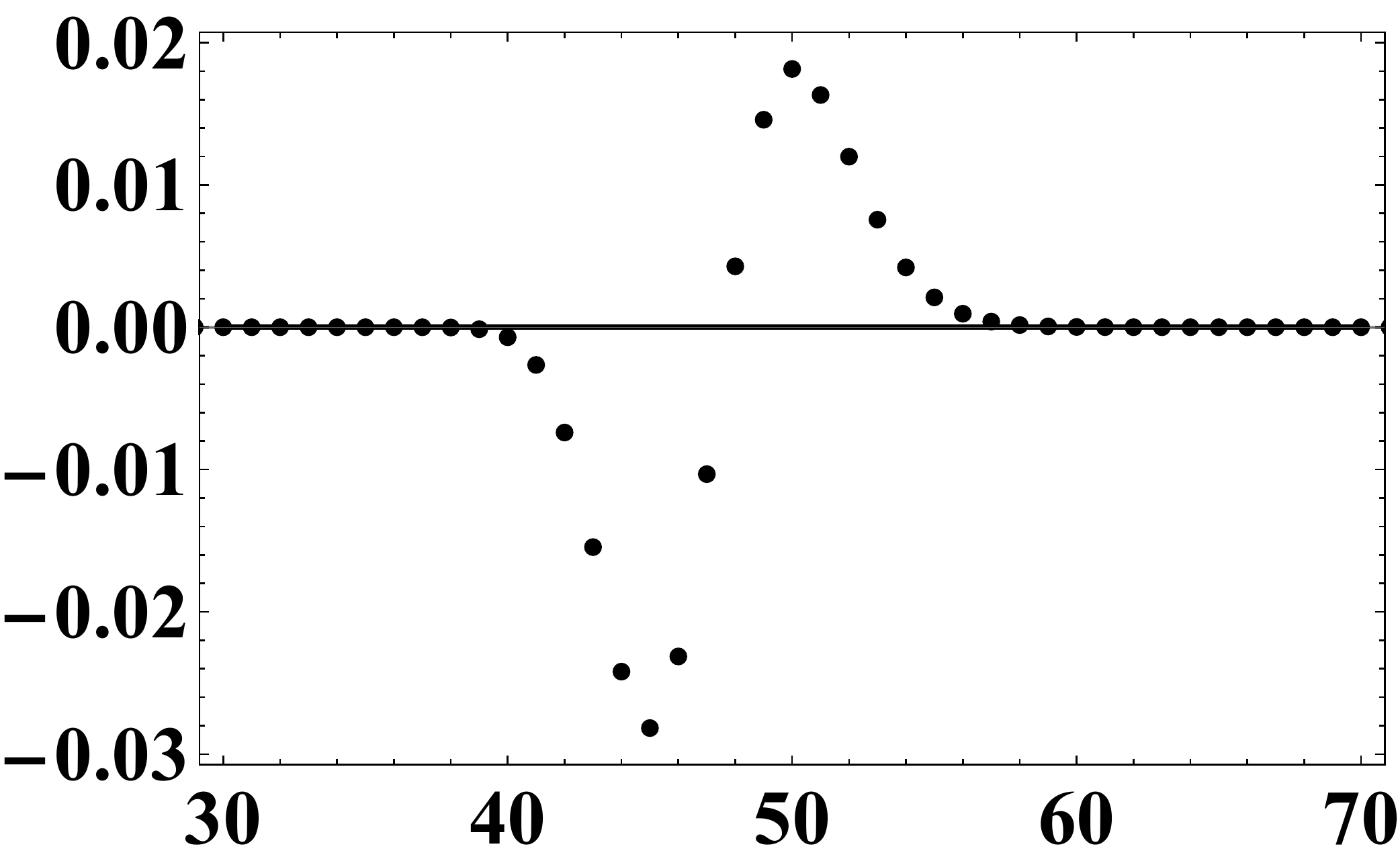}};
         \node at (3.7,-1.75) {\textbf{\textit{l}}};
         \node at (0.3,2.3) {Difference \eqref{1.1h}};
         \end{tikzpicture}
	\end{minipage}
        \caption{On the left we have the empirical CDF of the longest increasing subsequences of 1,000,000 random permutations of length $N= 700$, along with the calculation of the exact CDF using $c_{700}^{\Box} (l)$ in \eqref{e:Gtlexp} [black dots] and the limiting CDF given by the second term in (\ref{1.1h}) [red curve]. On the right is plotted the difference \eqref{1.1h}.}
        \label{f:AsymmLISn700}
\end{figure}

Our interest in the present work relates to the finite size corrections of another applied problem from random matrix theory, this time in the field of combinatorics. Take the set of the first $N$ positive integers, and choose a permutation uniformly
at random. The question is to specify the statistics of the longest increasing subsequence
length, $l_N^\square$ say, from the permutation in the large $N$ limit. To explain the notion of the longest increasing subsequence, suppose $N = 8$ and the permutation is the ordered list $2,5,6,8, 7,3,4,1$. The longest subsequences of increasing numbers in this list have length 4: $2,5,6,8$ and $2,5,6,7$. A result of Logan and Shepp 
\cite{LS77} (see the
Introduction in \cite{OR00} for more context) gives that asymptotically the expected
length is $2 \sqrt{N}$. The tie in with random matrix theory is the limit theorem  \cite{BDJ98}
\begin{equation}\label{1.1e}
\lim_{N \to \infty} {\rm Pr} \Big ( {l_N^\square - 2 \sqrt{N} \over N^{1/6}}  \le t \Big ) = E_2^{\rm soft}(0;(t,\infty)),
\end{equation}
where $E_2^{\rm soft}(0,(t,\infty))$ is the probability that, after centring and scaling about the
largest eigenvalue, the interval $(t,\infty)$ is free of eigenvalues
in an Hermitian random
matrix ensemble with unitary symmetry, or equivalently
$E_2^{\rm soft}(0,(t,\infty))$ is the distribution of the scaled largest eigenvalue. This quantity has the exact evaluations  \cite{Fo93a,TW94a}
\begin{align}\label{1.1f}
E_2^{\rm soft}(0;(t,\infty)) & = \det \Big ( \mathbb I - \mathbb K_{(t,\infty)}^{\rm soft} \Big )      \nonumber     \\
& = \exp \bigg ( - \int_t^\infty ( t - s) q^2_0(s) \, ds \bigg ).
\end{align}
In the first of these expressions, $\mathbb{K}_{(t,\infty)}^{\rm soft}$ is the integral operator on $(t,\infty)$ with
kernel
\begin{equation}\label{1.1g}
K^{\rmsoft}(x,y)= {{\rm Ai}(x)  {\rm Ai}'(y)  - {\rm Ai}(y)  {\rm Ai}'(x)  \over x - y},
\end{equation}
where ${\rm Ai}(u)$ denotes the Airy function. In the second expression, $q_0(t)$ is the solution of the
particular Painlev\'e II equation $q'' = s q + 2 q^3$ satisfying the boundary condition $q_0(s) \sim
{\rm Ai}(s)$ as $s \to \infty$. Much more about the mathematics relating to the longest
increasing subsequence problem for a random permutation can be found in \cite{AD99}
and \cite{Ro15}.

In analogy with our discussion of studies in random matrix theory motivated by 
Odlyzko's data for the Riemann zeros, an immediate question is to inquire about the
large $N$ form of the difference
\begin{equation}\label{1.1h}
 {\rm Pr} ( l_N^\square \le l)  - E_2^{\rm soft} \left( 0;\left( \frac{l- 2\sqrt{N}}{N^{1/6}} ,\infty \right) \right).
 \end{equation}
 There are various ways to generate exact and simulated data for this quantity; these are discussed
 in \S 4.
 In Figure \ref{f:AsymmLISn700}, in the first panel we plot the histogram corresponding to the 
 empirical cumulative distribution function (CDF) for
 $l^{\Box}_{700}$, computed from the longest increasing subsequence length of $10^6$
 random permutations. Also plotted, as black dots, is the exact CDF as a function of the positive
 integer $l$, calculated as described in Section \ref{S4.2} below, and plotted as a red line is the quantity $E_2^{\rmsoft}( 0; ((l- \sqrt{N})/N^{1/6}, \infty) ) |_{N = 700}$ with $l$ varying continuously. In the second panel the difference (\ref{1.1h}) is displayed.  In relation to the functional form in the second panel, we would like to know its dependence on $N$, and its functional form at next-to-leading order.  Our result of Conjecture \ref{C1} asserts the large $N$ asymptotic expansion
\begin{equation}\label{1.1i}
 {\rm Pr} \Big ( {l_N^\square - 2 \sqrt{N} \over N^{1/6}}  \le t \Big ) =  F_{2,0}(t^*) + {1 \over N^{1/3}} F_{2,1}(t) +  \cdots, \quad t^* := ([2 \sqrt{N} + t N^{1/6}] - 2 \sqrt{N})/N^{1/6},
  \end{equation}
  where $ F_{2,0}(t)  = E_2^{\rm soft}(0;(t,\infty))$ as known from (\ref{1.1e}), while the functional form $F_{2,1}(t)$
  remains unknown as an analytic function, but can be approximated graphically; see Figure~\ref{f:CompDelta2}.
  

\begin{figure}
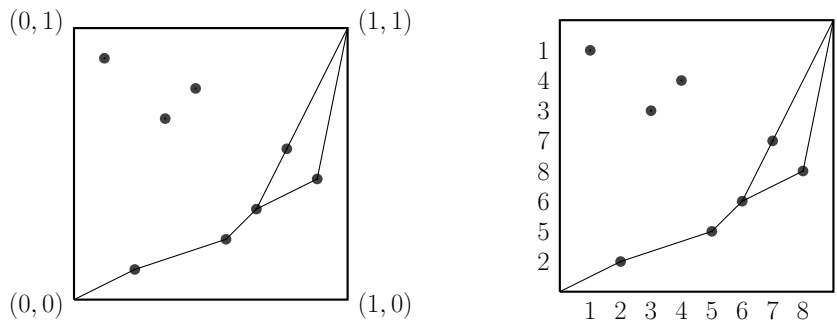

\centering
\picHamLines \qquad \qquad \picHamPerm
\caption{An example of the Hammersley process, where the (random) number of points is $N= 8$. On the left we have eight points marked in the unit square, and the longest paths we can make from this arrangement of points (as measured by the number of points on the path) using only segments with positive slope have length four, \textit{i.e.} $l^{\Box} =4$. The mapping to the permutation $(2,5,6,8, 7,3,4,1)$ is in the diagram on the right, where we number the points sequentially on the horizontal axis, and then move up the vertical axis listing the label of each point as we reach it. The two longest paths correspond to the subsequences $(2,5,6,8)$ and $(2,5,6,7)$.}
\label{f:HammPath}
\end{figure}

 There is a well known Poissonized form of the longest increasing subsequence problem known as
 the Hammersley process (see e.g.~\cite[\S 10.6]{Fo10}). 
Each permutation of length $N$ appears with probability
 $ z^{2N}/ N!$ and is represented by $N$ points in the unit square, where $z$ is the Poisson rate of the number of these points. The length of the longest
 increasing subsequence is now the maximum over the number of points that can be joined using
 straight line segments starting from the origin and finishing at $(1,1)$,
 or equivalently the maximum number of segments in a path through the dots which always goes up and
 to the right.
  The example of the permutation
 for $N = 8$ given in the second paragraph is illustrated in Figure \ref{f:HammPath} from this viewpoint. Define this
 length to be the random variable $l^\square = l^\square(z)$.
 From the definition,
 \begin{equation}\label{1.1j}
  {\rm Pr} ( l^\square  \le l) =   e^{- z^2} \sum_{N=0}^\infty {z^{2N} \over N!}  {\rm Pr} ( l_N^\square  \le l) .
  \end{equation}
  Analogous to (\ref{1.1e}), this quantity satisfies the limit law \cite{BDJ98, BF03}
 \begin{equation}\label{1.1k}   
  \lim_{z\to \infty} \Pr \left(\frac{l^{\Box} -2z}{z^{1/3}} \leq t \right)= E_2^{\rmsoft} \Big( 0; (t, \infty) \Big).
    \end{equation}
    The study \cite{FF11} showed the existence of an expansion analogous to (\ref{1.1i}) with $N$ replaced by $z^2$, while most significantly from the present viewpoint, the subsequent work of Baik and Jenkins \cite[Theorem 1.3]{BJ13} provided an explicit functional form of the analogue of $F_{2,1}(s)$.

    The reasons why ${\rm Pr} ( l^\square  \le l)$ is more tractable than ${\rm Pr} ( l_N^\square \le l)$ for asymptotic
    analysis are either of the formulas \cite{Ge90,Ra98}
    \begin{align}
\label{e:lBoxE2HardAv} \Pr(l^{\Box} \leq l) = e^{-z^2} \left\langle e^{2z \sum_{j=1}^l \cos \theta_j} \right\rangle_{U(l)},
\end{align}
where $\langle \cdot \rangle_{U(l)}$ is the average over the unitary group of degree $l$ with Haar measure,
or \cite{BF03}
\begin{align}
\label{e:lBoxE2Hard} \Pr(l^{\Box} \leq l)= E_2^{\rmhard} \Big( 0; (0, 4z^2); l \Big),
\end{align}
where $E_2^{\rmhard} \Big( 0; (0, s); a \Big)$ is the hard edge scaled probability that in the Laguerre
unitary ensemble with parameter $a$ (recall the Laguerre weight is $x^a e^{-x}$) the interval $(0,s)$
is free of eigenvalues. The first of these was used in \cite{BDJ98} to prove (\ref{1.1k}), while an alternative
proof of (\ref{1.1k}) using (\ref{e:lBoxE2Hard}) was given in \cite{BF03}. 

 As with the work \cite{BJ13}, the question of interest is now to quantify the large $z$ expansion of
   \begin{equation}\label{1.1l}   
 \Pr \left(\frac{l^{\Box} -2z}{z^{1/3}} \leq t \right) - E_2^{\rmsoft} \Big( 0; (t, \infty) \Big).
 \end{equation}  
 In this setting a precise statement can be made, supplementing the result known from \cite{BJ13};
 in relation to the latter see \S 2.2. To state our result requires introducing the integral
 operator 
 $\mathbb{L}_{(t, \infty)}$  on $(t, \infty)$ with kernel
\begin{align}
\nonumber L (x,y):= - \frac{1}{2^{1/3} (x-y)} &\Bigg[ \frac{x-y}{5} \Big( {\rm Ai}(x) {\rm Ai}'(y) + {\rm Ai}'(x) {\rm Ai} (y) \Big) \\
\label{d:Lxy} & + \frac{x^3 -y^3}{30} {\rm Ai}(x) {\rm Ai }(y) - \frac{x^2 -y^2}{30} {\rm Ai}'(x) {\rm Ai}' (y) \Bigg].
\end{align}

 \begin{prop}\label{P1}
 With $[ u ]$ denoting the integer part of a positive real number $u$, let
  \begin{equation}\label{1.1u} 
  \tilde{t} = ([2z+ t z^{1/3}] - 2 z)/ z^{1/3}.
  \end{equation}
  For large $z$ we have
 \begin{equation}\label{1.1m}   
   \Pr \left(\frac{l^{\Box} -2z}{z^{1/3}} \leq t \right) = F_{2,0}^{ \rm H }(\tilde{t}) + {1 \over (2z)^{2/3}} F_{2,1}^{\rm H}(t) + {\rm O}(z^{-4/3}),
  \end{equation}
  where $ F_{2,0}^{\rm H}(\tilde{t})  = E_2^{\rm soft}(0,(\tilde{t},\infty))$ and 
  \begin{align}
\label{e:Corr2t}  F_{2,1}^{\rm H }(t) &= - \det \Big( \mathbb I - \mathbb{K}^{\rmsoft}_{(t, \infty)} \Big) \Tr \Big( (\mathbb I - \mathbb{K}^{\rmsoft}_{(t, \infty)})^{-1} \mathbb{L}_{(t, \infty)} \Big).
\end{align}
    \end{prop}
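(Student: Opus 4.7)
The plan is to begin from the identity (\ref{e:lBoxE2Hard}), which expresses $\Pr(l^{\Box}\le l)$ as the hard-edge Fredholm determinant $\det(\mathbb{I} - \mathbb{K}^{\rmhard,l}_{(0, 4z^2)})$ with the Bessel kernel of parameter $l$, and to push the hard-to-soft edge transition through to its first subleading correction in $z^{-2/3}$. With $l=[2z+tz^{1/3}]=2z+\tilde{t}\,z^{1/3}$ exactly, the Bessel parameter $l$ and the upper endpoint $4z^2$ are tied together so that the arguments $\sqrt{x}\in(0,2z)$ of $J_l(\cdot)$ run up to the turning point $\sqrt{x}\approx l$ of the Bessel function. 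This is precisely the regime in which Olver's uniform Airy-type asymptotic expansion of $J_\nu$ of large order converts the Bessel kernel into a soft-edge Airy kernel plus an explicit hierarchy of corrections.

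The core technical step is to insert Olver's expansion, schematically
\begin{equation*}
J_l\bigl(l + 2^{-1/3}\, l^{1/3}\,\eta\bigr) \;\sim\; (2/l)^{1/3}\!\left[\, \mathrm{Ai}(-\eta) + l^{-2/3}\,A_1(\eta) + O(l^{-4/3})\,\right],
\end{equation*}
where $A_1(\eta)$ is an explicit polynomial-coefficient combination of $\mathrm{Ai}(-\eta)$ and $\mathrm{Ai}'(-\eta)$, into the Bessel kernel after the change of variables $\sqrt{x}=l+2^{-1/3} l^{1/3} X$ (and analogously for $y$); this maps the operator on $(0,4z^2)$ to an operator on $(\tilde{t},T^*)$ with $T^*\to\infty$ as $z\to\infty$. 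Expanding the Jacobian-adjusted kernel in powers of $(2z)^{-2/3}$, the leading term reproduces $K^{\rmsoft}(X,Y)$ from (\ref{1.1g}); the subleading term, after combining $\mathrm{Ai}\cdot\mathrm{Ai}'$ and $\mathrm{Ai}'\cdot\mathrm{Ai}$ cross-products coming from $A_1$ with the Taylor expansion of the $1/(x-y)$ denominator, must collapse to exactly the kernel $L(X,Y)$ of (\ref{d:Lxy}). Standard decay of $\mathrm{Ai}$ and $\mathrm{Ai}'$ at $+\infty$ then controls the trace-norm remainder on the semi-infinite interval and at the upper cutoff $T^*$, upgrading pointwise kernel convergence to operator convergence in trace class.

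The last step is the standard first-order Fredholm perturbation identity
\begin{equation*}
\det(\mathbb{I} - \mathbb{A} - \epsilon\,\mathbb{B}) \;=\; \det(\mathbb{I} - \mathbb{A})\,\Bigl[\,1 - \epsilon\, \Tr\bigl((\mathbb{I}-\mathbb{A})^{-1}\mathbb{B}\bigr) + O(\epsilon^2)\,\Bigr],
\end{equation*}
applied with $\mathbb{A}=\mathbb{K}^{\rmsoft}_{(\tilde{t},\infty)}$, $\mathbb{B}=\mathbb{L}_{(\tilde{t},\infty)}$ and $\epsilon=(2z)^{-2/3}$, which directly produces (\ref{1.1m}) with $F_{2,1}^{\rm H}(t)$ as in (\ref{e:Corr2t}); the replacement of $\tilde{t}$ by $t$ in the subleading argument is absorbed into the stated remainder. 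The main obstacle is the middle step: extracting $A_1(\eta)$ in a form explicit enough to substitute into the Christoffel--Darboux-type Bessel kernel with its $1/(x-y)$ factor, and then carrying out the algebra that condenses the resulting proliferation of bilinear Airy expressions into the compact polynomial-coefficient form (\ref{d:Lxy}). A secondary concern is the uniformity of Olver's expansion over the full transition region, which is precisely what is needed to promote the pointwise kernel asymptotics to the trace-norm convergence that the Fredholm perturbation step requires.
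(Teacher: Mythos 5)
Your proposal is correct in outline and follows essentially the same route as the paper: start from \eqref{e:lBoxE2Hard}, apply the uniform large-order (Olver-type) Airy expansions of $J_l$ and $J_l'$ under a soft-edge change of variables to obtain the kernel expansion $K^{\rmsoft}+l^{-2/3}L+{\rm O}(l^{-4/3})$ with exponential error control, and conclude with the first-order Fredholm perturbation identity (the paper's appeal to \cite[Lemma 1]{BFM17}). The one point requiring care is your linear substitution $\sqrt{x}=l+2^{-1/3}l^{1/3}X$, which maps the endpoint $4z^2$ to $\tilde{t}$ only up to an ${\rm O}(l^{-2/3})$ shift that contributes at the same order as the correction term; the paper avoids this by using the exact change of variables $x=Q(l;X)$ of \eqref{2.1b}--\eqref{2.1c}, so that the lower terminal is exactly $\tilde{t}$, and then compensates for the nonlinearity of that map through the Airy Taylor expansions \eqref{2.1i} before arriving at the kernel \eqref{d:Lxy}.
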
   
    
    In keeping with the two characterisations of $E_2^{\rm soft}(0; (t, \infty) )$ given in (\ref{1.1f}), one as
    a Fredholm determinant and  the other  in terms of the Painlev\'e transcendent $q_0(s)$, the correction
    $F_{2,1}^{\rm H}(t)$ can  alternatively be written in terms of a second order linear differential
    equation with a Painlev\'e transcendent relating to $q_0(s)$ occurring in the coefficients; see Proposition \ref{p:Corr2DE} below.
    Moreover we show that this can be further simplified to give agreement with the result of Baik and Jenkins \cite{BJ13}.

There are symmetrised versions of the longest increasing subsequence problem and the corresponding
Hammersley process that permit analogues of the limit laws  \eqref{1.1e} and (\ref{1.1k}), and which also admit analogues of Proposition \ref{P1} \cite{BR01,BR01a,BR01b}. To specify these, we first recall that a permutation of
$\{1,\dots,N\}$ can be represented as an $N \times N$ matrix, $P$ say, of zeros and ones with exactly $N$ ones,
distributed so each row and column has a single one. For convenience, number the rows of $P$ starting
from the bottom, and suppose that whenever there is an entry one in position $(i,j)$, there is also an
entry one in position $(j,i)$. If furthermore there are no entries on the diagonal, $N$ must be even
and the permutation consists entirely of two cycles. For such a permutation chosen uniformly at random, 
denote the random variable corresponding to the length of the longest increasing subsequence by
$\linc_{N}$. Now, in the corresponding two line presentation, suppose the order of the second line is reversed,
which is equivalent to rotating $P$ by ninety degrees clockwise. Assuming again that the original permutation
of two cycles was chosen uniformly at random,  the random variable corresponding to the longest increasing subsequence  length of this rotated permutation (equivalently, the longest \textit{decreasing} subsequence of the original permutation) is
to be denoted by $\ldec_{N}$. 

The Hammersley model relating to $\linc_{N}$ has only the points
in the unit square below the diagonal from $(0,0)$ to $(1,1)$ independent; the points above the diagonal are
reflections of these points. The longest up/right path length in this setting will be denoted
$\linc = \linc(z)$.
Similarly, for the rotated Hammersley model relating to $\ldec_{N}$, only the points in the unit square below the diagonal
from $(0,1)$ to $(1,0)$ are chosen independently, with the remaining points the reflection in this diagonal of
those points, and we denote by  $\ldec = \ldec(z)$  the longest up/right path length.

The analogue of the limit laws (\ref{1.1e}) and (\ref{1.1k}) are now \cite{BR01b}
\begin{equation}\label{1.1n}  
\lim_{N \to \infty} {\rm Pr} \Big ( {\linc_N - 2 \sqrt{N} \over N^{1/6}}  \le t \Big ) = 
\lim_{z \to \infty} {\rm Pr} \Big ( {\linc - 2 z \over z^{1/3}}  \le t \Big ) 
= \tilde{E}_4^{\rm soft}(0;(t,\infty))
\end{equation}
and 
\begin{equation}\label{1.1o}  
\lim_{N \to \infty} {\rm Pr} \Big ( {\ldec_N - 2 \sqrt{N} \over N^{1/6}}  \le t \Big ) = 
\lim_{z \to \infty} {\rm Pr} \Big ( {\ldec - 2 z \over z^{1/3}}  \le t \Big ) 
= {E}_1^{\rm soft}(0;(t,\infty)).
\end{equation}
The quantities $\tilde{E}_4^{\rm soft}(0;(t,\infty))$ and ${E}_1^{\rm soft}(0;(t,\infty))$ denote 
the probability that upon a soft edge scaling in the neighbourhood of the largest eigenvalue
in the Gaussian $\beta$ ensemble with $\beta = 4$ and $\beta = 1$, the interval
$(t,\infty)$ is free of eigenvalues. The tilde symbol on    $\tilde{E}_4^{\rm soft}(0;(t,\infty))$
indicates a rescaling of the natural soft edge  Gaussian $\beta$ ensemble variables;
see \cite[displayed equation below (9.139)]{Fo10}.
As in the case of the limit laws associated with $l_N^\Box$ and $l^\Box$, we seek corrections
to these limit laws. Our results are contained in Conjecture \ref{C2} in relation to
$\linc_N, \ldec_N$, and in Propositions \ref{P1a}, \ref{P3.3} and Corollaries \ref{C3.2}, \ref{C3.4} for $\linc, \ldec$.
Baik and Jenkins \cite[Theorem 1.2]{BJ13} contains a result that can be interpreted as corresponding
to  Proposition \ref{P1a}, but with a different functional form for $F_{1,1}^{{\rm H}}(t)$. In Section
\ref{S3.2} we discuss the latter in the context of our Painlev\'e characterisation of $F_{1,1}^{{\rm H}}(t)$.

\section{Large $z$ expansion of $   \Pr \left(\frac{l^{\Box} -2z}{z^{1/3}} \leq t \right)$}
\subsection{Proof of Proposition \ref{P1}} 
Following \cite{BF03} our strategy is to analyse the large $z$ form of the LHS of (\ref{1.1m}) by making
use of (\ref{e:lBoxE2Hard}). In the latter, $l$ is a simpler variable to work with than $z$, so to begin we
need to express $z$ in terms of $l$. We see by matching the LHS of the respective equations that
\begin{equation}\label{2.1a}
l = [2z + t z^{1/3}],
\end{equation}
where $[ \cdot ]$ denotes the integer part.
Introduce
\begin{equation}\label{2.1a+}
\tilde{l} = 2z + t z^{1/3}.
\end{equation}
It is convenient to consider (\ref{2.1a+}) with $t$ replaced by $X$ so that $z = z(\tilde{l};X)$. This function of
$\tilde{l}$ and $X$ is uniquely determined by (\ref{2.1a+}) and the requirement that $z \sim \tilde{l}/2$, independent
of $X$, for $\tilde{l}$ large. Furthermore we introduce notation for the square of $z(\tilde{l};X)$ and note the large
$\tilde{l}$ expansion of the latter
\begin{equation}\label{2.1b}
Q(\tilde{l};X) = (2 z(\tilde{l};X))^2; \qquad
2   z(\tilde{l};X)) = \tilde{l} - X (\tilde{l}/2)^{1/3} + {X^2 \over 6} (\tilde{l}/2)^{-1/3}   + {\rm O}(\tilde{l}^{-5/3}).
\end{equation}
We note furthermore that
\begin{equation}\label{2.1c}
Q({l};\tilde{t}) = 4 z^2, \quad Q({l};0) =  {l}^2,
\end{equation}
where in the first of these $z$ refers to (\ref{2.1a}) and $\tilde{t}$ is from \eqref{1.1u}, and we note too that $Q(\tilde{l};X)$ is a decreasing
function of $X$.

The quantity $E_2^{\rm hard}$ in (\ref{e:lBoxE2Hard}) permits a Fredholm determinant form analogous
to the first line in (\ref{1.1f}) \cite{Fo93a}
\begin{equation}\label{2.1d}
E_2^{\rm hard}(0;(0,4z^2);l) = \det \Big( \mathbb I - \mathbb K_{(0,4z^2)}^{{\rm hard},l} \Big),
\end{equation}
where $\mathbb{K}_{(0,4z^2)}^{{\rm hard},l}$ is the integral operator on $(0,4z^2)$ with kernel 
\begin{equation}
\label{e:KhardxyFull} K_{}^{\rmhard ,a} (x,y) = \frac{J_a (x^{1/2}) y^{1/2} J_a' (y^{1/2}) - J_a' (x^{1/2}) x^{1/2} J_a (y^{1/2})}{2(x-y)}
\end{equation}
and $J_a (x)$ is the Bessel function of the first kind. It is a standard result in the theory
of Fredholm integral equations \cite{WW65} that the determinant in (\ref{2.1d}) can be
expanded as a sum over $k$-dimensional integrals, with the integrand a $k \times k$
determinant with entries (\ref{e:KhardxyFull})
\begin{equation} \label{e:E2HardPf1} 
E_{2}^{\rmhard} \Big(0; (0, 4 z^2 ); l \Big)
 =1 + \sum_{n=1}^{\infty} \frac{(-1)^n}{n!} \int_0^{4z^2} dx_1 \cdots \int_0^{4 z^2} dx_n \det\left[ K^{\rmhard, l} (x_j, x_k) \right]_{j,k=1}^n.
\end{equation}

Now require that $l$ and $z$ are related by (\ref{2.1a}).
We next change variables in each integrand of the series in (\ref{e:E2HardPf1}), $x_l = Q(l;X)$. Taking into consideration
(\ref{2.1c}) the integral in the $n$-th term reads 
 \begin{equation}\label{2.1e}
(-1)^n \int_{\tilde{t}}^{l^2} d X_1 \,  Q'(l; X_1)  \cdots  \int_{\tilde{t}}^{l^2} d X_n \, Q'(l; X_n) \det \Big [ K^{\rmhard, l}(Q(l; X_j), Q(l; X_k)) \Big ]_{j,k=1}^n.
\end{equation}
The point here is that it follows from asymptotic expansions associated with the functional form 
(\ref{e:KhardxyFull}) that for large $l$ the integrand is of order unity in the neighbourhood of
the lower terminal of integration only.
These asymptotic expansions \cite[(9.3.23) \& (9.3.27)]{AS72} give that for large $\nu$
 \begin{align}\label{2.1f}
 J_{\nu}(\nu + u \nu^{1/3}) & \sim {2^{1/3} \over \nu^{1/3}} {\rm Ai}(-2^{1/3} u) \sum_{k=0}^\infty
 {P_k(u) \over \nu^{2k/3}} +    {2^{1/3} \over \nu } {\rm Ai}'(-2^{1/3} u) \sum_{k=0}^\infty
 {Q_k(u) \over \nu^{2k/3}}  \nonumber \\
  J_{\nu}' (\nu + u \nu^{1/3}) & \sim - {2^{2/3} \over \nu^{2/3}} {\rm Ai}' (-2^{1/3} u) \sum_{k=0}^\infty
 {R_k(u) \over \nu^{2k/3}} +    {2^{1/3} \over \nu^{4/3} } {\rm Ai}(-2^{1/3} u) \sum_{k=0}^\infty
 {S_k(u) \over \nu^{2k/3}},
  \end{align} 
 for certain polynomials $P_k(u), Q_k(u), R_k(u), S_k(u)$ of increasing degree. Moreover these expansions
 are uniform for $u \in (-\infty, u_0]$ for any fixed $u_0$. Specifically, upon inserting the explicit values
 of these polynomials for low order, and slightly changing the notation,
 \begin{align}
 J_l \Big( l- x(l/2)^{1/3} \Big) &\mathop{\sim}\limits_{l \to \infty} \frac{2^{1/3}}{l^{1/3}} \rmAi(x) + \frac{1}{10 l}\left( 2x \rmAi(x) + 3x^2 \rmAi' (x)   \right)
+ {\rm O}\Big ( {1 \over l^{5/3}} \Big ) {\rm O}(e^{-x}) \nonumber
\\
\label{e:JdAsympt} J_l' \Big( l- x(l/2)^{1/3} \Big) &\mathop{\sim}\limits_{l \to \infty} -\frac{2^{2/3}}{l^{2/3}} \rmAi' (x) - \frac{2^{1/3}}{10l^{4/3}} \Big( 8x \rmAi'(x) + \left(3x^3 +2 \right) \rmAi(x) \Big)
+ {\rm O}\Big ( {1 \over l^{2}} \Big ) {\rm O}(e^{-x}),
\end{align}
uniformly valid for $x \in [x_0, \infty)$. Recalling the form of the numerator in (\ref{e:KhardxyFull}), we see in particular that
\begin{align}\label{2.1g}
&J_l  ( x^{1/2} ) y^{1/2} J_l'  ( y^{1/2} ) \Big |_{x^{1/2} \mapsto l - x(l/2)^{1/3} \atop y^{1/2} \mapsto l - y(l/2)^{1/3}} 
 \mathop{\sim}\limits_{l \to \infty} - 2\rmAi(x)\rmAi' (y)   + \frac{2}{5 l^{2/3}} \Bigg[ \frac{(y -x)}{2^{1/3}} \rmAi(x)\rmAi'(y)  \nonumber \\
& - \left( 2^{-1/3} +\frac{3y^3}{2^{4/3}} \right) \rmAi(x) \rmAi(y) - \frac{3x^2}{2^{4/3}} \rmAi'(x) \rmAi'(y) \Bigg]
+ {\rm O}\Big ( {1 \over l^{4/3}} \Big ) {\rm O}(e^{-x})  {\rm O}(e^{-y}).
\end{align}

For applicability to (\ref{2.1e}), taking into consideration the second equation in (\ref{2.1b}) and (\ref{e:KhardxyFull}), we see that we require in
(\ref{e:JdAsympt}) that
 \begin{equation}\label{2.1h}
 x = x(l) = X \Big ( 1 - (X/6) (l/2)^{-2/3} + {\rm O}(l^{-2})  \Big ).
 \end{equation}
To account for this in (\ref{2.1g}) we must use the Taylor expansions with bounds on error terms valid for $x \in [x_0,\infty)$
  \begin{align}\label{2.1i}
 \rmAi \left( x+ \frac{a}{2^{1/3} l^{2/3}} \right) & \mathop{\sim}\limits_{l \to \infty} \rmAi (x)+ \frac{a}{2^{1/3} l^{2/3}}\rmAi' (x) + {\rm O}(l^{-4/3})  {\rm O}(e^{-x}), \nonumber \\
\rmAi' \left( x+ \frac{a}{2^{1/3} l^{2/3}} \right)  & \mathop{\sim}\limits_{l \to \infty} \rmAi' (x)+ \frac{ax}{2^{1/3} l^{2/3}}\rmAi (x)  + {\rm O}(l^{-4/3})  {\rm O}(e^{-x}),
\end{align}
where we made use of the differential equation satisfied by the Airy function $\rmAi''(x)= x \rmAi(x)$ in deriving the second expression.
We can now use (\ref{2.1g}) in (\ref{e:KhardxyFull}) to conclude that for large $l$
  \begin{multline}\label{2.1j} 
-(Q' (l; X_j)  Q'(l; X_k) )^{1/2}  K^{\rmhard, l}(Q(l; X_j), Q (l; X_k))  \\ \mathop{\sim}\limits_{l \to \infty} 
K^{\rmsoft} (X_j,X_k) +L(X_j,X_k) l^{-2/3} +  {\rm O}( l^{-4/3}) {\rm O}(e^{-X_j})  {\rm O}(e^{-X_k}).
 \end{multline}
 Substituting in (\ref{2.1e}) with the upper terminals replaced by $\infty$ (this is permissible by the error bounds)
 gives that for large $l$, and $z$ related to $l$ by (\ref{2.1a}),
  \begin{multline}\label{2.1k}  
 E_{2}^{\rmhard} \Big(0; (0, 4 z^2 ); l \Big) = 1 + \\
 \sum_{n=1}^{\infty} \frac{(-1)^n}{n!} \int_{\tilde{t}}^{\infty} dX_1 \cdots \int_{\tilde{t}}^{\infty} dX_n \det\left[ K^{\rmsoft} (X_j, X_k) +   l^{-2/3} L(X_j,X_k) \right]_{j,k=1}^n+
  {\rm O}( l^{-4/3} ). 
\end{multline} 
Recalling now (\ref{e:lBoxE2Hard}),  then rewriting the RHS of (\ref{2.1k}) as in the reverse of going from (\ref{2.1d}) to (\ref{e:E2HardPf1}), this tells us that
for large $l$
 \begin{equation}\label{2.1l}
  \Pr \left(\frac{l^{\Box} -2z}{z^{1/3}} \leq t \right) = 
   \det \Bigg( \mathbb I - \Big( \mathbb{K}^{\rmsoft}_{(\tilde{t}, \infty)} + l^{-2/3} \mathbb{L}_{(\tilde{t}, \infty)} \Big) \Bigg)  + {\rm O}( l^{-4/3} ).
\end{equation}
The stated result (\ref{1.1m}) now follows from \cite[Lemma 1]{BFM17}, and in the term proportional to $z^{-2/3}$ replacing
$\tilde{t}$ by $t$, which is valid since they are equal to leading order in $z$.

 \subsection{A differential equation characterisation of  $F_{2,1}^{\rm H}(t)$}\label{S2.2}
 As mentioned below (\ref{e:Corr2t}), the quantity $ F_{2,1}^{\rm H}(t) $ in (\ref{1.1m}),
 defined in terms of  Fredholm integral operators in (\ref{e:Corr2t}), also permits a characterisation
 as the solution of a particular second order linear differential equation, with coefficients given
 in terms of a particular ($\sigma$ form) Painlev\'e II transcendent. In preparation, we first recall
 that an alternative to the second expression in (\ref{1.1f}) is the evaluation \cite{TW94a} (see also
 \cite[\S 8.3.2]{Fo10})
 \begin{equation}\label{2.2a} 
 E_{2}^{\rmsoft} \Big(0; (s,\infty) \Big) = \exp \left( -\int_s^{\infty} u_0(r) dr\right),
\end{equation}
where $u_0(r)$ satisfies the particular $\sigma$-PII equation  and boundary condition
\begin{align}
\label{e:LeadDE} (u'')^2 + 4 u' \Big( (u')^2 -r u' +u \Big) =0, \quad  u_0 (r) \mathop{\sim}\limits_{r \to \infty} {\rm Ai}'(r)^2 - r  {\rm Ai}(r)^2.
\end{align}

\begin{prop}\label{p:Corr2DE}
Consider the quantity $ F_{2,1}^{\rm H}(t) $ in the expansion (\ref{1.1m}). As an alternative to
 (\ref{e:Corr2t}) we have
  \begin{equation}\label{2.2c} 
  F_{2,1}^{\rm H}(t) = -\exp \left(- \int_{t}^{\infty} u_0 (r) dr \right) \left( \int_{t}^{\infty} u_1 (r) dr \right).
\end{equation}
Here $u_0$ is specified by (\ref{2.2a}). The function $u_1$ is specified as the solution of
the inhomogeneous second order linear differential equation
\begin{align}
\label{e:SubleadDE} A_2 (r) u_1'' + B_2(r) u_1' + C_2 (r) u_1 = D_2 (r)
\end{align}
with the coefficients 
\begin{align}
A_2 (r) & = u_0'' (r), \quad 
B_2 (r)  = 2u_0 (r) - 4r u_0' (r) + 6 (u_0' (r))^2 , \quad
C_2 (r) = 2 u_0' (r),      \nonumber \\
D_2 (r) &= -\frac{1}{3 (2^{1/3})} \Bigg[ u_0' (r) \Big( r^2 u_0' (r) + 6u_0 (r) u_0' (r) -2r u_0 (r) + 3u_0'' (r) \Big) -2(u_0 (r))^2 \Bigg],
\end{align}
and with boundary condition
\begin{align}
\label{e:u1tinfty} u_1(r) \mathop{\sim}\limits_{r \to \infty} \frac{1}{(2^{1/3}) 30} \Bigg[ 12 \rmAi(r) \rmAi'(r) + 3 r^2 (\rmAi(r))^2 - 2r (\rmAi'(r))^2 \Bigg].
\end{align}

\end{prop}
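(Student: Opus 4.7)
The strategy is to interpret $u_1$ as the first-order perturbation of the $\sigma$-PII function $u_0$ when $\mathbb{K}^{\rmsoft}$ is replaced by $\mathbb{K}^{\rmsoft}+\epsilon\mathbb{L}$, with $\epsilon=(2z)^{-2/3}$ in mind. Introduce
\[
U_\epsilon(t):=-\frac{d}{dt}\log\det\bigl(\mathbb{I}-\mathbb{K}^{\rmsoft}_{(t,\infty)}-\epsilon\mathbb{L}_{(t,\infty)}\bigr)=u_0(t)+\epsilon\,u_1(t)+O(\epsilon^2).
\]
Integrating this from $t$ to $\infty$ and exponentiating gives
\[
\det\bigl(\mathbb{I}-\mathbb{K}^{\rmsoft}-\epsilon\mathbb{L}\bigr)=\exp\Bigl(-\!\int_t^\infty u_0\,dr\Bigr)\Bigl(1-\epsilon\!\int_t^\infty u_1\,dr+O(\epsilon^2)\Bigr).
\]
Matching with the direct first-order expansion $\det(\mathbb{I}-\mathbb{K}^{\rmsoft})\bigl(1-\epsilon\Tr((\mathbb{I}-\mathbb{K}^{\rmsoft})^{-1}\mathbb{L})+O(\epsilon^2)\bigr)$ then reproduces the formula (\ref{2.2c}) from Proposition \ref{P1} and identifies $u_1(t)=-\frac{d}{dt}\Tr\bigl((\mathbb{I}-\mathbb{K}^{\rmsoft}_{(t,\infty)})^{-1}\mathbb{L}_{(t,\infty)}\bigr)$. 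What remains is to derive the ODE and boundary condition for this $u_1$.

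For the ODE, the plan is to extend the Tracy-Widom isomonodromic derivation of the $\sigma$-PII equation \eqref{e:LeadDE} to the perturbed operator. A key observation is that cancelling the $(x-y)$ factor in (\ref{d:Lxy}) via $(x^3-y^3)/(x-y)=x^2+xy+y^2$ and $(x^2-y^2)/(x-y)=x+y$ exhibits $\mathbb{L}$ as a finite-rank operator whose range is spanned by $\rmAi$, $x\rmAi$, $x^2\rmAi$, $\rmAi'$, $x\rmAi'$. Consequently $\mathbb{K}^{\rmsoft}+\epsilon\mathbb{L}$ remains in the Airy-type integrable class, and the Tracy-Widom / Jimbo-Miwa-Ueno machinery applies, producing a nonlinear ODE for $U_\epsilon(t)$ that collapses to \eqref{e:LeadDE} when $\epsilon=0$. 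Linearising this ODE in $\epsilon$ at $u=u_0$ gives \eqref{e:SubleadDE}; the homogeneous coefficients are the partial derivatives of $(u'')^2+4u'((u')^2-ru'+u)$ with respect to $u''$, $u'$, $u$ (with an overall factor of $2$ dividing through), yielding $A_2=u_0''$, $B_2=2u_0-4ru_0'+6(u_0')^2$, $C_2=2u_0'$ directly.

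I expect the principal obstacle to be the explicit computation of the forcing $D_2(r)$. It requires propagating each rank-one summand of $\mathbb{L}$, weighted by its polynomial prefactor in $x,y$, through the Tracy-Widom calculation, and then re-expressing the first-order contribution in $\epsilon$ in closed form in $u_0,u_0',u_0''$. The route is to exploit the Tracy-Widom auxiliary functions $q_0(t)=((\mathbb{I}-\mathbb{K}^{\rmsoft}_{(t,\infty)})^{-1}\rmAi)(t)$ and $q_0'(t)$, the Painlev\'e-II relation $q_0''=rq_0+2q_0^3$, and the standard identities $u_0'=-q_0^2$, $u_0''=-2q_0 q_0'$, to eliminate $q_0$ in favour of derivatives of $u_0$; the resulting polynomial is to be matched against the expression for $D_2$ stated in the proposition.

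Finally, the boundary condition \eqref{e:u1tinfty} is obtained from large-$r$ asymptotics. Since the operator norm of $\mathbb{K}^{\rmsoft}_{(r,\infty)}$ is of order $\rmAi(r)^2\sim e^{-\tfrac{4}{3}r^{3/2}}$, the resolvent tends to the identity with exponentially small error and
\[
\int_r^\infty u_1(s)\,ds=\Tr\bigl((\mathbb{I}-\mathbb{K}^{\rmsoft}_{(r,\infty)})^{-1}\mathbb{L}_{(r,\infty)}\bigr)=\int_r^\infty L(x,x)\,dx+O\bigl(e^{-\tfrac{8}{3}r^{3/2}}\bigr).
\]
Differentiating in $r$ gives $u_1(r)\sim L(r,r)$, and evaluating the diagonal of (\ref{d:Lxy}) by taking $y\to x$ with $(x^k-y^k)/(x-y)\to kx^{k-1}$ collapses the three constituent pieces to precisely the combination on the right-hand side of \eqref{e:u1tinfty}, with the overall sign tracked through the chain of definitions.
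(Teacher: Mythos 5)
Your setup (matching the two expansions of $\det(\mathbb I-\mathbb K^{\rmsoft}-\epsilon\mathbb L)$ to get \eqref{2.2c} and to identify $u_1(t)=-\frac{d}{dt}\Tr((\mathbb I-\mathbb K^{\rmsoft}_{(t,\infty)})^{-1}\mathbb L_{(t,\infty)})$) is sound, your observation that the homogeneous coefficients $A_2,B_2,C_2$ are exactly the linearisation of the $\sigma$-PII equation \eqref{e:LeadDE} about $u_0$ is correct, and your boundary-condition argument ($u_1(r)\sim L(r,r)$ via the resolvent expansion, then evaluating the diagonal of \eqref{d:Lxy}) is essentially the paper's own. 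But there is a genuine gap at the heart of the proposition: the inhomogeneous term $D_2(r)$ is never derived. You explicitly defer it ("I expect the principal obstacle to be the explicit computation of the forcing $D_2(r)$") and only sketch a route — extending the Tracy--Widom isomonodromic analysis to the finite-rank perturbed kernel $\mathbb K^{\rmsoft}+\epsilon\mathbb L$ and extracting the order-$\epsilon$ forcing. That route is not carried out, and its feasibility is not self-evident: the Tracy--Widom derivation of \eqref{e:LeadDE} rests on specific differential and commutation identities satisfied by the Airy kernel, which the polynomial-weighted rank-five perturbation does not automatically satisfy, so there is no a priori guarantee that $-\frac{d}{dt}\log\det(\mathbb I-\mathbb K^{\rmsoft}-\epsilon\mathbb L)$ obeys a closed equation whose linearisation yields precisely \eqref{e:SubleadDE}. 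Since $D_2$ is what actually pins down $u_1$ (the homogeneous equation alone determines nothing), the proposal as written establishes the equivalence \eqref{2.2c} and the boundary condition, but not the differential equation, which is the substantive content of the proposition.

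The paper's derivation avoids this difficulty entirely by working from the hard edge rather than perturbing at the soft edge: it starts from the \emph{exact} finite-$l$ $\sigma$-PIII$'$ characterisation \eqref{e:E2hardDEa}--\eqref{e:vDE} of $E_2^{\rmhard}(0;(0,s);l)$, changes variables $r=Q(l;s)$ as in \eqref{2.2d}, posits the expansion \eqref{2.2e} forced by consistency with \eqref{1.1m}, substitutes back into \eqref{e:vDE}, and reads off the orders $l^0$ and $l^{-2/3}$ (by computer algebra); the leading order reproduces \eqref{e:LeadDE} and the next order produces \eqref{e:SubleadDE} with $D_2$ included automatically, because the nonlinear equation being expanded is already known exactly. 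If you want to complete your soft-edge route you would either have to build the perturbed Tracy--Widom system in full (deriving the extra identities generated by each rank-one summand of $\mathbb L$ and closing the system), or simply verify by direct computation that your candidate $u_1(t)=-\frac{d}{dt}\Tr((\mathbb I-\mathbb K^{\rmsoft}_{(t,\infty)})^{-1}\mathbb L_{(t,\infty)})$ satisfies \eqref{e:SubleadDE} — but the latter is circular unless $D_2$ has been obtained independently, which is precisely what the hard-edge expansion supplies. One further caution: the diagonal of \eqref{d:Lxy} as printed is $L(r,r)=-\tfrac{1}{30\cdot 2^{1/3}}\bigl[12\rmAi(r)\rmAi'(r)+3r^2\rmAi(r)^2-2r\rmAi'(r)^2\bigr]$, i.e.\ the negative of the right-hand side of \eqref{e:u1tinfty}, so "tracking the overall sign through the chain of definitions" is not a step you can wave at — you need to fix the sign conventions (in particular the sign in \eqref{2.1j} and the direction of the change of variables $Q'(l;X)<0$) explicitly before the boundary condition is pinned down.
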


\proof 
We require knowledge \cite{TW94b} (see also
 \cite[\S 8.3.3]{Fo10}) of an alternative to (\ref{2.1d}), telling us that
\begin{align}
\label{e:E2hardDEa} E_2^{\rmhard} (0; (0,s) ;a) = \exp \left( \int_0^{s} \frac{v (r; a)}{r} dr \right),
\end{align}
where $v$ satisfies the particular $\sigma$-PIII$'$ equation and boundary condition
\begin{align}
\label{e:vDE} (r v'')^2 - (a v')^2-v' (4 v' +1) (v - r v' )=0, \quad  v(r;a) \mathop{\sim}\limits_{r\to 0^+} -\frac{r^{1+a}} {2^{2(1+a)} \Gamma (1+a) \Gamma (2+a)}.
\end{align}

With $z$ related to $t$ and $l$ by (\ref{2.1a}), and with $Q(l;X)$ given by (\ref{2.1b}), we can change variables
in (\ref{e:E2hardDEa}) to obtain
\begin{equation}\label{2.2d} 
E_2^{\rmhard} (0; (0, 4z^2) ;l)  = \exp \bigg ( - \int_{\tilde{t}}^{l^2} {v (Q(l;s)) \over Q(l;s)} Q'(l;s) \, ds  \bigg ),
\end{equation}
where $\tilde{t}$ is defined in \eqref{1.1u}. To be consistent with  (\ref{1.1m}) and (\ref{2.2a}) we must have that for large $l$
\begin{equation}\label{2.2e} 
 {v (Q(l;s)) \over Q(l;s)} Q'(l;s) = u_0(s) + {u_1(s) \over l^{2/3}} + \cdots
\end{equation} 
Rearranging this gives a functional form for $v (Q(l;s))$, which is to be substituted in (\ref{e:vDE}) after first
changing variables $r = Q(l;s)$. These steps are readily carried out using computer algebra.  Equating terms
at leading powers of $l$ gives the equation (\ref{e:LeadDE}) for $u_0$ at order $l^0$, and the differential equation (\ref{e:SubleadDE}) at order
$l^{-2/3}$.

In relation to the boundary condition, we reconsider the above working, and the working which gave (\ref{2.1l}),
in the case of $l$ continuous. The only difference is that the discrete variable $\tilde{t}$ should be replaced by the
continuous variable $t$. Taking the logarithmic derivative of the RHS of (\ref{2.1l}) in this setting gives
 \begin{multline}\label{2.2f} 
\frac{d}{dt} \log \det \Bigg( \mathbb I  - \Big( \mathbb{K}^{\rm soft}_{(t, \infty)} + l^{-2/3} \mathbb{L}_{(t, \infty)} \Big) \Bigg) = \frac{d}{dt} \Tr \log \Bigg( \mathbb I - \Big( \mathbb{K}^{\rm soft}_{(t, \infty)} + l^{-2/3} \mathbb{L}_{(t, \infty)} \Big) \Bigg)\\
\mathop{\sim}\limits_{t, l \to \infty }  - \frac{d}{dt} \int_t^{\infty} \Big ( K^{\rmsoft} (x,x) + L(x,x) l^{-2/3}\Big )  dx
= K^{\rmsoft} (t,t) + L(t,t) l^{-2/3}.
\end{multline}
On the other hand, it follows by substituting (\ref{2.2e}) in (\ref{2.2d})  that  this same quantity is also equal to
\begin{align}\label{ep.2}
u_0 (t) + \frac{u_1 (t) } {l^{2/3}}.
\end{align}
Comparing (\ref{2.2f}) and (\ref{ep.2}) at leading order gives $u_0 (t) \mathop{\sim}\limits_{t \to \infty} K^{\rmsoft} (t,t)$,
and thus the boundary condition for $u_0$ in (\ref{e:LeadDE}).
At ${\rm O}(l^{-2/3})$ this comparison
gives $u_1 (t) \mathop{\sim}\limits_{t \to \infty} L (t,t)$. Taking the limit $x\to y=r$ 
in \eqref{d:Lxy} we then obtain \eqref{e:u1tinfty} for $u_1$.

\hfill $\Box$

As  noted in the Introduction, earlier Baik and Jenkins \cite{BJ13} obtained an evaluation of $F_{2,1}^{\rm H}(t)$ relating to
Painlev\'e transcendents. This is simpler than our (\ref{2.2c}) as it involves only $u_0(t)$, 
\begin{equation}\label{B.1}
F_{2,1}^{\rm H}(t) = - {2^{2/3} \over 10} \bigg ( {d^2 \over d t^2} + {t^2 \over 6} {d \over dt} \bigg ) \exp \Big ( - \int_t^\infty u_0(r) \, dr \Big ).
\end{equation} 
Comparing with (\ref{2.2c}), it follows that we must have
\begin{equation}\label{B.2}
u_1(r) = - {2^{2/3} \over 10} \bigg ( {d^2 u_0(r) \over d r^2} + \Big ( 2 u_0(r) + {r^2 \over 6} \Big )  {d u_0(r) \over d r} + {r \over 3} u_0(r) \bigg ).
\end{equation} 
A similar circumstance arose in the study \cite[discussion below (3.42)]{FPTW19}, which suggests how (\ref{B.2}) can be
verified from the characterisation of $u_1(r)$ in Proposition \ref{p:Corr2DE}.

First, we verify from the boundary condition of $u_0(r)$ in (\ref{e:LeadDE}), and that of $u_1(r)$ in 
(\ref{e:u1tinfty}) that they are compatible with (\ref{B.2}). It remains then to verify that (\ref{B.2}) satisfies
the differential equation (\ref{e:SubleadDE}). This can be done be direct substitution, then substituting for
the third and fourth derivatives of $u_0(r)$. In relation to the latter, we note
 that differentiating the differential equation (\ref{e:LeadDE}) for $u_0$, and simplifying, gives us
\begin{equation}\label{B.3}
u_0'''(r) = -2u_0(r) + 4 r u_0'(r) - 6 (u_0'(r))^2.
\end{equation}
Again differentiating this equation, and making further use of  (\ref{e:LeadDE}), we can express the fourth derivative of $u_0(r)$ in terms of $u_0(r)$ and $u_0'(r)$. Once the substitutions have been performed, the resulting equation only
involves the second derivative of $u_0(r)$ in the form of $(u_0'' (r))^2$, which we eliminate in
favour of $u_0(r)$ and $u_0'(r)$ using (\ref{e:LeadDE}). These steps, performed using computer
algebra, verify that (\ref{B.2}) solves (\ref{e:SubleadDE}), as required.

 \subsection{Comparison with numerical calculations}\label{sec:NumsBeta2}
 
We numerically calculate the correction term $F_{2,1}^{\rm H}(t)$ in \eqref{1.1m} using both the integral operator characterisation of Proposition \ref{P1} and the expression \eqref{2.2c} in terms of the solution to a differential equation. We compare these to calculations of the difference
\begin{align}
\label{d:F21ell}  \delta_2^{\rm H} (t) := l^{2/3} \Bigg( E_{2}^{\rmhard} \Big( 0; (0,Q(l;t)); l \Big) - E_{2}^{\rmsoft} \Big( 0; (t,\infty) \Big) \Bigg)
\end{align}
for $l =20$, where, for the purposes of comparison, we use the continuous variable $t$. 

To numerically evaluate the integral operators we use the Fredholm determinant Matlab toolbox by Folkmar Bornemann \cite{Bornemann2010}, and a Mathematica implementation by Allan Trinh,
coauthor on some related works along the theme of finite size corrections to limit formulas in random matrix theory
\cite{FT18,FT19,FT19a,FLT20}.  For the DE solutions $u_0 (r), u_1 (r)$ needed for \eqref{2.2c} we use a sequence of Taylor series expanded about various $r$ points, beginning on the right (near $+\infty$, to match the DE boundary conditions) and proceeding to the left. For $u_0(r)$ we use a sequence of 600 series of degree 11, while for $u_1(r)$ we use a sequence of 500 series of degree 6. To calculate the finite $l=20$ correction \eqref{d:F21ell} we also need a sequence of Taylor series solutions for $v(r; 20)$ from \eqref{e:vDE} --- we use a sequence of 15,446 series of degree $10$. The results are plotted in the left panel of Figure \ref{f:Corrapproxell20_2}. In the right panel we plot a numerical estimate of the next order correction in \eqref{1.1m} by calculating
\begin{align}
\label{d:delta2} E_{2}^{\rmhard} \Big( 0; (0,Q(l;t)); l \Big) - E_{2}^{\rmsoft} \Big( 0; (t,\infty) \Big) - \frac{1}{l^{2/3}} F_{2,1}^{\rm H}(t).
\end{align}

\begin{figure}
\centering
     \begin{minipage}{0.45\textwidth}
	\begin{tikzpicture}[scale=1, every node/.style={transform shape}]
         \node[inner sep=0pt] at (0,0) {\includegraphics[height=4cm, align=t]{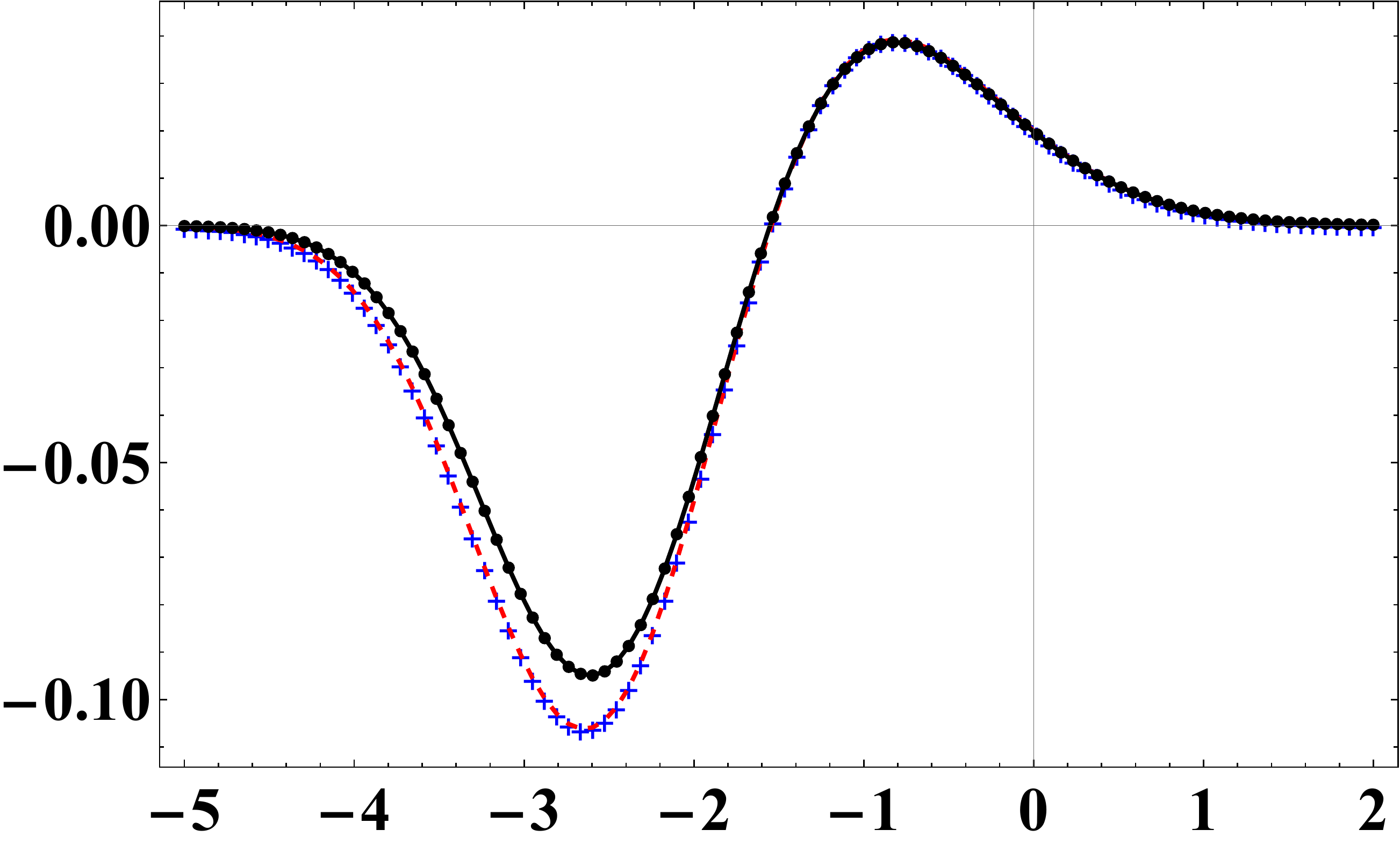}};
         \node at (3.6,-1.7) {\textbf{\textit{t}}};
         \node at (0.3,2.4) {$F_{2,1}^{\rm H}(t)$};
         \end{tikzpicture}
	\end{minipage} \quad
	\begin{minipage}{0.45\textwidth}
	\begin{tikzpicture}[scale=1, every node/.style={transform shape}]
         \node[inner sep=0pt] at (0,0) {\includegraphics[height=4cm, align=t]{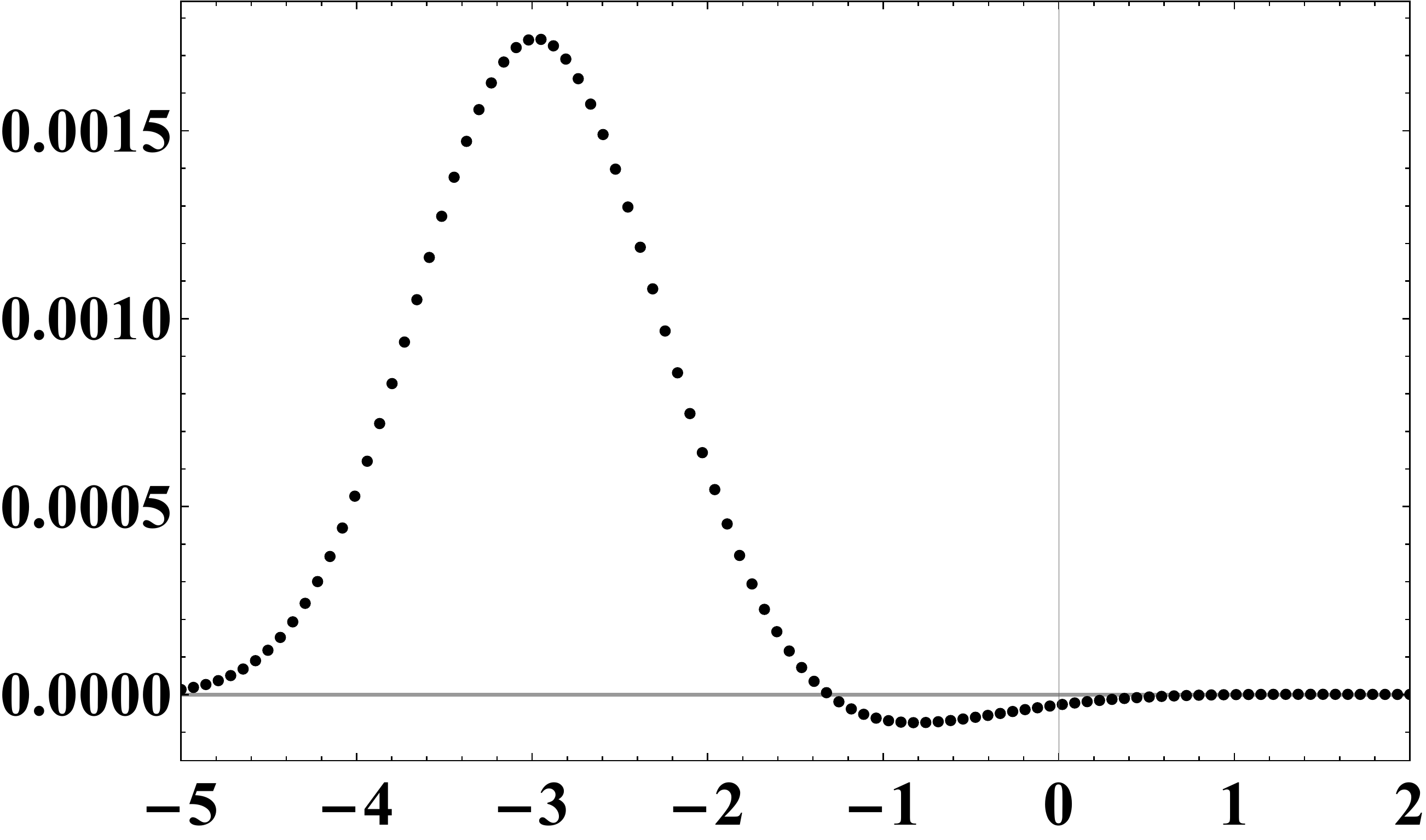}};
         \node at (3.6,-1.7) {\textbf{\textit{t}}};
         \node at (0.3,2.4) {Expression \eqref{d:delta2}};
         \end{tikzpicture}
	\end{minipage}
\caption{In the left panel we have the correction term $F_{2,1}^{\rm H}(t)$ in \eqref{1.1m} calculated using \eqref{2.2c} [blue crosses], and using \eqref{e:Corr2t} [red dashed line]. On the same axes we also plot $\delta_2^{\rm H} (t)$, the difference \eqref{d:F21ell} for $l=20$, using the Fredholm determinant expressions \eqref{2.1d} and the first equality in \eqref{1.1f} [black line] and also using the expressions in terms of solutions to differential equations \eqref{e:E2hardDEa} and the second equality in \eqref{1.1f} [black dots]. In the right panel we plot the difference \eqref{d:delta2}, which is a numerical approximation to the unknown $O(z^{-4/3})$ term in \eqref{1.1m}.}
\label{f:Corrapproxell20_2}
\end{figure}

\section{Large $z$ expansion of $\Pr \left(\frac{{\linc} -2z}{z^{1/3}} \leq t \right)$ and  $   \Pr \left(\frac{{\ldec} -2z}{z^{1/3}} \leq t \right)$ }
 \subsection{Fredholm determinant form}
 The analogues of (\ref{e:lBoxE2HardAv}) and (\ref{e:lBoxE2Hard}) are the formulas
 \begin{align}
\label{e:lBoxE4Hard} \Pr(\linc \leq l) &= e^{-z^2/2}\left\langle e^{z \Tr \bU} \right\rangle_{\bU \in O(l)}= \tilde{E}_4^{\rmhard} \Big( 0; (0, 4z^2); l \Big)\\
\label{e:lBoxE1Hard} \Pr(\ldec \leq 2l) &= e^{-z^2/2}\left\langle e^{z \sum_{j=1}^l 2 \cos \theta_j} \right\rangle_{\mathrm{Sp} (2l)}= E_1^{\rmhard} \Big( 0; (0, 4z^2); l \Big).
\end{align}
The first equality in both is due to Rains \cite{Ra98}, while the second were found in \cite{FW04}. We note too that the validity of the second formula in (\ref{e:lBoxE4Hard}) 
as derived in \cite{FW04} is restricted to
$l$ even. However, by the different strategy of expressing both the average over $\bU \in O(l)$ and $\tilde{E}_4^{\rmhard}$ in terms of a generalised
hypergeometric function of $l$ variables based on zonal polynomials --- see \cite{Ja64} in relation to the former and
\cite{Fo93c} in relation to the latter --- the validity can be established independent of the parity of $l$.
The use of a tilde in the notation $ \tilde{E}_4^{\rmhard}$ indicates the use of
 a rescaling of the natural hard edge  Laguerre  $\beta$ ensemble variables;
see \cite[second displayed equation \S 9.8]{Fo10} or (\ref{9.1b}) below.
The analogues of (\ref{2.1d}) are the formulas  \cite{DF06,Fo06}
\begin{align}
\label{e:FredDetHard1} E_1^{\rmhard} \left( 0; (0,s);\frac{a-1}{2} \right)&= \det \left( \mathbb I - \mathbb{V}^{{\rm hard}, a}_{s, (0,1)} \right)\\
\label{e:FredDetHard4} \tilde{E}_4^{\rmhard} \Big( 0; (0,s); a+1 \Big)&= \frac{1}{2} \left[ \det \left( \mathbb I - \mathbb{V}^{{\rm hard}, a}_{s, (0,1)} \right) + \det \left( \mathbb I + \mathbb{V}^{{\rm hard},a}_{s, (0,1)} \right) \right],
\end{align}
where $\mathbb{V}^{{\rm hard}, a}_{s, (0,1)}$ is the integral operator on $(0,1)$ with kernel
\begin{align}
V_{s}^{{\rm hard},a} (x,y):= \frac{\sqrt{s}}{2} J_a (\sqrt{xys}).
\end{align}

In relation to the probabilities in (\ref{e:lBoxE4Hard}) and (\ref{e:lBoxE1Hard}) we have the known limit theorems (\ref{1.1n}) and (\ref{1.1o})
in terms of certain soft edge gap probabilities. As for $E_2^{\rm soft}$ the latter admit evaluations in terms of Fredholm determinants,
and Painlev\'e transcendents. The Fredholm determinant forms read \cite{Sa05,Fo06}
\begin{align}
\label{e:E1softFD} E_{1}^{\rmsoft} \Big(0; (s,\infty) \Big)&= \det\left( \mathbb I - \mathbb{V}_{s, (0, \infty)}^{\rmsoft} \right) \\
\label{e:E4softFD} \tilde{E}_{4}^{\rmsoft} \Big(0; (s,\infty) \Big)&= \frac{1}{2} \left[ \det\left( \mathbb I - \mathbb{V}_{s, (0, \infty)}^{\rmsoft} \right) + \det\left( \mathbb I + \mathbb{V}_{s, (0, \infty)}^{\rmsoft} \right) \right],
\end{align}
where  $\mathbb{V}_{s, (0, \infty)}^{\rmsoft}$ is the integral operator on $(0,\infty)$ with kernel
\begin{align}
\label{d:Vsoft} V^{\rmsoft}_{s} (x,y) := {\rm Ai} (x+y+s).
\end{align}
The forms in terms of Painlev\'e transcendents, assuming (\ref{2.2a}) or the second expression in (\ref{1.1f}), are \cite{TW96}
\begin{align}
\label{e:E1softDE} E_1^{\rmsoft} \Big( 0; (s, \infty) \Big)&= E_2^{\rmsoft} \Big( 0; (s,\infty) \Big)^{1/2} \exp\left( - \frac{1}{2} \int_s^{\infty} q_0(r) dr \right) \\
\label{e:E4softDE} \tilde{E}_4^{\rmsoft} \Big( 0; (s, \infty) \Big)&= E_2^{\rmsoft} \Big( 0; (s,\infty) \Big)^{1/2} \cosh \left( \frac{1}{2} \int_s^{\infty}   q_0(r) dr \right),
\end{align}
where $q_0(r)$ satisfies the particular PII equation and boundary condition
as noted below (\ref{1.1g}).

We will first consider the Fredholm determinant forms and obtain the analogues of Proposition \ref{P1}.

\begin{prop}\label{P1a}
 For large $z$ we have
 \begin{equation}\label{2.1mX}   
   \Pr \left(\frac{{\ldec} +1 -2z}{z^{1/3}} \leq t \right) = F_{1,0}^{ {\rm H}}(\tilde{t}) + {1 \over (2z)^{2/3}} F_{1,1}^{{\rm H}}(t) + {\rm O}(z^{-4/3})
  \end{equation}
  with $\tilde{t}$ specified by (\ref{1.1u}), $ F_{1,0}^{{\rm H}}(\tilde{t})  = E_1^{\rm soft}(0,(\tilde{t},\infty))$ and 
  \begin{align}
 \label{e:Corr1t1}    F_{1,1}^{{\rm H}}(t) &= -\det \left( \mathbb I - \mathbb{V}_{t, (0, \infty)}^{\rmsoft} \right) \Tr \left( ( \mathbb I - \mathbb{V}_{t, (0, \infty)}^{\rmsoft})^{-1} \mathbb{M}_{t, (0, \infty)} \right),
\end{align}
where $ \mathbb{V}_{t, (0, \infty)}^{\rmsoft}$ is specified as in (\ref{e:E1softFD}) and $\mathbb{M}_{t, (0, \infty)}$ is the integral operator on $(0, \infty)$ with kernel
\begin{align}
\nonumber M_t (x,y)&= \frac{1}{(2^{1/3}) 10} \Bigg[ \Big( 2x + 2 y -8t \Big) \rmAi(t+ x + y) \\
\label{d:Mtkernel} & + \frac{1}{3} \Big( 24 x^2+ 24y^2 -12x t -12 x y -12 y t -t^2 \Big) \rmAi'(t+ x + y) \Bigg].
\end{align}
\end{prop}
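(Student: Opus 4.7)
The plan is to mirror the proof of Proposition \ref{P1}, replacing the unitary hard-edge kernel $K^{{\rm hard},l}$ on $(0,4z^{2})$ with the orthogonal/Bessel hard-edge kernel $V^{{\rm hard},a}_{s}$ on $(0,1)$ from (\ref{e:FredDetHard1}). Combining (\ref{e:lBoxE1Hard}) with (\ref{e:FredDetHard1}) gives
\[
\Pr(\ldec \le 2l) \;=\; \det\bigl(\mathbb{I} - \mathbb{V}^{{\rm hard},\,2l+1}_{4z^{2},(0,1)}\bigr).
\]
Setting $\tilde{l}_{1}:=2l+1$ to be the largest odd integer with $\tilde l_{1}\le 2z+tz^{1/3}$ converts the event on the LHS of (\ref{2.1mX}) into $\ldec\le 2l$ for this choice of $l$; the $+1$ in the shift of $\ldec$ and the parity constraint on $\tilde l_{1}$ are absorbed into the discrete variable $\tilde t$ from (\ref{1.1u}) in exactly the manner done at the end of the proof of Proposition \ref{P1}.

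Next, I would expand the Fredholm determinant as its Neumann series and perform the change of variables $\sqrt{x_{j}}=1-X_{j}/(2z^{2/3})$ in each integration variable. Under this substitution the region $(0,1)^{n}$ becomes $(0,2z^{2/3})^{n}$ (which may be extended to $(0,\infty)^{n}$ using Airy-tail bounds), and the Bessel argument becomes
\[
2z\sqrt{x_{j}x_{k}} \;=\; 2z - z^{1/3}(X_{j}+X_{k}) + \frac{X_{j}X_{k}}{2z^{1/3}},
\]
which is of the form $\tilde l_{1}+u_{jk}\tilde l_{1}^{1/3}$ required by the uniform asymptotic (\ref{2.1f}) for $J_{\nu}(\nu+u\nu^{1/3})$. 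The Jacobian of the substitution produces a factor $(2/c)\sqrt{(1-X_{j}/c)(1-X_{k}/c)}$, $c=2z^{2/3}$, to be attached symmetrically (one half to each row and column index).

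The principal calculation is then to Taylor-expand the transformed kernel in powers of $z^{-2/3}$. The leading term must reduce to $V^{\rm soft}_{t}(X,Y)={\rm Ai}(X+Y+t)$ from (\ref{d:Vsoft}); direct substitution of $u_{jk}=-(X_{j}+X_{k}+t)/2^{1/3}$ into the first line of (\ref{2.1f}) confirms this. At order $z^{-2/3}$ three competing sources contribute: (i)~the next-order polynomial correction in the expansion (\ref{2.1f}) (analogous to the $1/(10l)$ term in (\ref{e:JdAsympt})); (ii)~the shift of the effective Airy argument coming from the $X_{j}X_{k}/(2z^{1/3})$ piece of $2z\sqrt{x_{j}x_{k}}$ together with the expansion $(\tilde l_{1}/2)^{1/3}=z^{1/3}(1+t/(6z^{2/3})+\cdots)$; and (iii)~the Taylor expansion of the Jacobian prefactor $\sqrt{(1-X/c)(1-Y/c)}$. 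These pieces are combined using the Taylor formulas (\ref{2.1i}) and the Airy ODE ${\rm Ai}''(u)=u\,{\rm Ai}(u)$, just as in the passage from (\ref{2.1g}) to (\ref{2.1j}) in the proof of Proposition \ref{P1}, and should simplify to precisely $M_{t}(X,Y)$ as given in (\ref{d:Mtkernel}).

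Having established
\[
\Pr\!\Big(\tfrac{\ldec+1-2z}{z^{1/3}}\le t\Big) \;=\; \det\bigl(\mathbb{I}-(\mathbb{V}^{\rm soft}_{t,(0,\infty)}+(2z)^{-2/3}\mathbb{M}_{t,(0,\infty)})\bigr)\,+\,{\rm O}(z^{-4/3}),
\]
the stated form of $F_{1,1}^{\rm H}(t)$ in (\ref{e:Corr1t1}) follows immediately from \cite[Lemma 1]{BFM17}, using the identification of the leading term via (\ref{e:E1softFD}), exactly as in the final step of the proof of Proposition \ref{P1}; replacement of $\tilde t$ by $t$ in the coefficient of $z^{-2/3}$ is legitimate since they agree to leading order. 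The main obstacle is the bookkeeping in the third step: although each of the three $O(z^{-2/3})$ contributions is individually a routine Taylor expansion, they involve different combinations of ${\rm Ai}$ and ${\rm Ai}'$ and must be carefully summed and reduced via the Airy ODE to yield the specific polynomial coefficients $2x+2y-8t$ and $24x^{2}+24y^{2}-12xt-12xy-12yt-t^{2}$ appearing in (\ref{d:Mtkernel}).
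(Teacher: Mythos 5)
Your skeleton is the same as the paper's: combine \eqref{e:lBoxE1Hard} with \eqref{e:FredDetHard1} (your index $a=2l+1$ is the same bookkeeping as the paper's Bessel index $l$ with hard-edge parameter $(l-1)/2$), expand the Fredholm determinant, zoom in near the hard edge $x=1$, apply the uniform Bessel--Airy asymptotics \eqref{2.1f}/\eqref{e:JdAsympt}, and close with \cite[Lemma 1]{BFM17}, replacing $\tilde t$ by $t$ in the correction. The leading-order identification with $V^{\rmsoft}_t$ is correct.

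However, the step you describe as ``bookkeeping'' would not in fact produce the kernel \eqref{d:Mtkernel}, so the proof as written does not reach the proposition as stated. The subleading kernel at order $z^{-2/3}$ depends on the choice of scaling variables. With your substitution $\sqrt{x_j}=1-X_j/(2z^{2/3})$ and the symmetric splitting of the (nonconstant) Jacobian, summing your three contributions gives an $\rmAi(t+x+y)$-coefficient proportional to $-\tfrac{3}{20}(x+y)-\tfrac{1}{15}t$ and an $\rmAi'(t+x+y)$-coefficient proportional to $\tfrac{3}{20}(x^2+y^2)-\tfrac{1}{5}xy+\tfrac{2}{15}t(x+y)-\tfrac{1}{60}t^2$, which is \emph{not} $2^{-2/3}M_t$ (whose $\rmAi$-coefficient is $\tfrac{1}{10}(x+y)-\tfrac{2}{5}t$). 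The discrepancy arises because your variables differ from the paper's choice $x_j=1-X_j(2/l)^{2/3}$ (linear in $x_j$, constant Jacobian) at relative order $z^{-2/3}$, namely $X_{\rm paper}=X+z^{-2/3}g(X)+\cdots$ with $g(u)=\tfrac{tu}{3}-\tfrac{u^2}{4}$; accordingly your correction kernel equals $2^{-2/3}M_t$ plus the conjugation term $\tfrac12\big(g'(x)+g'(y)\big)\rmAi(t+x+y)+\big(g(x)+g(y)\big)\rmAi'(t+x+y)$. That extra piece has vanishing trace against $(\mathbb I-\mathbb{V}^{\rmsoft}_{t,(0,\infty)})^{-1}$ (the Fredholm determinant is invariant under the reparametrization), so the \emph{value} of $F_{1,1}^{\rm H}(t)$ you would obtain is correct; but your assertion that the pieces ``simplify to precisely $M_t$'' is false, and to prove \eqref{e:Corr1t1} with the explicit kernel \eqref{d:Mtkernel} you must either adopt the paper's substitution (which lands on \eqref{d:Mtkernel} directly) or supply the additional invariance argument identifying the two trace formulas. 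A secondary caution: your threshold, the largest odd integer $\tilde l_1\le 2z+tz^{1/3}$, is not always $[2z+tz^{1/3}]$, so declaring the parity ``absorbed into $\tilde t$ of \eqref{1.1u}'' can shift the argument of the leading term by $z^{-1/3}$, which is larger than the claimed ${\rm O}(z^{-4/3})$ error and needs an explicit word (the paper works throughout with the index $l=[2z+tz^{1/3}]$).
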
   

\begin{proof}
We start with the Fredholm determinant \eqref{e:FredDetHard1}
\begin{multline}
E_1^{\rmhard} \left( 0; (0,4z^2);\frac{l-1}{2} \right)= \det \left( \mathbb I - \mathbb{V}^{\rmhard, l}_{4 z^2, (0,1)} \right)\\
=1+ \sum_{n=0}^{\infty} \frac{(-1)^n}{n!} \int_0^{1} dx_1 \dots \int_0^{1} dx_n \det \left[   z J_l ( 2 z\sqrt{ x_j x_k}) \right]_{j,k=1}^n.
\end{multline}
In each variable in the integrand of the series we change variables $x_j=  1 - X_j (2/l)^{2/3}$, which transforms the
corresponding multi-dimensional integral to read
 \begin{multline}\label{2.1n}   
 (-1)^n    \int_0^l d X_1 \,  \cdots
 \int_0^l  d X_n \,   
  \det \left[  (2/l)^{2/3} z J_l \Big ( 2 z \sqrt{ (1 - X_j (2/l)^{2/3})   (1 - X_k (2/l)^{2/3}) } \Big ) \right]_{j,k=1}^n.
  \end{multline}
  Introducing now $z = z(l;\tilde{t})$ as expanded for large $l$ according to (\ref{2.1b}) with $X = \tilde{t}$ shows that the
  argument of the Bessel function in (\ref{2.1n}) has the large $l$ expansion
  \begin{align}
l- \left( \frac{l}{2} \right)^{1/3} \left( \tilde{t}+ X_j+ X_k + \frac{\gamma}{2^{1/3} l^{2/3}} \right) + O(l^{-1}),
\end{align}
with
\begin{align}
\gamma:= \frac{X_j^2}{2}+ \frac{X_k^2}{2} - X_j  X_k  - X_j \tilde{t} - X_k \tilde{t}- \frac{\tilde{t}^2}{3},
\end{align}
which from the first formula in \eqref{e:JdAsympt} gives the large $l$ behaviour of the Bessel function in (\ref{2.1n}) itself
\begin{multline}
 \left( \frac{2}{l} \right)^{1/3} \rmAi \left( \tilde{t}+ X_j + X_k  + \frac{\gamma}{2^{1/3} l^{2/3}} \right )
+ \frac{1}{10 l} \bigg [ 2 \left( \tilde{t}+ X_j + X_k \right) \rm Ai(\tilde{t}+ X_j + X_k)  \\
 + 3 \left( \tilde{t}+ X_j + X_k \right)^2 \rmAi'(\tilde{t}+ X_j  + X_k ) \bigg ]  + {\rm O}(l^{-5/3} ) {\rm O}(e^{-X_j - X_k}).
\end{multline}

Now making further use of the large $l$ expansion of $z = z(l; \tilde{t})$  we see from this  that the argument of the determinant
in (\ref{2.1n}) has the large $l$ expansion
\begin{multline}
\rmAi \left( \tilde{t}+ X_j + X_k + \frac{\gamma}{2^{1/3} l^{2/3}} \right) + 
 \frac{1}{(2^{1/3}) 10 l^{2/3}} \bigg [ (2 X_j + 2 X_k  -8\tilde{t} )\rmAi(\tilde{t}+ X_j  +  X_k ) \\ + 3 \left( \tilde{t}+ X_j  +  X_k  \right)^2 \rmAi'(\tilde{t}+ X_j + X_k) \bigg ] + {\rm O}(l^{-4/3} ) {\rm O}(e^{-X_j - X_k}).
\end{multline}
Expanding the argument of the Airy function in the first term according to the first formula in (\ref{2.1i}) shows that this reduces to
 \begin{equation}
\label{e:Vlimkernel}  V_{\tilde{t}}^{\rmsoft} (X_j, X_k) + M_{\tilde{t}} (X_j, X_k) l^{-2/3} + {\rm O}(l^{-4/3} ) {\rm O}(e^{-X_j - X_k}).
 \end{equation}
 The result (\ref{2.1mX}) now follows  from \cite[Lemma 1]{BFM17}, where, as in the proof of Proposition \ref{P1}, we replace $\tilde{t}$ by $t$ in the second-order term since they are of the same order in $l$.

\end{proof}

We see from (\ref{e:FredDetHard4}) that knowledge of the scaled asymptotics of the Fredholm determinant in (\ref{e:FredDetHard1}) is sufficient to compute the same for the quantity $ \tilde{E}_{4}^{\rmhard}$ and thus from (\ref{e:lBoxE4Hard}) the scaled asymptotics of Pr$(\linc \le l)$.

\begin{cor}\label{C3.2}
 For large $z$ we have
 \begin{equation}\label{2.1m}   
   \Pr \left(\frac{\linc -1 -2z}{z^{1/3}} \leq t \right) = F_{4,0}^{ {\rm H}}(\tilde{t}) + {1 \over (2z)^{2/3}} F_{4,1}^{{\rm H}}(t) + {\rm O}(z^{-4/3}),
  \end{equation}
  where $ F_{4,0}^{{\rm H}}(\tilde{t})  = \tilde{E}_4^{\rm soft}(0;(\tilde{t},\infty))$ and 
\begin{multline}
\label{e:Corr4t} F_{4,1}^{{\rm H}}(t) = \frac{1}{2} \left[ \det \left( \mathbb I + \mathbb{V}_{t, (0, \infty)}^{\rmsoft}\right) \Tr \left( (\mathbb I  + \mathbb{V}_{t, (0, \infty)}^{\rmsoft})^{-1} \mathbb{M}_{t, (0, \infty)} \right) \right.\\
\left. -\det \left( \mathbb I - \mathbb{V}_{t, (0, \infty)}^{\rmsoft}\right) \Tr \left( ( \mathbb I  - \mathbb{V}_{t, (0, \infty)}^{\rmsoft})^{-1} \mathbb{M}_{t, (0, \infty)} \right) \right].
\end{multline}
\end{cor}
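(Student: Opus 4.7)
The plan is to derive Corollary \ref{C3.2} by combining the symmetry identity \eqref{e:FredDetHard4} with the asymptotic analysis already carried out in the proof of Proposition \ref{P1a}. The main idea is that, after applying \eqref{e:FredDetHard4}, the two determinants $\det(\mathbb I \mp \mathbb V^{\rmhard, l-1}_{4z^2,(0,1)})$ admit a kernel expansion identical to the one derived in the proof of Proposition \ref{P1a}, and Lemma 1 of \cite{BFM17} can be applied to each determinant separately.

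First I would use \eqref{e:lBoxE4Hard} together with \eqref{e:FredDetHard4}, specialised to $a = l-1$ and $s = 4z^2$, to write
\begin{align*}
\Pr(\linc \leq l) = \frac{1}{2}\left[\det\bigl(\mathbb I - \mathbb V^{\rmhard, l-1}_{4z^2,(0,1)}\bigr) + \det\bigl(\mathbb I + \mathbb V^{\rmhard, l-1}_{4z^2,(0,1)}\bigr)\right].
\end{align*}
To match the left-hand side of \eqref{2.1m}, I would set $l = [2z + t z^{1/3}] + 1$, so that the Bessel index $l-1$ equals $[2z+ tz^{1/3}] = 2z + \tilde t z^{1/3}$. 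This is precisely the setting in which the kernel asymptotics from the proof of Proposition \ref{P1a} apply: the only change is a shift of the Bessel index from $l$ to $l-1$, which affects only the $O(l^{-4/3})$ remainder in the expansions \eqref{e:JdAsympt} (since $(l-1)^{-2/3} - l^{-2/3} = O(l^{-5/3})$). This $-1$ shift is exactly what produces the $\linc - 1$ in \eqref{2.1m}, in parallel with the $\ldec + 1$ in \eqref{2.1mX}. Consequently, the kernel $V^{\rmhard, l-1}_{4z^2,(0,1)}(X_j, X_k)$ after the change of variables admits the expansion $V^{\rmsoft}_{\tilde t}(X_j,X_k) + l^{-2/3}M_{\tilde t}(X_j,X_k) + O(l^{-4/3})$ in the same uniform sense established in \eqref{e:Vlimkernel}.

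Next, I would invoke Lemma 1 of \cite{BFM17} twice, once to each $\det(\mathbb I \mp (\mathbb V^{\rmsoft}_{\tilde t} + l^{-2/3} \mathbb M_{\tilde t}))$. For the minus sign, the standard first-order expansion gives the subleading coefficient $-\det(\mathbb I - \mathbb V^{\rmsoft}_{\tilde t})\Tr((\mathbb I - \mathbb V^{\rmsoft}_{\tilde t})^{-1} \mathbb M_{\tilde t})$, exactly as in Proposition \ref{P1a}. For the plus sign, using the algebraic identity
\begin{align*}
\det\bigl(\mathbb I + (K+\epsilon L)\bigr) = \det(\mathbb I + K)\bigl(1 + \epsilon\Tr((\mathbb I+K)^{-1} L) + O(\epsilon^2)\bigr),
\end{align*}
one obtains the corresponding coefficient $+\det(\mathbb I + \mathbb V^{\rmsoft}_{\tilde t})\Tr((\mathbb I + \mathbb V^{\rmsoft}_{\tilde t})^{-1} \mathbb M_{\tilde t})$, with the opposite sign.

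Finally, taking half the sum of these two expansions, the leading terms combine via \eqref{e:E4softFD} to give $\tilde E_4^{\rmsoft}(0;(\tilde t, \infty)) = F_{4,0}^{\rm H}(\tilde t)$, while the two subleading terms combine into precisely the expression \eqref{e:Corr4t} for $F_{4,1}^{\rm H}(t)$, once $\tilde t$ is replaced by $t$ in the $O(l^{-2/3})$ correction (justified since $\tilde t = t + O(z^{-1/3})$, so the replacement is absorbed into the $O(z^{-4/3})$ remainder). The only bookkeeping that demands attention is the sign flip between the $\det(\mathbb I - K)$ and $\det(\mathbb I + K)$ subleading coefficients, which is what converts the symmetric sum structure of \eqref{e:FredDetHard4} into the difference structure of \eqref{e:Corr4t}. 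This is not a genuine technical obstacle, so the full argument reduces to a few lines once Proposition \ref{P1a} is in hand.
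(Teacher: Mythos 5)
Your argument is correct and is exactly the route the paper takes: the paper's (one-sentence) justification of Corollary \ref{C3.2} is precisely that \eqref{e:FredDetHard4} reduces everything to the kernel expansion \eqref{e:Vlimkernel} already established in the proof of Proposition \ref{P1a}, applied to $\det(\mathbb I \mp \mathbb{V}^{\rmhard,l-1}_{4z^2,(0,1)})$ with the index shift $a=l-1$ accounting for the $\linc-1$, followed by the first-order determinant expansion (Lemma 1 of \cite{BFM17}) for each sign and averaging via \eqref{e:E4softFD}. Your bookkeeping of the sign flip between the two subleading traces, and of the harmless replacements $\tilde t \to t$ and $l^{-2/3}\to(2z)^{-2/3}$ in the correction term, matches the intended proof.
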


\begin{remark}
For general $\beta > 0$, define the Laguerre $\beta$ ensemble in terms of the eigenvalue PDF proportional to
\begin{equation}\label{9.1a}
\prod_{l=1}^N x_l^a e^{-\beta x_l/2} \prod_{1 \le j < k \le N} | x_k - x_j|^\beta, \qquad x_l \ge 0.
 \end{equation}
Let $E_{\beta, N}^{\rm L}(0;(0,t);a)$ denote the probability that the interval $(0,t)$ has no eigenvalues in this ensemble.
The corresponding hard edge scaled limit is specified by
\begin{align}
E_\beta^{\rm hard}(0;(0,t);a) := \lim_{N \to \infty} E_{\beta, N}^{\rm L}(0;(0,t/4N);a).
\end{align}
With $\beta = 1,2$ this agrees with the meaning of $E_1^{\rm hard}$ and $E_2^{\rm hard}$ as appear above, while
\begin{equation}\label{9.1b}
\tilde{E}_4^{\rm hard}(0;(0,t);a) := \lim_{N \to \infty} E_{4, N/2}^{\rm L^*}(0;(0,t/4N);a),
 \end{equation}
 where ${\rm L}^*$ refers to the ensemble with eigenvalue PDF proportional to (\ref{9.1a}) but with $e^{-\beta x_l/2}$ replaced by $e^{-x_l}$. 
 As used in \cite{DF06a} in the context of the spectral density, specify the soft edge scaled limit by
 \begin{equation}\label{9.1c}
 {E}_\beta^{\rm soft}(0;(t,\infty)) := \lim_{N \to \infty}  E_{\beta, N}^{\rm L}(0;(4N +2 (2N)^{1/3} t, \infty);a).
  \end{equation}
  Note here that the quantity on the RHS is the  probability that the interval at the far end of the spectrum $(4N +2 (2N)^{1/3} t, \infty)$ contains no eigenvalues,
  and that the limit is independent of $a$.
  The significance of the value $4N +2 (2N)^{1/3} t$ is that this centres and scales the coordinates so that in the variable $t$
  the largest eigenvalue is near the origin, and has spacing of order unity with its neighbours.
  
  In this random matrix setting, our results suggest that in relation to the hard to soft edge transition, we have that for large $\alpha$
   \begin{equation}\label{9.1d}
   {E}_\beta^{\rm hard}(0;(0,4 z^2); \beta(\alpha + 1 - 2/\beta)) \Big |_{\alpha = 2z + t z^{1/3}} =
   E_\beta^{\rm soft}(0;(t,\infty)) + {\rm O} \Big ( {1 \over \alpha^{2/3}} \Big ),
   \end{equation}  
   with the main point being the order of the correction term. The limit law itself was established in \cite{BF03} for $\beta = 1,2$ and
   4, and, using different techniques, for general $\beta > 0$ in \cite{RR09}. In relation to the correction, as already pointed out in 
   \cite{BJ13} in the context of  the Hammersley process corresponding to the $\beta = 1$ case, the shift $\alpha \mapsto \alpha + 1 - 2/\beta$
   on the LHS is crucial for the optimal rate of convergence ${\rm O}(1/\alpha^{2/3})$.
\end{remark}

 \subsection{Differential equation form}\label{S3.2} 
 The scaled asymptotics  of $E_2^{\rm hard}$ were obtained in the proof of Proposition \ref{p:Corr2DE} in  terms of the solution of a differential equation, starting from
 knowledge of the  Painlev\'e transcendent evaluation (\ref{e:E2hardDEa}).
For $E_1^{\rm hard}$ and $\tilde{E}_4^{\rm hard}$ the analogues of (\ref{e:E2hardDEa}) are \cite{Fo00}
\begin{align}
\label{e:E1hardDE} E_1^{\rmhard} \left( 0; (0, s);\frac{a-1}{2} \right) &= E_2^{\rmhard} (0; (0, s); a)^{1/2} \exp \left(- \frac{1}{4} \int_0^{s} \frac{p_{\rmhard} (r; a)}{\sqrt{r}} dr \right)\\
\label{e:E4hardDE} \tilde{E}_4^{\rmhard} (0; (0,s); a+1) &= E_2^{\rmhard} (0; (0,s); a)^{1/2} \cosh \left( \frac{1}{4} \int_0^s \frac{p_{\rmhard} (r; a)}{\sqrt{r}} dr \right),
\end{align}
where $p_{\rmhard} (r; a)$ satisfies the particular Painlev\'e III$'$ equation and boundary condition
\begin{align}
\label{e:E2hardDEb} r (1- p^2) (rp')' + p (rp')^2 +\frac{1}{4}(r- a^2)p +\frac{1}{4} rp^3 (p^2 -2)=0, \quad  p_{\rmhard} (r; a) \mathop{\sim}\limits_{r\to 0^+} \frac{r^{a/2}}{2^a \Gamma (1+a)}.
\end{align}
From Proposition \ref{p:Corr2DE} we already know how to express the leading two terms in the scaled limit of the factors $E_2^{\rmhard} (0; (0, s); a)$ in (\ref{e:E1hardDE}) and (\ref{e:E4hardDE}) in terms of
quantities satisfying differential equations. Our primary task then is to do the same for the second factor in (\ref{e:E1hardDE}).

\begin{prop}\label{P3.3}
Let $z, l$ and $t$ be related by (\ref{2.1a}), and $\tilde{t}$ be defined by \eqref{1.1u}. We have the large $z$ and $l$ expansion
 \begin{equation}\label{2.1o}   
  \exp \left(- \frac{1}{4} \int_0^{4z^2} \frac{p_{\rmhard} (r; l)}{\sqrt{r}} dr \right) = \exp \Big ( - {1 \over 2} \int_{\tilde{t}}^\infty q_0(r) \, dr \Big )
  \Big ( 1 - {1 \over 2  l^{2/3}} q_1(t) + \cdots \Big ).
 \end{equation}
 Here $q_0$ is specified as in (\ref{1.1f}) and $q_1 (r)$ satisfies the DE
\begin{align}
\label{e:SymmDE} A_1(r) q_1'' + B_1 (r) q_1' +C_1(r) q_1 = D_1(r),
\end{align}
where
\begin{align}
A_1(r)&:= \frac{1}{2},\quad
 B_1(r):= 0, \quad
  C_1(r) := - \frac{r}{2} - 3 q_0^2(r),   \nonumber \\
\label{2.1pDE}  D_1(r)&:= \frac{1}{2^{1/3}} \left( -\frac{r^2 q_0(r)}{12} +r q_0(r)^3 +q_0(r)^5 - \frac{q_0'(r)}{2} -q_0 (r) q_0'(r)^2 \right),
\end{align}
with boundary condition
\begin{align}\label{e:w1BC}
 q_1(r) \mathop{\sim}\limits_{r\to \infty} - \frac{1}{30(2^{1/3})} \Big( 14r \rmAi(r) + r^2 \rmAi'(r) \Big).
\end{align}
Furthermore, for large $l,z$
 \begin{multline}\label{2.1p} 
 E_1^{\rm hard} \left( 0;(0,4z^2); \frac{l-1}{2} \right)  =   \Pr \left(\frac{\linc +1 -2z}{z^{1/3}} \leq t \right)  \\
 = \exp \Big ( - {1 \over 2} \int_{\tilde{t}}^\infty (u_0(r) + q_0(r)) \, dr \Big ) \Big (
 1 - {1 \over 2 l^{2/3}} \int_{{t}}^\infty (u_1(r) + q_1(r)) \, dr + \cdots \Big ),
\end{multline} 
with $u_0$ and $u_1$ from Proposition \ref{p:Corr2DE}, and hence
 \begin{equation}\label{2.1pX} 
 F_{1,1}^{\rm H}(t) = - {1 \over 2}  \exp \Big ( - {1 \over 2} \int_{{t}}^\infty (u_0(r) + q_0(r)) \, dr \Big )   \int_t^\infty (u_1(r) + q_1(r)) \, dr.
 \end{equation}
\end{prop}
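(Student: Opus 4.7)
The plan is to use the factorization \eqref{e:E1hardDE} of $E_1^{\rmhard}$: since Proposition \ref{p:Corr2DE} already supplies the expansion of $E_2^{\rmhard}(0;(0,4z^2);l)^{1/2}$, the substantive task is to expand the second factor, reducing everything to \eqref{2.1o} and in particular to identifying the DE \eqref{e:SymmDE} for $q_1$. I will proceed in direct analogy with the analysis of $v(r;l)$ in the proof of Proposition \ref{p:Corr2DE}.

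First I would change variables $r = Q(l;s)$ in the integral, converting $\int_0^{4z^2} p_{\rmhard}(r;l)/\sqrt{r}\,dr$ into an integral over $s\in(\tilde t, l^2)$ (the upper terminal being replaceable by $\infty$ by the decay of the integrand). Matching with the claimed form of \eqref{2.1o} after expanding the exponential forces
\begin{equation*}
-\frac{p_{\rmhard}(Q(l;s);l)}{\sqrt{Q(l;s)}}\,Q'(l;s) \;=\; 2 q_0(s) + \frac{2 q_1(s)}{l^{2/3}} + \mathrm{O}(l^{-4/3}).
\end{equation*}
Inverting this using \eqref{2.1b} yields an ansatz for $p_{\rmhard}(Q(l;s);l)$ as a series in $l^{-1/3}$, which I substitute into the PIII$'$ equation \eqref{e:E2hardDEb} after converting $d/dr = (1/Q'(l;s))\,d/ds$. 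Equating powers of $l^{-2/3}$ gives, at leading order, the particular Painlev\'e II equation satisfied by the $q_0$ of \eqref{1.1f}, and at order $l^{-2/3}$, after eliminating higher $s$-derivatives of $q_0$ by PII (in the spirit of \eqref{B.3}), the inhomogeneous linear DE \eqref{e:SymmDE} with coefficients \eqref{2.1pDE}. This algebra is purely mechanical and is best verified using computer algebra.

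For the boundary conditions, $q_0(r)\sim \rmAi(r)$ is standard. For $q_1$, I would mirror the argument at the end of the proof of Proposition \ref{p:Corr2DE}: differentiate \eqref{2.1p} logarithmically in $t$ and, at large $t$, equate with the logarithmic derivative of the Fredholm determinant expansion from Proposition \ref{P1a}. Since $\log\det(\mathbb{I}-\mathbb{V}^{\rmsoft}_{t,(0,\infty)}-l^{-2/3}\mathbb{M}_{t,(0,\infty)}) \sim -\Tr \mathbb{V}^{\rmsoft}_{t,(0,\infty)} - l^{-2/3}\Tr \mathbb{M}_{t,(0,\infty)}$ as $t\to\infty$, the $l^{-2/3}$ comparison reads $\tfrac12(u_1(t)+q_1(t)) \sim -\tfrac{d}{dt}\Tr \mathbb{M}_{t,(0,\infty)}$. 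Evaluating this via the explicit kernel \eqref{d:Mtkernel} and subtracting the large-$r$ asymptotic of $u_1$ from \eqref{e:u1tinfty} should produce \eqref{e:w1BC}.

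Assembly is then straightforward: multiplying the expansion of $(E_2^{\rmhard})^{1/2}$ from Proposition \ref{p:Corr2DE} with \eqref{2.1o} through order $l^{-2/3}$ gives \eqref{2.1p}, and equating the $l^{-2/3}$ coefficient with $F_{1,1}^{\rm H}(t)/(2z)^{2/3}$ (using $(2z)^{2/3}\sim l^{2/3}$ at leading order) yields \eqref{2.1pX}. I expect the principal obstacle to be the PIII$'$ expansion through order $l^{-2/3}$: this requires tracking $Q(l;s)$, $Q'(l;s)$ and $1/\sqrt{Q(l;s)}$ to several subleading orders in $l^{-1/3}$ and correctly handling the cubic nonlinearities $p^2$, $p^3$, $p(rp')^2$ appearing in \eqref{e:E2hardDEb}. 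Without computer algebra the bookkeeping would be unwieldy, and bringing the output to the compact form \eqref{2.1pDE} depends on systematic elimination of higher $s$-derivatives of $q_0$ via PII, exactly as was done for $u_0$ in Proposition \ref{p:Corr2DE}.
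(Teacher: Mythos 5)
Your proposal follows essentially the same route as the paper's proof: change of variables $r=Q(l;s)$, matching against the known limit \eqref{e:E1softDE} to extract the ansatz for $p_{\rmhard}(Q(l;s);l)$, substitution into the $\sigma$-PIII$'$ equation \eqref{e:E2hardDEb} via computer algebra to obtain PII at leading order and \eqref{e:SymmDE} at the next order, the boundary condition for $q_1$ obtained by comparing logarithmic derivatives with the Fredholm determinant expansion of Proposition \ref{P1a} at large $t$ (noting that $u_1$ decays faster than $\Tr\,\mathbb{M}_{t,(0,\infty)}$), and assembly via \eqref{e:E1hardDE} together with the $E_2^{\rm hard}$ expansion from Proposition \ref{p:Corr2DE}. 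The sign conventions and order bookkeeping are consistent with the paper's, so the argument is correct as outlined.
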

 
 \begin{proof}
 Analogous to (\ref{2.2d}), with $Q(l;X)$ given by (\ref{2.1b}), we can change variables to obtain
\begin{equation}\label{2.2dH} 
 \exp \left(- \frac{1}{4} \int_0^{4z^2} \frac{p_{\rmhard} (r; l)}{\sqrt{r}} dr \right) = \exp \bigg ( {1 \over 4}  \int_{\tilde{t}}^{l^2} { p_{\rm hard}(Q(l;s);l) \over \sqrt{Q(l;s)}} Q'(l;s) \, ds  \bigg ).
\end{equation}
To be consistent with  (\ref{2.1mX}) and (\ref{e:E1softDE}) we must have that for large $l$
\begin{equation}\label{2.2eH} 
 {1 \over 2} {  p_{\rm hard}(Q(l;s);l) \over \sqrt{Q(l;s)}} Q'(l;s) = - q_0(s) - {q_1(s) \over l^{2/3}} + \cdots
\end{equation} 

Rearranging this gives a particular functional form for $p_{\rm hard}(Q(l;s);l)$. This is to be substituted in the differential
equation (\ref{e:E2hardDEb}) with the 
change of variable $r = Q(l;s)$, which we do  using computer algebra.  Equating terms
at leading powers of $l$ gives the differential equation stated below (\ref{1.1g}) for $q_0$ at order $l$, and equation \eqref{e:SymmDE} at order
$l^{1/3}$. 

From the working of the proof of Proposition \ref{p:Corr2DE} we have
\begin{equation}\label{2.1q} 
 E_2^{\rm hard}(0;(0,4z^2);l) = \exp \Big ( - \int_{\tilde{t}}^\infty u_0(r) \, dr \Big ) \Big ( 1 - {1 \over l^{2/3}} \int_{\tilde{t}}^\infty u_1(r) \, dr + {\rm O}( l^{-1}) \Big ).
 \end{equation}
 The expansion (\ref{2.1p}) now follows from this, (\ref{2.1o}) and (\ref{e:E1hardDE}).

In relation to the boundary condition, we allow $l$ in both the above working, and that of the proof of  Proposition \ref{P1a}  to be continuous. The effect is to replace the discrete variable
$\tilde{t}$ by the continuous variable $t$. The working of the proof of Proposition \ref{P1a} then tells us
\begin{equation}\label{2.1r} 
 E_1^{\rm hard}(0;(0,4z^2);(l-1)/2)  
 = \det \Big ( \mathbb I - \big(  \mathbb{V}_{{t}, (0, \infty)}^{\rm soft} + (1/ l^{2/3}) \mathbb{M}_{t, (0, \infty)} \big) + \cdots \Big ),
 \end{equation}
 From this latter formula, it follows that for large $t$
 \begin{equation}\label{2.1s} 
 \log E_1^{\rm hard}(0;(0,4z^2);(l-1)/2) \sim - \int_0^\infty \Big ( V_{t}^{\rm soft}(x,x) +  (1/ l^{2/3}) M_t(x,x) \Big ) \, dx.
  \end{equation}
 Differentiating (\ref{2.1s}), simplifying using the explicit forms of the kernels (\ref{d:Vsoft}) and (\ref{d:Mtkernel}), then
 comparing to the logarithmic derivative of (\ref{2.1p}) with $\tilde{t}$ replaced by $t$ we obtain
 \begin{align}\label{3.36}
-q_0(r) -u_0(r) &\mathop{\sim}\limits_{r \to \infty} -\rmAi(r), \nonumber \\
-q_1(r) - u_1(r) &\mathop{\sim}\limits_{r \to \infty} \frac{1}{(2^{1/3}) 30} \Big( 14r \rmAi(r) +r^2 \rmAi'(r) \Big).
\end{align}
We already have the asymptotic behaviour for $u_0$ in (\ref{e:LeadDE}) and $u_1$ in \eqref{e:u1tinfty}, from which we can check that $u_0$ and $u_1$
fall off faster than the RHS's in \eqref{3.36}. This implies the boundary conditions as stated below (\ref{1.1g}) for $q_0$, and \eqref{e:w1BC} for $q_1$.
 \end{proof}
 
 \begin{remark}\label{R.E}
It is known that \cite{TW94a} that
\begin{align}
 \int_t^\infty u_0(r) \, dr = \int_t^\infty (r - t) q_0(r)^2 \, dr.
\end{align}
 Using this in (\ref{B.1}) and comparing with (\ref{2.2c}) shows 
\begin{align}
\int_t^\infty u_1(r) \, dr = {2^{2/3} \over 10} \Big ( - q_0(t)^2 + \Big ( \int_t^\infty  q_0(x)^2 \, dx \Big )^2 +  {t^2 \over 6} \int_t^\infty q_0(x)^2 \, dx \Big ).
\end{align}
  Hence $F_{1,1}^{\rm H}(t)$ can be expressed entirely in terms of $q_0(r)$ and $q_1(r)$.
  \end{remark}
  
As for $F_{2,1}^{\rm H}(t)$, the earlier work of \cite{BJ13} gives an expression for $F_{1,1}^{\rm H}(t)$ in terms
  of Painlev\'e transcendents simpler than (\ref{2.1pX}). This reads \cite[Th.~1.2]{BJ13}
 \begin{equation}\label{BJ.2}
 F_{1,1}^{\rm H}(t) = - {2^{2/3} \over 10} \bigg ( 2 {d^2 \over d t^2} + {t^2 \over 6} {d \over dt} \bigg )
 \exp \bigg ( - {1 \over 2} \int_t^\infty  \Big ( (t - r)  q_0(r)^2 + q_0(r) \Big ) \, dr \bigg ).
 \end{equation}
 A direct verification of the consistency of (\ref{BJ.2}) and (\ref{2.1pX}) can, upon use of Remark \ref{R.E}, be attempted
 along the same lines of that used in relation to verifying the identity (\ref{B.2}). However the implied identity for
 $q_1(r)$ has extra terms and a more complex structure relative to (\ref{B.2}), and this has not been carried through. We remark that graphical agreement
 is readily verified.

 Knowledge of (\ref{2.1p}) and the structural relation between (\ref{e:E1hardDE}) and (\ref{e:E4hardDE}) allows for the
 differential equation companion to Corollary \ref{C3.2} to be presented.
 
 \begin{cor}\label{C3.4}
Let $\{u_0, u_1, q_0, q_1 \}$ be as in Proposition  \ref{P3.3}.  For large $l,z$ we have
 \begin{multline} \label{e:Corr4DE}
F_{4,1}^{{\rm H} } (t)
= \frac{1}{2} \exp \left(- \frac{1}{2} \int_{t}^{\infty} u_0(r) dr \right) \Bigg[ \sinh \left( \frac{1}{2} \int_{t}^{\infty} q_0(r) dr \right) \int_t^{\infty} q_1 (r) dr\\
  - \cosh \left( \frac{1}{2} \int_{t}^{\infty} q_0(r) dr \right) \int_{t}^{\infty} u_1 (r) dr \Bigg].
\end{multline}
\end{cor}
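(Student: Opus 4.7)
The plan is to combine the factorisation (\ref{e:E4hardDE}) of $\tilde E_4^{\rmhard}$ with the asymptotics already established in Proposition \ref{P3.3}, so no new differential-equation analysis is needed. From (\ref{e:lBoxE4Hard}) with $l$ and $z$ related by (\ref{2.1a}), and setting $a=l-1$ in (\ref{e:E4hardDE}), I would write
\[
\Pr(\linc \leq l) = E_2^{\rmhard}\bigl(0;(0,4z^2);l-1\bigr)^{1/2}\, \cosh\!\Bigl(\tfrac{1}{4}\int_0^{4z^2}\tfrac{p_{\rmhard}(r;l-1)}{\sqrt{r}}\,dr\Bigr).
\]
The shift $l\mapsto l-1$ affects only $O(l^{-1})$ terms, which is beyond the $O(l^{-2/3})$ accuracy tracked here, so the expansions (\ref{2.1q}) and (\ref{2.1o}) of Proposition \ref{P3.3} may be used as they stand.

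Next I would expand each factor. Taking the square root of (\ref{2.1q}) gives
\[
E_2^{\rmhard}\bigl(0;(0,4z^2);l\bigr)^{1/2} = \exp\!\Bigl(-\tfrac12\int_{\tilde t}^\infty u_0(r)\,dr\Bigr)\Bigl(1 - \tfrac{1}{2l^{2/3}}\int_{\tilde t}^\infty u_1(r)\,dr + O(l^{-4/3})\Bigr).
\]
For the hyperbolic factor, (\ref{2.1o}) supplies the expansion of $\exp(-\tfrac14\int\tfrac{p_{\rmhard}}{\sqrt r}\,dr)$; reciprocating both sides (or equivalently repeating the derivation with the opposite sign) yields the companion expansion for $\exp(+\tfrac14\int\tfrac{p_{\rmhard}}{\sqrt r}\,dr)$, identical apart from a sign flip on the $l^{-2/3}$ correction. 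Averaging these two via $\cosh=(e^x+e^{-x})/2$ gives
\[
\cosh\!\Bigl(\tfrac14\int_0^{4z^2}\tfrac{p_{\rmhard}(r;l)}{\sqrt{r}}\,dr\Bigr) = \cosh\!\Bigl(\tfrac12\!\int_{\tilde t}^\infty \!q_0\Bigr) + \tfrac{1}{2l^{2/3}}\sinh\!\Bigl(\tfrac12\!\int_{\tilde t}^\infty\! q_0\Bigr)\!\int_{\tilde t}^\infty \!q_1 + O(l^{-4/3}).
\]

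Multiplying the two expansions and collecting powers of $l^{-2/3}$, the leading-order term $\exp(-\tfrac12\int u_0)\cosh(\tfrac12\int q_0)$ coincides with $\tilde E_4^{\rmsoft}(0;(\tilde t,\infty)) = F_{4,0}^{\rm H}(\tilde t)$ by (\ref{e:E4softDE}) combined with (\ref{2.2a}). The two cross-products at order $l^{-2/3}$ sum to exactly the bracket in (\ref{e:Corr4DE}), and the prefactor $\exp(-\tfrac12\int_t^\infty u_0)$ appears after noting that $l^{-2/3}=(2z)^{-2/3}+O(z^{-4/3})$ and that $\tilde t$ may be replaced by $t$ at this subleading order (both justified as in the end of the proof of Proposition \ref{P1}). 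I anticipate no substantive obstacle: the genuine analytic content (asymptotics of $p_{\rmhard}$, identification of $q_0,q_1,u_0,u_1$ and their boundary behaviour) is already contained in Proposition \ref{P3.3}, so the corollary reduces to bookkeeping on two asymptotic series. The one mildly delicate step is verifying that the reciprocal of (\ref{2.1o}) indeed has the stated form through order $l^{-2/3}$, but this is the elementary expansion $1/(1+\varepsilon l^{-2/3})=1-\varepsilon l^{-2/3}+O(l^{-4/3})$ applied to the bracketed factor.
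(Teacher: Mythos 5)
Your route is the paper's route: no separate proof of Corollary \ref{C3.4} is given there beyond the remark that it follows from \eqref{2.1p} (equivalently, the expansions \eqref{2.1q} and \eqref{2.1o}) together with the structural relation between \eqref{e:E1hardDE} and \eqref{e:E4hardDE}, and your $\cosh=(e^x+e^{-x})/2$ bookkeeping, the identification of the leading term via \eqref{e:E4softDE} with \eqref{2.2a}, and the replacements $l^{-2/3}\mapsto(2z)^{-2/3}$ and $\tilde t\mapsto t$ at subleading order are all exactly what is intended and are carried out correctly.

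One step is justified incorrectly, although your final formula is unaffected. You claim that the parameter shift $l\mapsto l-1$ in $E_2^{\rmhard}$ and $p_{\rmhard}$ ``affects only $O(l^{-1})$ terms.'' It does not: with $z$ held fixed, a unit change in the Bessel parameter $a$ corresponds to shifting the soft-edge scaling variable by $z^{-1/3}$, so it perturbs the gap probability at order $l^{-1/3}$ --- larger than the $l^{-2/3}$ correction you are computing, and large enough to destroy the claimed rate. This is precisely the point of the remark surrounding \eqref{9.1d}, where the shift $\alpha\mapsto\alpha+1-2/\beta$ is described as crucial for the $O(\alpha^{-2/3})$ convergence. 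The correct resolution is already built into the statement of Corollary \ref{C3.2}: there the random variable is centred as $\linc-1$, so the relevant relation is $l-1=[2z+tz^{1/3}]$ rather than \eqref{2.1a} applied to $l$ itself, and the parameter $a=l-1$ obtained by setting $a+1=l$ in \eqref{e:E4hardDE} equals $[2z+tz^{1/3}]$ exactly. With that identification the expansions of Proposition \ref{P3.3} apply verbatim and there is no residual shift to absorb; your subsequent algebra then goes through unchanged.
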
 

 \begin{remark}\label{R.E1}
 This quantity was not considered in \cite{BJ13}. There is no evidence of an analogue of the simplified formulas
(\ref{B.1})  and  (\ref{BJ.2}).
\end{remark}

 \subsection{Comparison with numerical calculations}
 
Here we perform similar numerical calculations to those in Section \ref{sec:NumsBeta2} above, calculating the corrections $F_{1,1}^{{\rm H}} $ and $F_{4,1}^{{\rm H}}$, from \eqref{2.1mX} and \eqref{2.1m} respectively, using both the Fredholm determinant expressions and the expressions in terms of solutions to differential equations. We compare them to the differences
\begin{align}
\label{d:Corr1ell}  \delta_1^{\rm H} (t) := l^{2/3} \Bigg( E_{1}^{\rmhard} \Bigg( 0; (0, Q(l; t)); \frac{l -1}{2} \Bigg) - E_{1}^{\rmsoft} \Big( 0; (t,\infty) \Big) \Bigg)
\end{align}
and
\begin{align}
\label{d:Corr4ell}   \delta_4^{\rm H} (t) :=  l^{2/3} \Bigg( \tilde{E}_{4}^{\rmhard} \Big( 0; (0, Q(l; t)); l+1 \Big) - \tilde{E}_{4}^{\rmsoft} \Big( 0; (t,\infty) \Big) \Bigg)
\end{align}
for $l=20$.

For the Fredholm determinants expressions we again use the toolbox of \cite{Bornemann2010}. For the differential equation solutions $q_0$ and $q_1$ we obtain a sequence of Taylor series solutions; $500$ series of degree $6$ for $q_0$ and $1,000$ series of degree $8$ for $q_1$. Lastly, we compute a sequence of $5,400$ series of degree $6$ for $p_{\rmhard} (r, 20)$. With these sequences, and the corresponding sequences of DE solutions from Section \ref{sec:NumsBeta2} we obtain the graphs in the left panels of Figure \ref{f:Corrapproxell20_1} for $F_{1,1}^{{\rm H}} $ and of Figure \ref{f:Corrapproxell20_4} for $F_{4,1}^{{\rm H}}$. In the right panels of these figures we present plots of
\begin{align}
\label{d:delta1} E_{1}^{\rmhard} \Bigg( 0; (0, Q(l; t)); \frac{l -1}{2} \Bigg) - E_{1}^{\rmsoft} \Big( 0; (t,\infty) \Big) - \frac{1}{l^{2/3}} F_{1,1}^{\rm H}(t)
\end{align}
and
\begin{align}
\label{d:delta4} \tilde{E}_{4}^{\rmhard} \Big( 0; (0, Q(l; t)); l+1 \Big) - \tilde{E}_{4}^{\rmsoft} \Big( 0; (t,\infty) \Big) - \frac{1}{l^{2/3}} F_{4,1}^{\rm H}(t).
\end{align}
as approximations to the higher order corrections in \eqref{2.1mX} and \eqref{2.1m} respectively.

\begin{figure}
\centering
     \begin{minipage}{0.45\textwidth}
	\begin{tikzpicture}[scale=1, every node/.style={transform shape}]
         \node[inner sep=0pt] at (0,0) {\includegraphics[height=4cm, align=t]{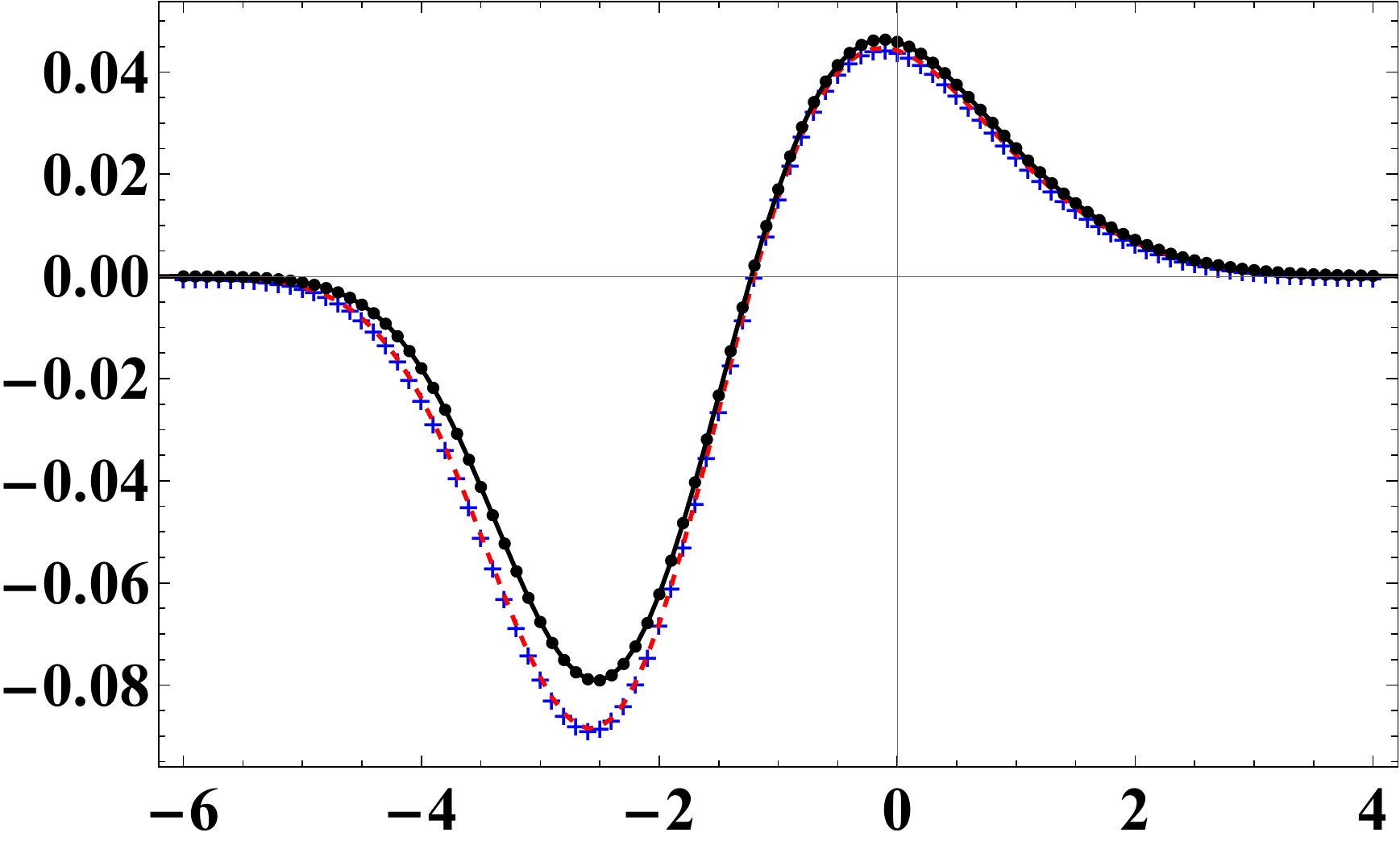}};
         \node at (3.6,-1.7) {\textbf{\textit{t}}};
         \node at (0.3,2.4) {$F_{1,1}^{\rm H}(t)$};
         \end{tikzpicture}
	\end{minipage} \quad
	\begin{minipage}{0.45\textwidth}
	\begin{tikzpicture}[scale=1, every node/.style={transform shape}]
         \node[inner sep=0pt] at (0,0) {\includegraphics[height=4cm, align=t]{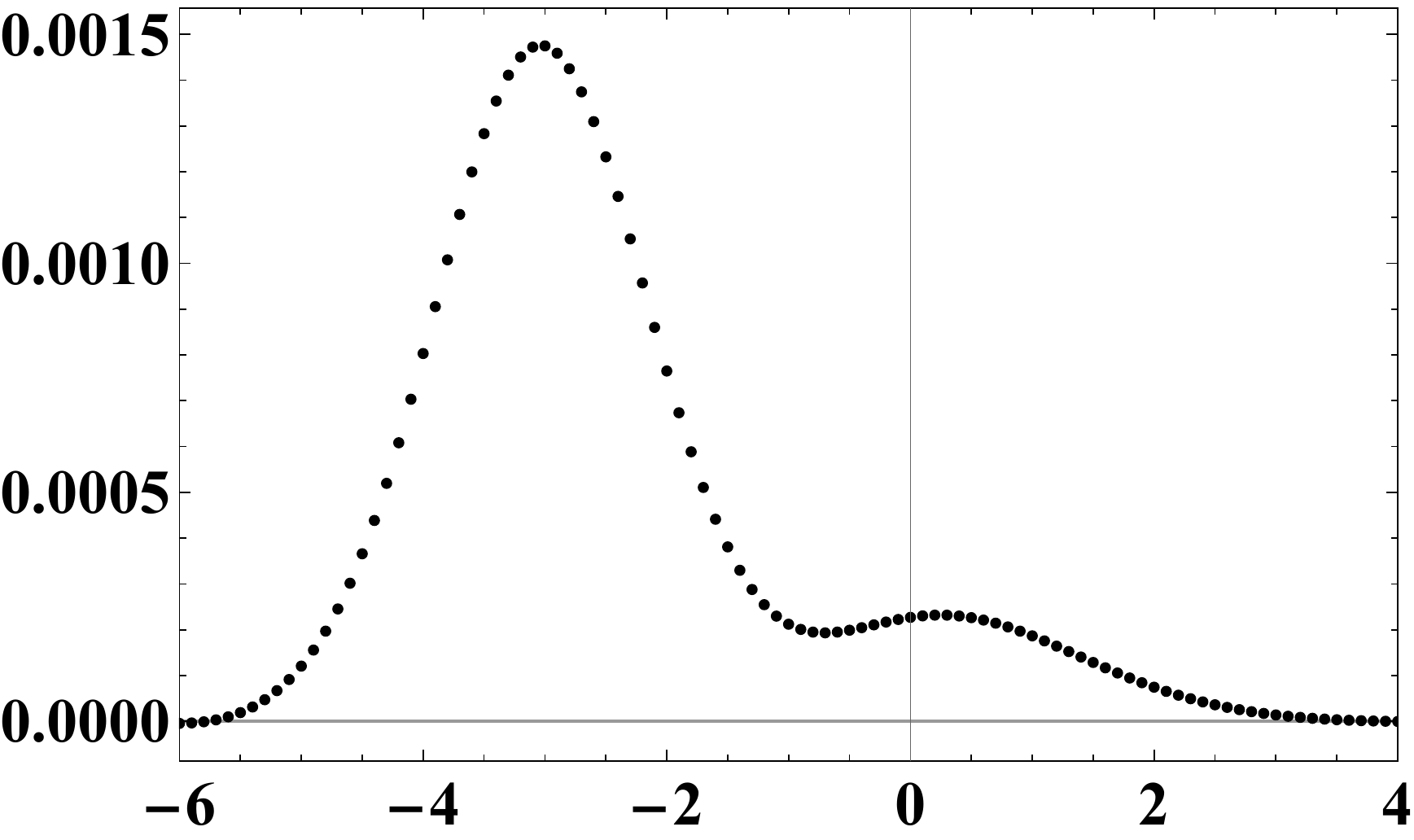}};
         \node at (3.6,-1.7) {\textbf{\textit{t}}};
         \node at (0.3,2.4) {Expression \eqref{d:delta1}};
         \end{tikzpicture}
	\end{minipage}
\caption{In the left panel we have the correction term $F_{1,1}^{\rm H}(t)$ in \eqref{2.1mX} calculated using \eqref{2.1pX} [blue crosses], and using \eqref{e:Corr1t1} [red dashed line]. On the same axes we also plot $\delta_1^{\rm H} (t)$ from \eqref{d:Corr1ell} for $l=20$ using the Fredholm determinant expressions \eqref{e:FredDetHard1} and \eqref{e:E1softFD} [black line] and also using the expressions in terms of solutions to differential equations \eqref{e:E1softDE} and \eqref{e:E1hardDE} [black dots]. In the right panel we plot the difference \eqref{d:delta1}, which is a numerical approximation to the remaining terms in \eqref{2.1mX}.}
\label{f:Corrapproxell20_1}
\end{figure}

\begin{figure}
\centering
     \begin{minipage}{0.45\textwidth}
	\begin{tikzpicture}[scale=1, every node/.style={transform shape}]
         \node[inner sep=0pt] at (0,0) {\includegraphics[height=4cm, align=t]{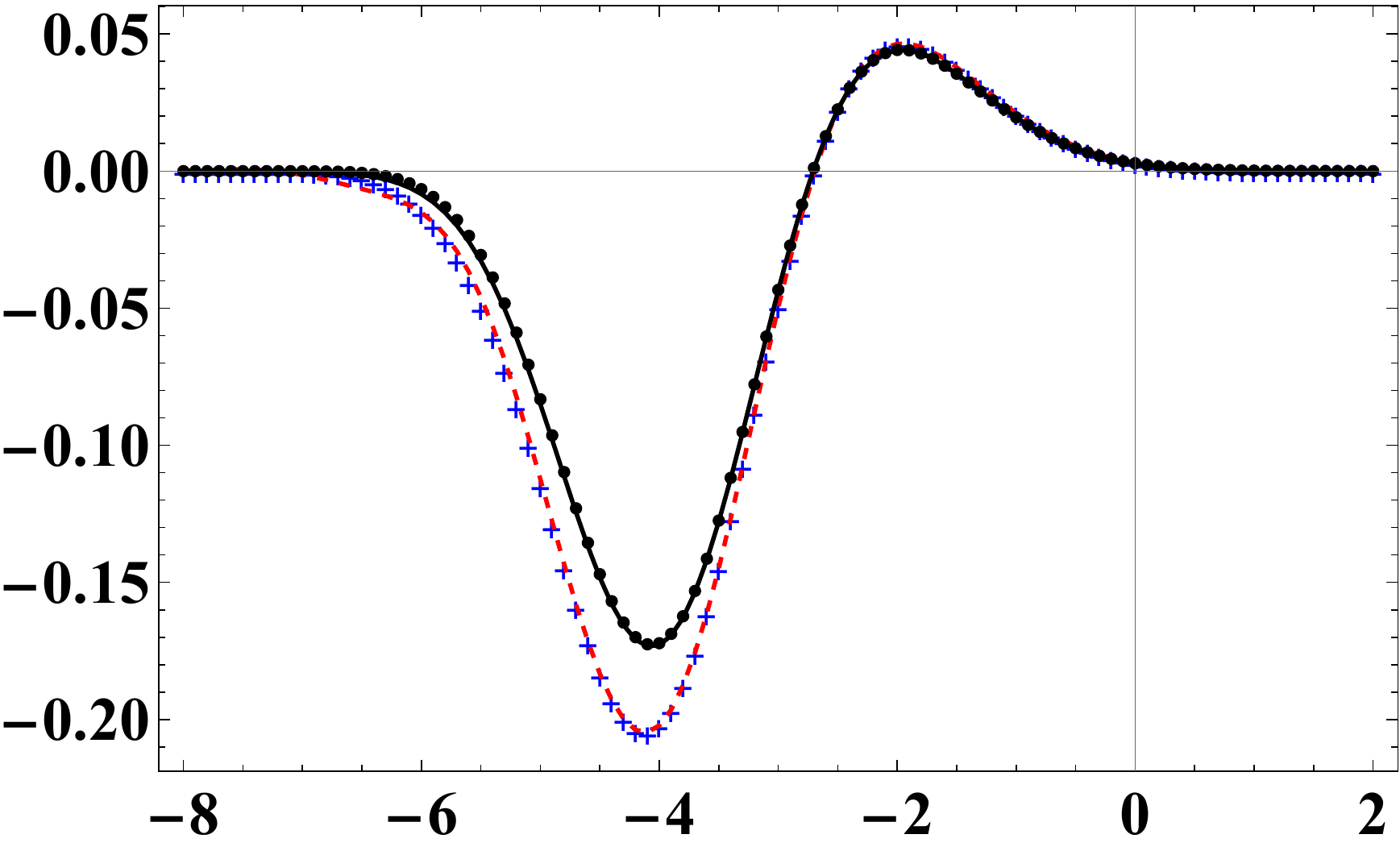}};
         \node at (3.6,-1.7) {\textbf{\textit{t}}};
         \node at (0.3,2.4) {$F_{4,1}^{\rm H}(t)$};
         \end{tikzpicture}
	\end{minipage} \quad
	\begin{minipage}{0.45\textwidth}
	\begin{tikzpicture}[scale=1, every node/.style={transform shape}]
         \node[inner sep=0pt] at (0,0) {\includegraphics[height=4cm, align=t]{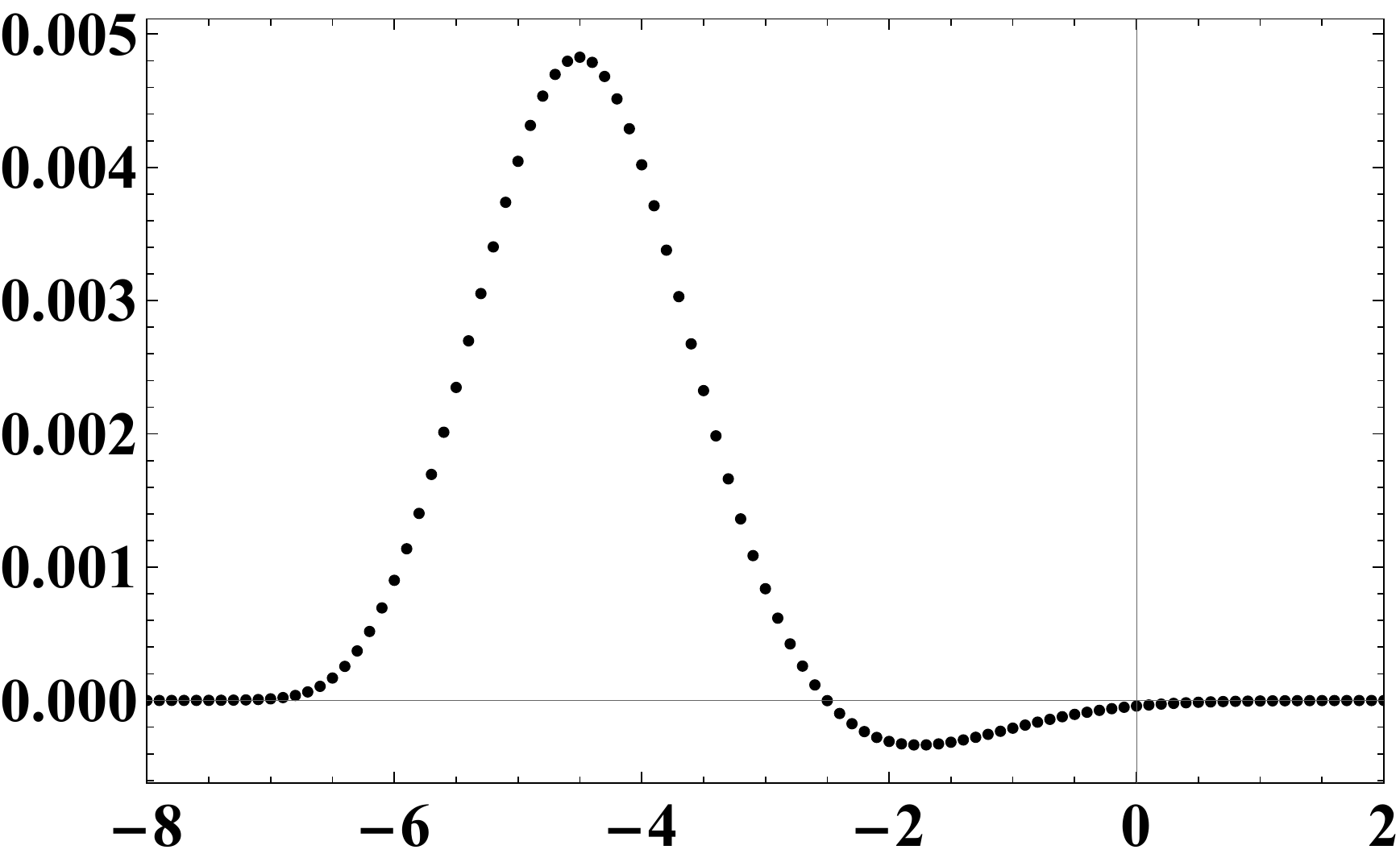}};
         \node at (3.6,-1.7) {\textbf{\textit{t}}};
         \node at (0.3,2.4) {Expression \eqref{d:delta4}};
         \end{tikzpicture}
	\end{minipage}
\caption{These are the plots for $F_{4,1}^{\rm H}(t)$ from \eqref{e:Corr4t} and \eqref{e:Corr4DE} analogous to those in Figure \ref{f:Corrapproxell20_1}, using the same Fredholm determinant calculations and differential equation solutions. We also have $\delta_4^{\rm H} (t)$ from \eqref{d:Corr4ell} for $l=20$ again using the Fredholm determinant expressions and the expressions in terms of solutions to differential equations. Similarly, in the right panel is the difference \eqref{d:delta4}.}
\label{f:Corrapproxell20_4}
\end{figure}

\section{Large $N$ expansion of  ${\rm Pr} \Big ( {l_N^\square - 2 \sqrt{N} \over N^{1/6}}  \le t \Big )$ and symmetrised analogues}
\subsection{Relationship to large $z$ form of $   \Pr \left(\frac{l^{\Box} -2z}{z^{1/3}} \leq t \right)$ and conjecture}

The longest increasing subsequence problem has been described in the paragraph including (\ref{1.1e}).
Equating the latter with (\ref{1.1k}) shows the coincidence of limit laws with the maximal up/right path length
in the Hammersley process,
\begin{equation}\label{4.1a}
\lim_{N \to \infty}  {\rm Pr} \Big ( {l_N^\square - 2 \sqrt{N} \over N^{1/6}}  \le t \Big ) = \lim_{z \to \infty}  \Pr \left(\frac{l^{\Box} -2z}{z^{1/3}} \leq t \right) = E_2^{\rm soft} \Big( 0;(t,\infty) \Big).
\end{equation}
To understand why these two limits coincide, first recall from (\ref{1.1j}) that $ {\rm Pr} ( l^\square  \le l) $ is an exponential generating function for
${\rm Pr} ( l_N^\square  \le l)$. Furthermore the latter is a decreasing function of $N$ that takes values between $0$ and $1$. In this general setting Johansson \cite{Jo98} proved what has been referred to as a de-Poissonisation lemma.

\begin{prop}\label{P4.1}
Let the sequence $\{q_n\}_{n=0,1,\dots}$ satisfy the bounds
$0 \le q_n \le 1$ and be monotonically decreasing so that
$q_n \ge q_{n+1}$. Let
\begin{align}
\phi(\xi) := e^{-\xi} \sum_{n=0}^\infty q_n {\xi^n \over n!}
\end{align}
and for given $d>0$ write
\begin{align}
\mu_n^{(d)} = n + (2\sqrt{d+1}+1)\sqrt{n\log n}, \qquad \nu_n^{(d)} = n - (2\sqrt{d+1}+1)\sqrt{n\log n}.
\end{align}
One has
\begin{equation}\label{bp.johk}
\phi(\mu_n^{(d)}) - C n^{-d} \le q_n \le
\phi(\nu_n^{(d)}) + C n^{-d}
\end{equation}
for all $n \ge n_0$, where $C$ is some positive constant.
\end{prop}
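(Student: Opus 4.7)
The plan is to exploit the monotonicity of $\{q_n\}$ to split the defining series of $\phi(\xi)$ at the index $n$, then apply standard Poisson tail (Chernoff-type) estimates with $\xi$ chosen either slightly below or slightly above $n$. The choice $\xi = \nu_n^{(d)}$ gives the upper bound on $q_n$, and $\xi = \mu_n^{(d)}$ gives the lower bound.

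First I would decompose
\begin{equation*}
\phi(\xi) = e^{-\xi} \sum_{k=0}^{n-1} q_k \frac{\xi^k}{k!} + e^{-\xi} \sum_{k=n}^{\infty} q_k \frac{\xi^k}{k!}
\end{equation*}
and use monotonicity together with $0\le q_k \le 1$ to sandwich $\phi(\xi)$. Writing $P_{\xi}(A)$ for $\Pr(\mathrm{Poisson}(\xi)\in A)$, one obtains on the one hand (using $q_k \ge q_n$ for $k<n$ and discarding the second sum)
\begin{equation*}
\phi(\xi) \;\ge\; q_n\, P_{\xi}(\{0,1,\dots,n-1\}),
\end{equation*}
and on the other hand (using $q_k \le 1$ for $k<n$ and $q_k \le q_n$ for $k\ge n$)
\begin{equation*}
\phi(\xi) \;\le\; P_{\xi}(\{0,1,\dots,n-1\}) + q_n\, P_{\xi}(\{n,n+1,\dots\}).
\end{equation*}

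Next I would specialise. For the upper bound on $q_n$, take $\xi = \nu_n^{(d)} < n$ in the first inequality and rearrange:
\begin{equation*}
q_n \;\le\; \frac{\phi(\nu_n^{(d)})}{1 - P_{\nu_n^{(d)}}(\{n,n+1,\dots\})} \;\le\; \phi(\nu_n^{(d)}) + 2\,P_{\nu_n^{(d)}}(\{n,n+1,\dots\})
\end{equation*}
provided the denominator exceeds $1/2$, which it does for $n$ large. For the lower bound, take $\xi = \mu_n^{(d)} > n$ in the second inequality, giving $q_n \ge \phi(\mu_n^{(d)}) - P_{\mu_n^{(d)}}(\{0,\dots,n-1\})$. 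It therefore remains only to show both Poisson tails are $\mathrm{O}(n^{-d})$.

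This is where the specific coefficient $2\sqrt{d+1}+1$ enters, and it is the one delicate bookkeeping step. Applying the Chernoff estimate $\Pr(\mathrm{Poisson}(\lambda)\ge \lambda+t)\le \exp(-t^2/(2(\lambda+t/3)))$ with $\lambda=\nu_n^{(d)}$ and $t=n-\nu_n^{(d)}=(2\sqrt{d+1}+1)\sqrt{n\log n}$, and the companion lower-tail bound $\Pr(\mathrm{Poisson}(\lambda)\le\lambda-t)\le \exp(-t^2/(2\lambda))$ with $\lambda=\mu_n^{(d)}$, the exponent in both cases works out to $-(2\sqrt{d+1}+1)^2(\log n)/2\,(1+\mathrm{o}(1))$. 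Since $(2\sqrt{d+1}+1)^2/2 > d$, one obtains tail probabilities of order $n^{-d}$ for all $n\ge n_0(d)$, from which \eqref{bp.johk} follows with a suitable constant $C$.

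The main obstacle is purely quantitative: verifying that the numerical constant in the shift $(2\sqrt{d+1}+1)\sqrt{n\log n}$ is large enough to make the Chernoff tail decay faster than any prescribed power $n^{-d}$, and choosing $n_0$ so that the lower-order Chernoff corrections and the linearisation of $1/(1-P_\xi(\cdots))$ absorb cleanly into the single constant $C$. Conceptually the argument is routine Poissonization/de-Poissonization; only this calibration of constants needs care.
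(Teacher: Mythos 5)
Your argument is correct, and it is essentially the standard proof: the paper itself gives no proof of Proposition \ref{P4.1}, citing Johansson \cite{Jo98}, and your splitting of the series at index $n$ combined with monotonicity and Poisson tail (Chernoff) bounds is precisely the argument used there. The calibration also checks out, since $(2\sqrt{d+1}+1)^2/2 = 2d+2\sqrt{d+1}+5/2 > d$ leaves ample room for the $(1+\mathrm{o}(1))$ factors in the exponents.
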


Rewriting (\ref{1.1j}) so that it reads
\begin{equation}\label{4.1b}
  \Pr \left(\frac{l^{\Box} -2z}{z^{1/3}} \leq t \right) = e^{-z^2} \sum_{N=0}^\infty  {z^{2N} \over N!}   {\rm Pr} \left ( {l_N^\square - 2 z \over z^{1/3}}  \leq t \right) ,
\end{equation}
then applying Proposition \ref{P4.1} establishes the first equality in (\ref{4.1a}).
From Proposition \ref{P1} we know details of further terms in the large $z$ expansion of ${\rm Pr} \Big ( {l^\square - 2 z \over z^{1/3}} \leq t \Big ) $, with the leading correction to the limit formula (\ref{1.1k}) being ${\rm O}(1/z^{2/3})$. However, this knowledge used in Proposition \ref{P4.1} does not give
information on details of further terms in the large $N$ expansion of ${\rm Pr} \Big ( {l_N^\square - 2 \sqrt{N} \over N^{1/6}}  \le t \Big )$. In fact we don't know of
any analytic approach to this question. Nonetheless there are numerical methods that allow for data to be obtained leading to a conjecture.

\begin{conj}\label{C1}
Set $ F_{2,0}(t)  = E_2^{\rm soft}(0;(t,\infty))$. For some $ F_{2,1}(t)$ we  have
\begin{equation}\label{1.1iXa}
 {\rm Pr} \left ( {l_N^\square - 2 \sqrt{N} \over N^{1/6}}  \le t \right ) =  F_{2,0}(t^*) +  {1\over N^{1/3}}   F_{2,1}(t) + \cdots,
  \end{equation}
  where $t^*$ is defined in \eqref{1.1i}.
\end{conj}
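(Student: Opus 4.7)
The plan is to try to de-Poissonize the expansion of Proposition \ref{P1} in order to extract the analogue at fixed $N$. Inverting the generating-function identity (\ref{1.1j}) as a Cauchy integral
\begin{align*}
\Pr(l_N^\square \le l) = \frac{N!}{2\pi i}\oint \frac{e^{z^2}\Pr(l^\square(z)\le l)}{z^{2N+1}}\,dz,
\end{align*}
a steepest-descent evaluation places the saddle at $z \sim \sqrt{N}$. Because $(2z)^{-2/3}\big|_{z=\sqrt{N}} = 2^{-2/3}N^{-1/3}$, the $(2z)^{-2/3}F_{2,1}^{\rm H}(t)$ correction in (\ref{1.1m}) translates directly into a contribution of order $N^{-1/3}$, matching the exponent postulated in (\ref{1.1iXa}).

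A more transparent derivation proceeds through the Poisson CLT. Let $X$ be Poisson of parameter $z^2$, so $X = z^2 + zZ + \mathrm{O}(1)$ with $Z$ asymptotically standard Gaussian. For the integer $l = [2z+tz^{1/3}]$, a Taylor expansion gives
\begin{align*}
\frac{l - 2\sqrt{X}}{X^{1/6}} = t - (\eta + Z)z^{-1/3} + \mathrm{O}(z^{-2/3}), \qquad \eta := 2z + tz^{1/3} - l \in [0,1).
\end{align*}
Postulating $P_N(t) := \Pr\big((l_N^\square - 2\sqrt{N})/N^{1/6} \le t\big) = F_{2,0}(t) + N^{-\alpha}F_{2,1}(t) + \cdots$, expanding $F_{2,0}$ in the shift, averaging over $Z$ (using $\mathbb{E}Z = 0$, $\mathbb{E}Z^2 = 1$), and comparing with the integer-part expansion $F_{2,0}(\tilde t) + (2z)^{-2/3}F_{2,1}^{\rm H}(t)$ from Proposition \ref{P1} at order $z^{-2/3}$ yields
\begin{align*}
\tfrac{1}{2}F_{2,0}''(t) + z^{-2\alpha + 2/3}F_{2,1}(t) = 2^{-2/3}F_{2,1}^{\rm H}(t).
\end{align*}
This forces $\alpha = 1/3$ (the conjectured exponent) and, as a by-product, predicts the functional form $F_{2,1}(t) = 2^{-2/3}F_{2,1}^{\rm H}(t) - \tfrac{1}{2}F_{2,0}''(t)$, which goes beyond what Conjecture \ref{C1} asserts.

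The principal obstacle is rigor. Proposition \ref{P4.1} gives $\Pr(l_N^\square \le l) = \Pr(l^\square(z)\le l)|_{z^2 \approx N} + \mathrm{O}(N^{-d})$ for arbitrary $d$, but the error term is non-constructive and, crucially, not sharp enough to isolate a subleading correction of order $N^{-1/3}$. To upgrade the Cauchy-integral argument one would need the large-$z$ expansion of Proposition \ref{P1} to be uniform in a complex sector around the positive real axis; the hard-to-soft-edge asymptotics underlying its proof are stated only for real $z$ and do not obviously furnish such uniformity. A purely real-variable refinement of Johansson's de-Poissonization lemma, exploiting the monotonicity of $P_N(t)$ in $N$ together with a priori regularity of the correction in $N$, would also suffice, but no such tool appears in the literature.

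A different line of attack would bypass de-Poissonization by analysing $\Pr(l_N^\square \le l)$ directly from its Toeplitz-determinant representation (equivalently the coefficient of $z^{2N}/N!$ in the right-hand side of (\ref{e:lBoxE2HardAv})) via Riemann--Hilbert steepest descent in the hard-to-soft transition regime $l \sim 2\sqrt{N}$, along the lines of analyses already performed for the leading-order limit in \cite{BDJ98}. Pushing such an analysis one order deeper should in principle deliver both the exponent $-1/3$ and an explicit $F_{2,1}(t)$, but it is a substantial undertaking and is not attempted here; accordingly the statement is presented only as a conjecture, to be supported by the enumeration and simulation data of the next section.
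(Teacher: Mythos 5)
This statement is a conjecture, and the paper does not prove it: the authors say explicitly that they know of no analytic approach, and their support for \eqref{1.1iXa} is numerical --- exact values of $\Pr(l_N^\square \le l)$ for $N\le 700$ obtained by expanding the generating function \eqref{4.5} through the Painlev\'e characterisation \eqref{e:vDE}, Monte Carlo data up to $N=10^5$, and the collapse of the rescaled differences $\delta_2(t)$ of \eqref{SC} for $N=700$, $2\times 10^4$ and $10^5$ in Figure \ref{f:CompDelta2}. Their only analytic comment is the remark following the conjecture, that identifying $\sqrt{N}$ with $z$ makes \eqref{1.1iXa} consistent with \eqref{1.1m}. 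Your proposal ends at the same place (a conjecture to be backed by enumeration and simulation) but gets there by a genuinely different, more quantitative route: your saddle-point observation is a sharpened version of the paper's consistency remark, and your Poisson-CLT matching is internally sound at the order you work to --- the dependence on the discretisation shift $\eta$ cancels between $F_{2,0}(\tilde t)$ in Proposition \ref{P1} and the expansion of $(l-2\sqrt{X})/X^{1/6}$, and the Poisson skewness only enters at $O(z^{-1})$ --- so it does force $\alpha=1/3$ \emph{given} the postulated form of the finite-$N$ expansion. You also diagnose correctly why this cannot be upgraded: Proposition \ref{P4.1} compares at arguments displaced by $O(\sqrt{n\log n})$ and so cannot resolve an $N^{-1/3}$ term, and the Cauchy-integral route would need uniformity of Proposition \ref{P1} off the real axis, which is not available.

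Keep one caveat front and centre: the CLT matching presupposes that $\Pr(l_N^\square\le l)$ admits a smooth correction of order $N^{-\alpha}$, uniformly over the Poisson window $|N-z^2|=O(z)$, which is precisely the content of the conjecture; so your argument is motivation, not evidence of a different kind from the paper's. Its real added value is the by-product prediction $F_{2,1}(t)=2^{-2/3}F_{2,1}^{\rm H}(t)-\tfrac12 F_{2,0}''(t)$, which goes beyond the paper (there $F_{2,1}$ is left known only graphically). Since both terms on the right are computable from Propositions \ref{P1} and \ref{p:Corr2DE} together with \eqref{1.1f}, the natural next step is to overlay this curve on the $\delta_2(t)$ data of Figure \ref{f:CompDelta2}: agreement would materially strengthen the conjecture, and is something your approach buys that the paper's purely empirical one does not.
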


\begin{remark}
With $\sqrt{N}$ identified as $z$ the expansion (\ref{1.1iXa}) is consistent with (\ref{1.1m}).
\end{remark}

\subsection{Data from the Painlev\'e characterisation}\label{S4.2}

\begin{figure}[b]
\centering
     \begin{minipage}{0.45\textwidth}
	\begin{tikzpicture}[scale=1, every node/.style={transform shape}]
         \node[inner sep=0pt] at (0,0) {\includegraphics[height=4cm, align=t]{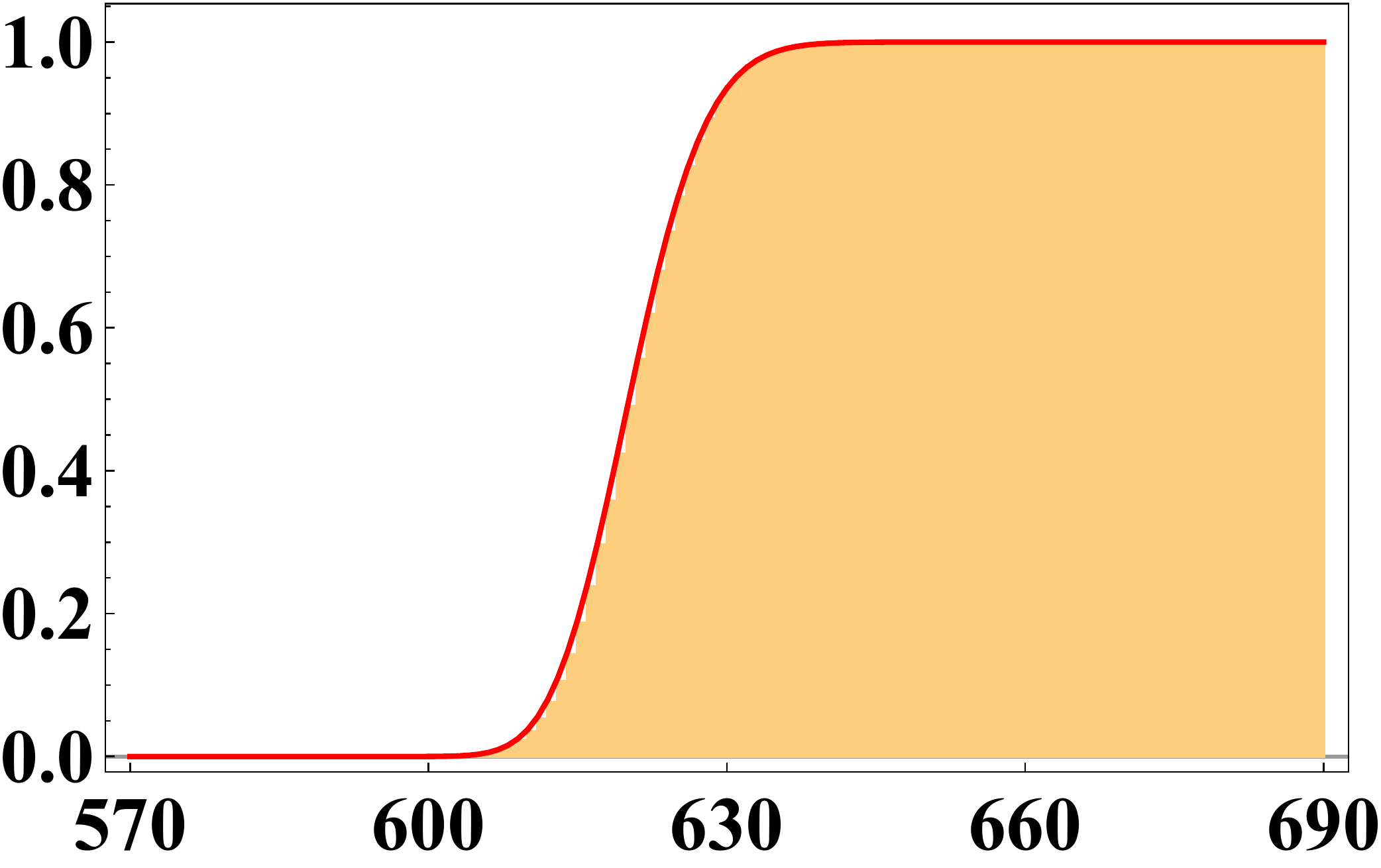}};
         \node at (3.5,-1.7) {\textbf{\textit{l}}};
         \node at (0.3,2.4) {$\Pr\big( l^{\Box}_{10^5} \leq l\big)$};
         \end{tikzpicture}
	\end{minipage} \quad
	\begin{minipage}{0.45\textwidth}
	\begin{tikzpicture}[scale=1, every node/.style={transform shape}]
         \node[inner sep=0pt] at (0,0) {\includegraphics[height=4.17cm, align=t]{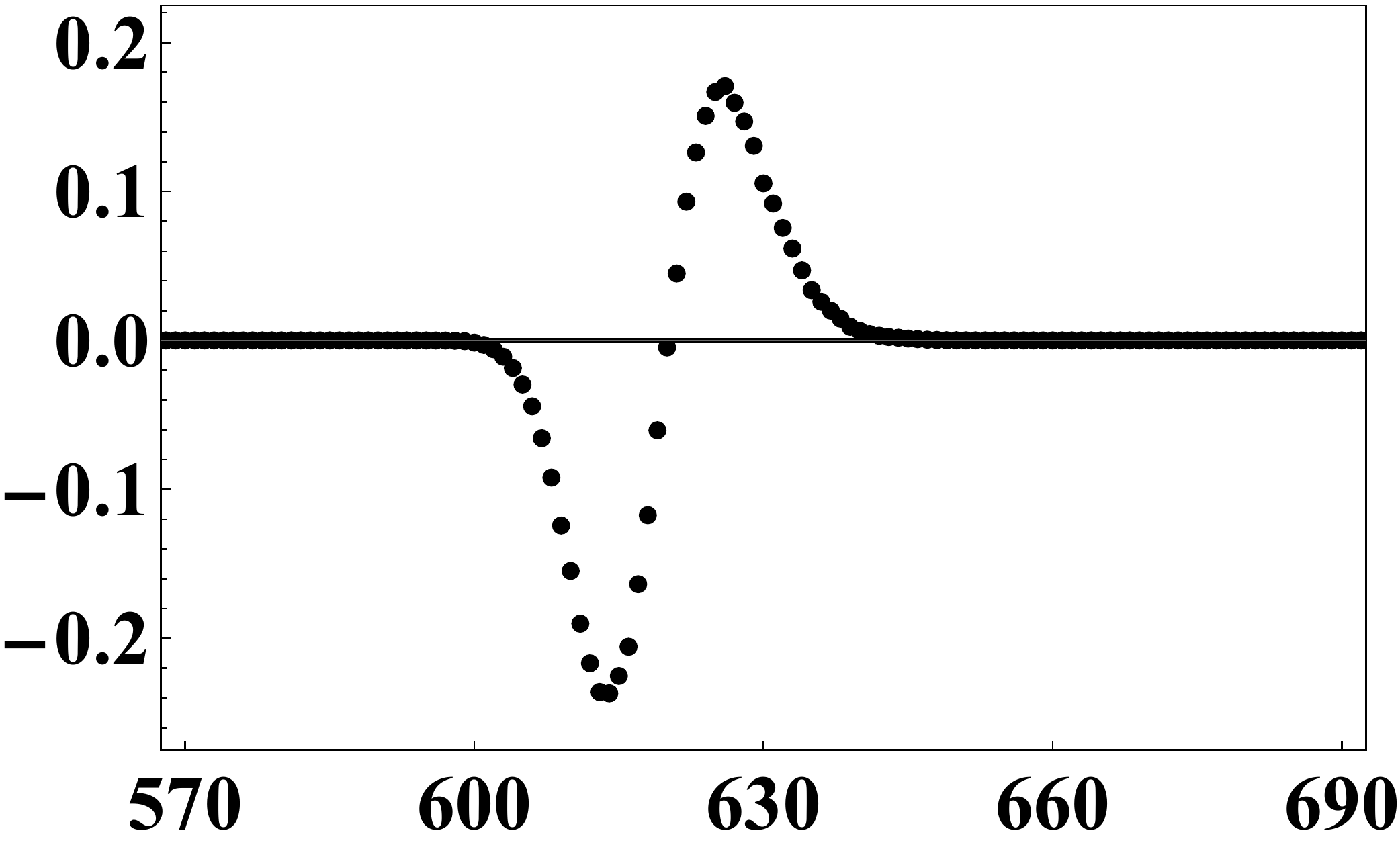}};
         \node at (3.7,-1.75) {\textbf{\textit{l}}};
         \node at (0.3,2.3) {$\delta_2 (l)$};
         \end{tikzpicture}
	\end{minipage}
        \caption{Here we have the analogous plots as those in Figure \ref{f:AsymmLISn700}, only now with $5\times 10^6$ samples of random permutations with length $N= 10^5$, again with the limiting CDF given by the second term in (\ref{1.1h}) [red curve]. On the right is plotted an approximation to the quantity $\delta_2 (l)$ from \eqref{SC}, where we estimate $\Pr \left( l^{\Box}_{N} \leq l \right)$ by using the empirical CDF from the left panel.}
        \label{f:AsymmLISnLarge}
\end{figure}

Use of (\ref{e:E2hardDEa}) in (\ref{e:lBoxE2Hard}) and recalling (\ref{1.1j}) tells us that
\begin{align}\label{4.5}
\sum_{N=0}^{\infty} \frac{z^{2N} }{N!} \Pr \left( l_N^{\Box} \leq l \right) = e^{z^2} \exp \left( \int_0^{4z^2} \frac{v (r; l)}{r} dr \right)=: G^{\Box} (z; l),
\end{align}
where $v(r;l)$ is the solution of the particular $\sigma$-PIII$'$ equation specified by (\ref{e:vDE}).
This shows that if we expand $G^{\Box}(z; l)$ in powers of $z^2$
\begin{align}
\label{e:Gtlexp} G^{\Box}(z;l)= \sum_{N=0}^{\infty} z^{2N} c_N^{\Box} (l), \qquad c_N^{\Box} (l) N! := \Pr \left( l_N^{\Box} \leq l \right),
\end{align}
then we have a practical method to compute $\{  \Pr \left( l_N^{\Box} \leq l \right) \}$.
Thus our approach is to use the characterisation (\ref{e:vDE}) to carry out the series expansion
\begin{align}
v (r; l) = r^{l+1} \sum_{k=0}^M a_k^{\Box}(l)  r^k,
\end{align}
up to some cutoff $M$. We were able to carry out the computation for $M=700$, allowing for the computation
of the CDF for all $l_N^{\Box} $ up to $N = 700$. 
The data for this quantity can be stored as exact integers, by multiplying each of the probabilities by $N!$;
see \cite[Table 2--4]{OR00} for some examples, with the largest value of $N$ there being $N = 60$.
Recall Figure \ref{f:AsymmLISn700}, where on the left we displayed the data for the case $N = 700$ in a graphical form --- the histogram is the empirical CDF while the black dots are calculated using the $c_N^{\Box} (l)$ from \eqref{e:Gtlexp}. On the right of the figure we plotted the difference \eqref{1.1h}. Multiplying this difference by the conjectured order of the correction term in \eqref{1.1iXa} we obtain the scaled difference
\begin{equation}\label{SC}
 \delta_2 (l) := N^{1/3} \left  [ \Pr \left( l^{\Box}_{N} \leq l \right) - E_2^{\rmsoft} \left(0; \left(\frac{l -2 \sqrt{N}}{N^{1/6}}, \infty \right) \right) \right  ],
\end{equation}
which will allow us to compare the data to other values of $N$.

\subsection{Data from simulations}\label{S4.3}
For values of $N$ beyond $N = 700$ data can be generated by Monte Carlo simulations. The C code used to generate samples of $l_N^{\Box}$ was given to the authors by Eric Rains, based on the code used for the simulations in \cite{OR00}, which uses the algorithm of \cite{BB68}. The most expensive part of the code is the pseudo random number generation, for which Rains' code uses a Marsaglia-style multiply-with-carry bit-shifting algorithm. To generate the value of $l_N^{\Box}$ from $5 \times 10^6$ trials with $N = 10^5$ took approximately $51,000$ seconds. Without this code, an alternative method to generate the simulation data is to simply use the inbuilt Mathematica command {\tt LongestOrderedSequence}, which has comparable runtime for this value of $N$.

In Figure \ref{f:AsymmLISnLarge} we display the data in a graphical form, along with the estimate of the scaled difference $\delta_2 (l)$ from (\ref{SC}), where $\Pr \left( l^{\Box}_{N} \leq l \right)$ is taken to be the empirical CDF. In Figure \ref{f:CompDelta2} we compare $\delta_2 (t)$ with $N=700, \, 20000$ and $10^5$, where we have rescaled $t= (l- 2\sqrt{N})/N^{1/6}$ --- the agreement in the plots suggests that $N^{-1/3}$ is indeed the correct order of the next-to-leading term in \eqref{1.1iXa}.

\begin{figure}
\centering

\begin{minipage}{0.7\textwidth}
	\begin{tikzpicture}[scale=0.8, every node/.style={transform shape}]
         \node[inner sep=0pt] at (0,0) {\includegraphics[width=\textwidth, align=t]{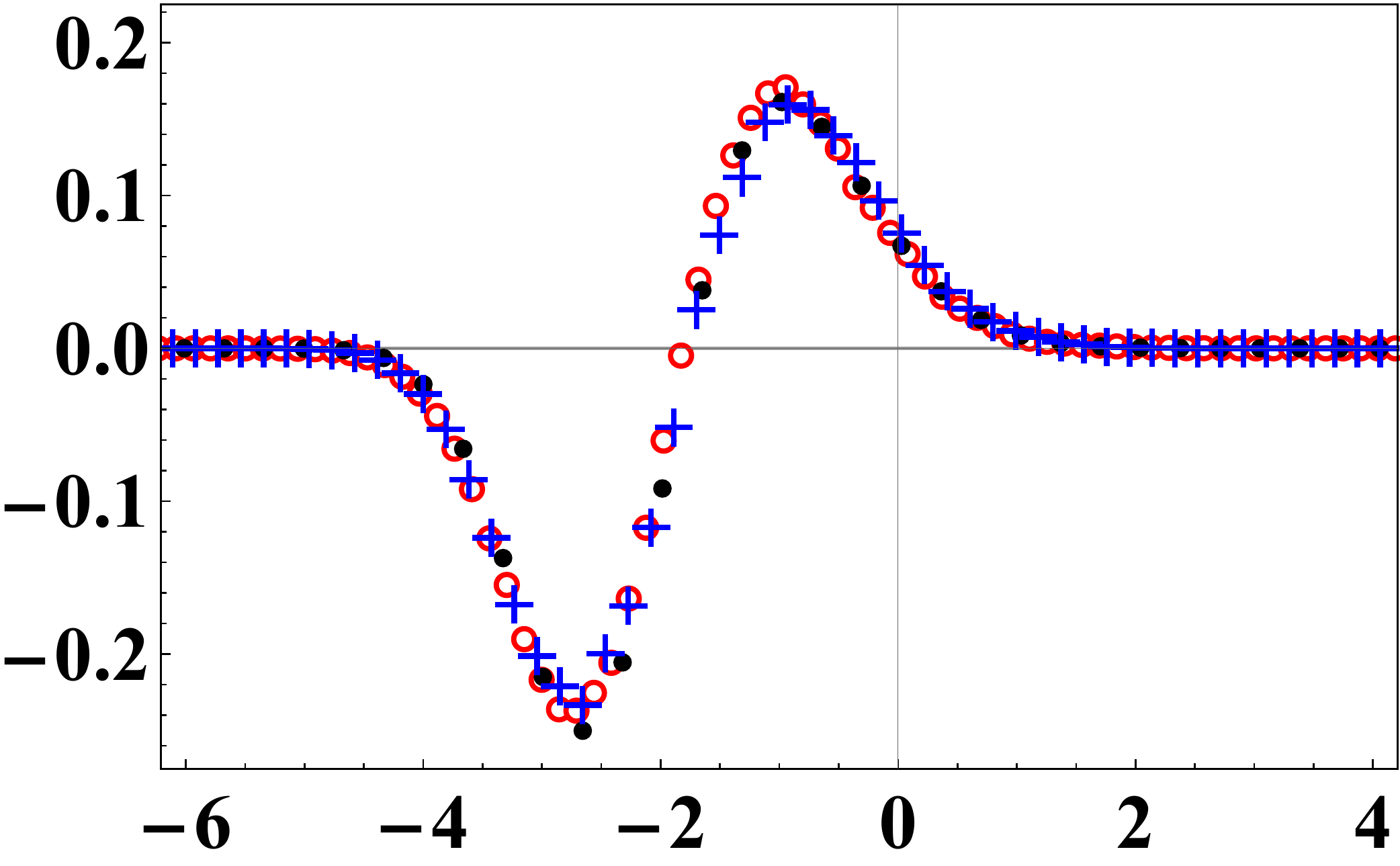}};
         \node at (5.8,-2.8) {\Large \textbf{\textit{t}}};
         \node at (0.6,3.8) {\Large $\delta_2 (t)$};
         \end{tikzpicture}
	\end{minipage}
	\caption{Comparison of $\delta_2 (t)$ from \eqref{SC} with $N=700$ [black dots], $20000$ [blue crosses] and $10^5$ [red circles], where the horizontal axis has been rescaled by $t= (l- 2\sqrt{N})/N^{1/6}$.}
\label{f:CompDelta2}
\end{figure}

\subsection{Large $N$ expansion of the mean and variance of $l^\Box_N$}

We turn our attention now to the large $N$ form of the mean and variance.
From the leading term in (\ref{1.1iXa}) it follows that for large $N$ \cite{BDJ98}
\begin{align}
\label{e:limExpVal2} \bbE [l^{\Box}_N]  \mathop{\sim}\limits_{N\to \infty} 2 \sqrt{N} +m_2^{(1)} N^{1/6} + \cdots , \quad  m_{2}^{(1)} \approx -1.771086807,
\end{align}
where, with ${d F_2(r) \over dr} = {d \over dr} E_2^{\rm soft}(0;(r,\infty))$,
\begin{align}
m_{2}^{(k)}:= \int_{-\infty}^{\infty} r^k d F_{2} (r),
\end{align}
and the numerical value follows from a computation based on (\ref{1.1f}) in \cite{TW94a}. On the other hand the correction term in (\ref{1.1iXa}) does not immediately reveal information about higher order terms in (\ref{e:limExpVal2}),
the reason being that $l^{\Box}_N$ is a discrete quantity, while the right hand side of \eqref{1.1iXa} corresponds to rescaling and smoothing of the discrete distribution. This is similarly true of the variance, for which the 
limit theorem (\ref{1.1e}) gives that
\begin{align}
 {\rm Var} [ l^{\Box}_N]    \mathop{\sim}\limits_{N\to \infty} \Big( m_2^{(2)} - (m_2^{(1)})^2 \Big) N^{1/3}, \qquad m_2^{(2)} - (m_2^{(1)})^2 \approx 0.81319,
 \end{align}
with the nature of higher order terms not immediately determined by the  correction term in (\ref{1.1iXa}). Our data can be used to investigate the corrections to $\bbE [l^{\Box}_N]$ and ${\rm Var} [ l^{\Box}_N]$ at a numerical level.

For this purpose, we note from elementary probability theory that
\begin{align}
\label{e:ExpValcnell} \bbE [l^{\Box}_N] = \sum_{k=0}^{N-1} (k+1) \Big ( {\rm Pr} (l^{\Box}_N \leq k + 1) - {\rm Pr} (l^{\Box}_N \leq k)  \Big ) = 
\sum_{k=0}^N \Big( 1- \Pr(l^{\Box}_N \leq k)\Big)
\end{align}
and
\begin{align}
 {\rm Var} [ l^{\Box}_N]  & =  \sum_{k=0}^{N-1} (k+1)^2 \Big ( {\rm Pr} (l^{\Box}_N \leq k + 1) - {\rm Pr} (l^{\Box}_N \leq k ) \Big ) 
- \Big (  \bbE [l^{\Box}_N] \Big )^2  \nonumber \\ & = 
 \label{Var1} 1+ \sum_{k=1}^N  (2k + 1) \Big( 1- \Pr(l^{\Box}_N \leq k)\Big) -  \Big (\bbE [l^{\Box}_N] \Big )^2.
\end{align}
From the theory of Section \ref{S4.2}, for $N$ up to $700$ we have exact knowledge of $\{ {\rm Pr}(l^{\Box}_N \le k) \}$.
Already a consequence of these distributions being discrete shows itself.
Thus if we approximate the CDF $\Pr (l_N^{\Box} \leq l)$ by the limiting expression $E_2^{\rmsoft} \Big( 0; \big( (l-2\sqrt{N})/N^{1/6}, \infty \big) \Big)$, and substitute this into \eqref{e:ExpValcnell} and \eqref{Var1} for the expected mean and variance, which we denote $\bbE_{\infty} [l_N^{\Box}]$ and $\Var_{\infty} [l_N^{\Box}]$ respectively, then we obtain the numerical estimates
\begin{align}
\label{e:Einfty} \bbE_{\infty} [l^{\Box}_N]  - \Big (  2 \sqrt{N} +m_2^{(1)} N^{1/6}  \Big ) &\mathop{\to}\limits_{N \to \infty} \frac{1}{2}\\
\label{e:Vinfty} \Var_{\infty} [l^{\Box}_N]  -\Big( m_2^{(2)} - ( m_2^{(1)})^2 \Big) N^{1/3} &\mathop{\to}\limits_{N \to \infty} \frac{1}{12}.
\end{align}
We recognise the values $1/2$ and $1/12$  as the mean and variance of the continuous uniform distribution on $[0,1]$.

Tabulating the quantities
\begin{align}\label{4.15}
\hat{\mu}_2(N)  :=  \bbE [l^{\Box}_N]  - \Big (  2 \sqrt{N} +m_2^{(1)} N^{1/6}  \Big ), \quad
 \hat{\sigma}^2_2(N)  := {\rm Var} [ l^{\Box}_N] - \Big( m_2^{(2)} - (m_2^{(1)})^2 \Big) N^{1/3},
\end{align}
leads us to believe that for large $N$ both these quantities are of order unity. Making an ansatz
$\hat{\mu}_2 (N) = c + d N^{-\alpha}$ and choosing between $\alpha = 1/6$ or $1/3$ as suggested by their
appearance already in this problem, we found that the choice $\alpha = 1/3$ gives the better fit.
Notice that the latter exponent is precisely the one appearing in Conjecture \ref{C1} for the CDF.
Performing a least squares analysis from our tabulation with $N$ from 10 up to 700 then gives
\begin{align}
\hat{\mu}_2(N)  \approx 0.5065 + {0.222 \over N^{1/3}}, \qquad
\label{c:musigma} \hat{\sigma}^2_2(N)  \approx -1.206 + {0.545 \over N^{1/3}}.
\end{align}
In keeping with \eqref{e:Einfty}, we expect the value $0.5065$ in relation to $\hat{\mu}_2(N)$ is exactly $1/2$. Note that the value $-1.206$, being distinct from $1/12$ in \eqref{e:Vinfty}, can be understood as being due to the square of the mean occurring in \eqref{Var1}. This gives a mechanism for the coupling of terms decaying in $N$ in the expansion of the mean, with terms that increase, which are not taken into consideration in deriving \eqref{e:Vinfty}.

In Figures \ref{f:muhat} and \ref{f:sigmahat} we plot $\hat{\mu}_2 (N)$ and $\hat{\sigma}_2^2(N)$ respectively, along with the conjectures in \eqref{c:musigma}. In the right panel of each we also plot the differences
\begin{align}
\hat{\mu}_2(N)- \left( 0.5065 + {0.222 \over N^{1/3}} \right), \qquad
\label{e:sigmahatdiff} \hat{\sigma}^2_2 (N)- \left( -1.206 + {0.545 \over N^{1/3}} \right).
\end{align}

\begin{figure}
\centering

	\begin{minipage}{0.45\textwidth}
	\begin{tikzpicture}[scale=1, every node/.style={transform shape}]
         \node[inner sep=0pt] at (0,0) {\includegraphics[height=4.05cm, align=t]{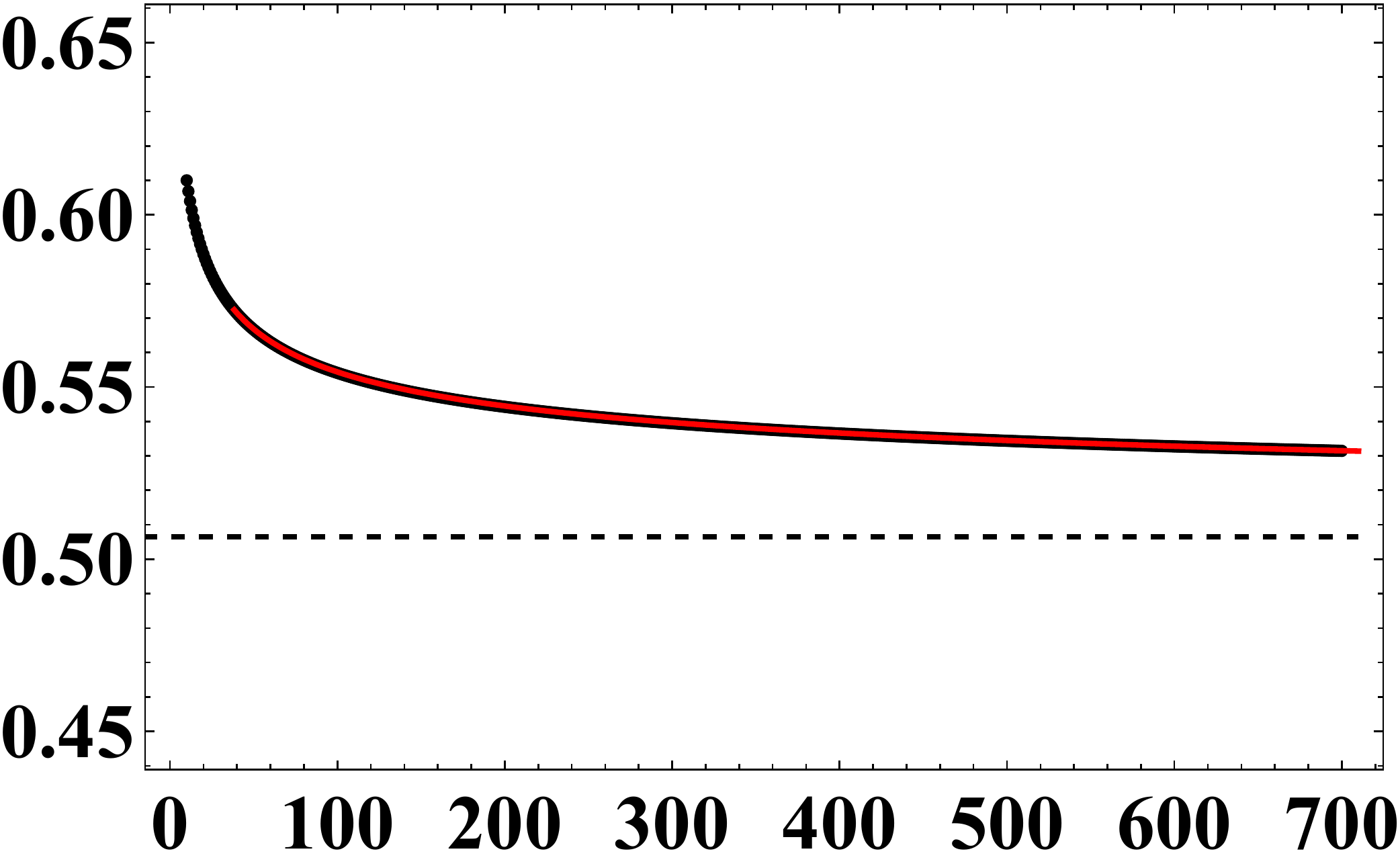}};
         \node at (3.6,-1.7) {\textbf{\textit{N}}};
         \node at (0.3,2.4) {$\hat{\mu}_2 (N)$};
         \end{tikzpicture}
	\end{minipage} \quad
	\begin{minipage}{0.45\textwidth}
	\begin{tikzpicture}[scale=1, every node/.style={transform shape}]
         \node[inner sep=0pt] at (0,0) {\includegraphics[height=4cm, align=t]{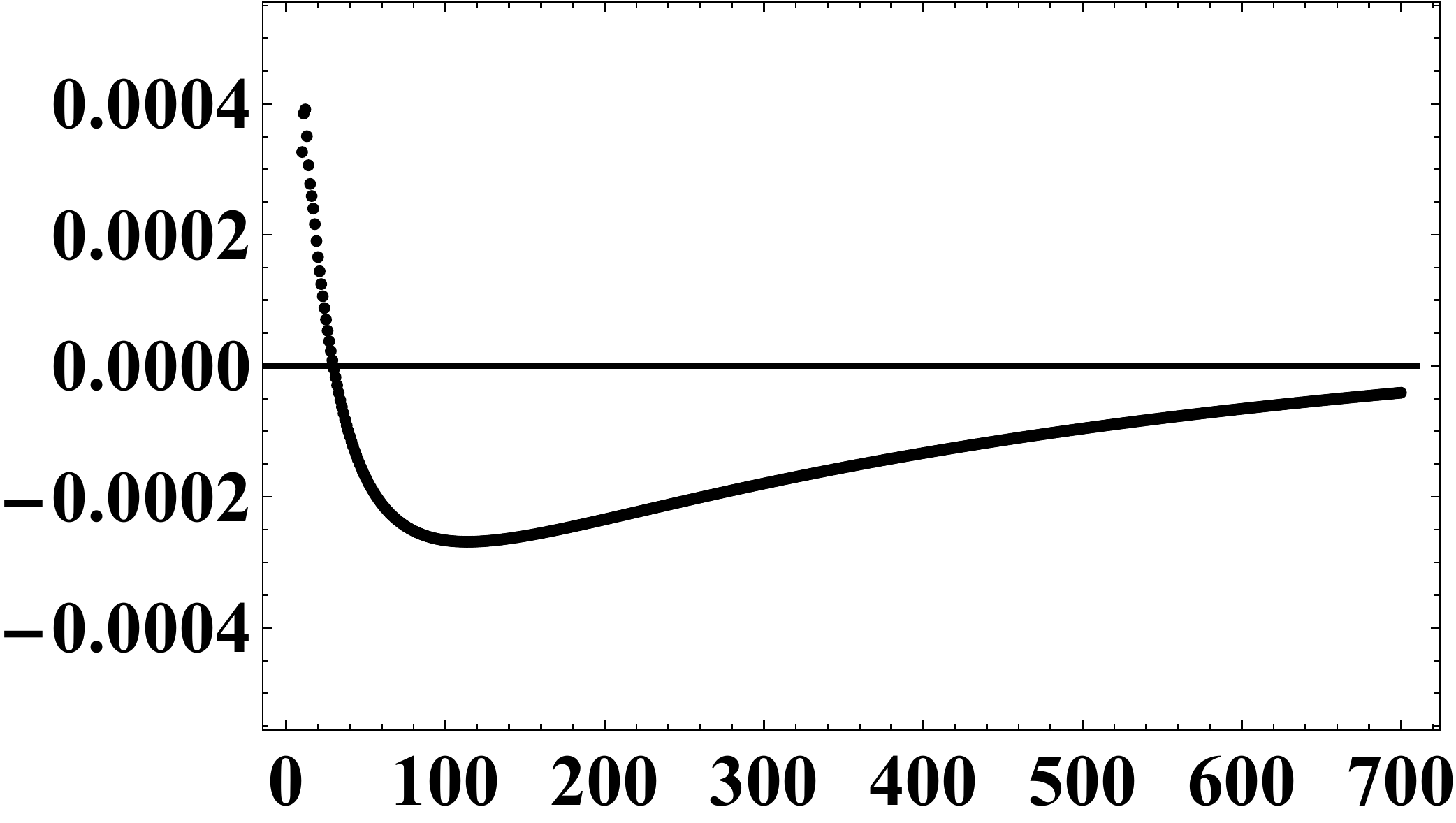}};
         \node at (3.85,-1.75) {\textbf{\textit{N}}};
         \node at (0.3,2.3) {Difference \eqref{e:sigmahatdiff}};
         \end{tikzpicture}
	\end{minipage}
	\caption{On the left we have the exact values of $\hat{\mu}_2 (N)$ from \eqref{4.15} for $N=10, 11, \dots, 700$ calculated using \eqref{e:ExpValcnell} [black dots] compared to the conjectured form in \eqref{c:musigma} [red line]. The difference between these, the quantity \eqref{e:sigmahatdiff}, is plotted on the right.}
	\label{f:muhat}
\end{figure}

\begin{figure}
\centering

	\begin{minipage}{0.45\textwidth}
	\begin{tikzpicture}[scale=1, every node/.style={transform shape}]
         \node[inner sep=0pt] at (0,0) {\includegraphics[height=4.05cm, align=t]{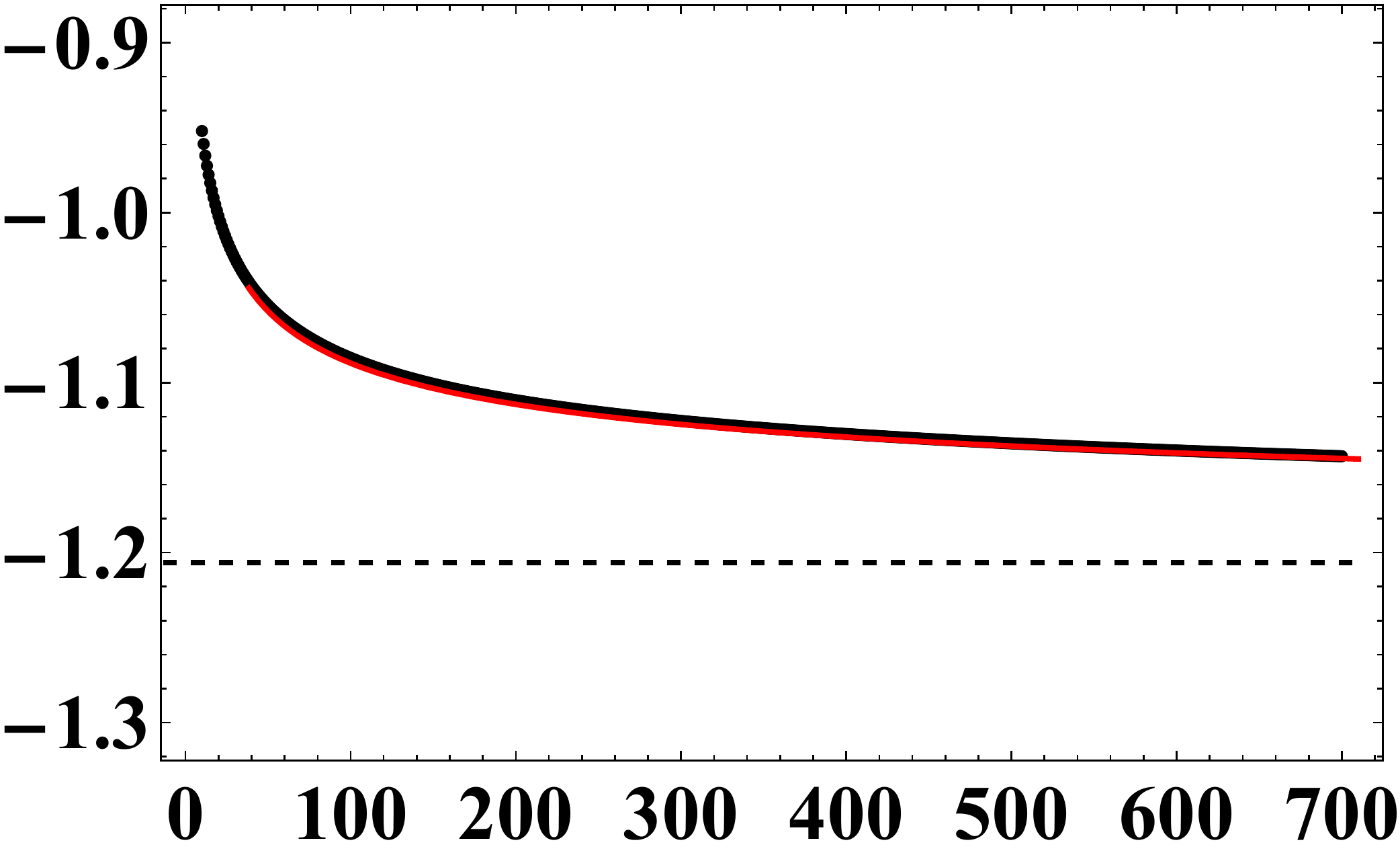}};
         \node at (3.6,-1.7) {\textbf{\textit{N}}};
         \node at (0.3,2.4) {$\hat{\sigma}^2_2 (N)$};
         \end{tikzpicture}
	\end{minipage} \quad
	\begin{minipage}{0.45\textwidth}
	\begin{tikzpicture}[scale=1, every node/.style={transform shape}]
         \node[inner sep=0pt] at (0,0) {\includegraphics[height=4.1cm, align=t]{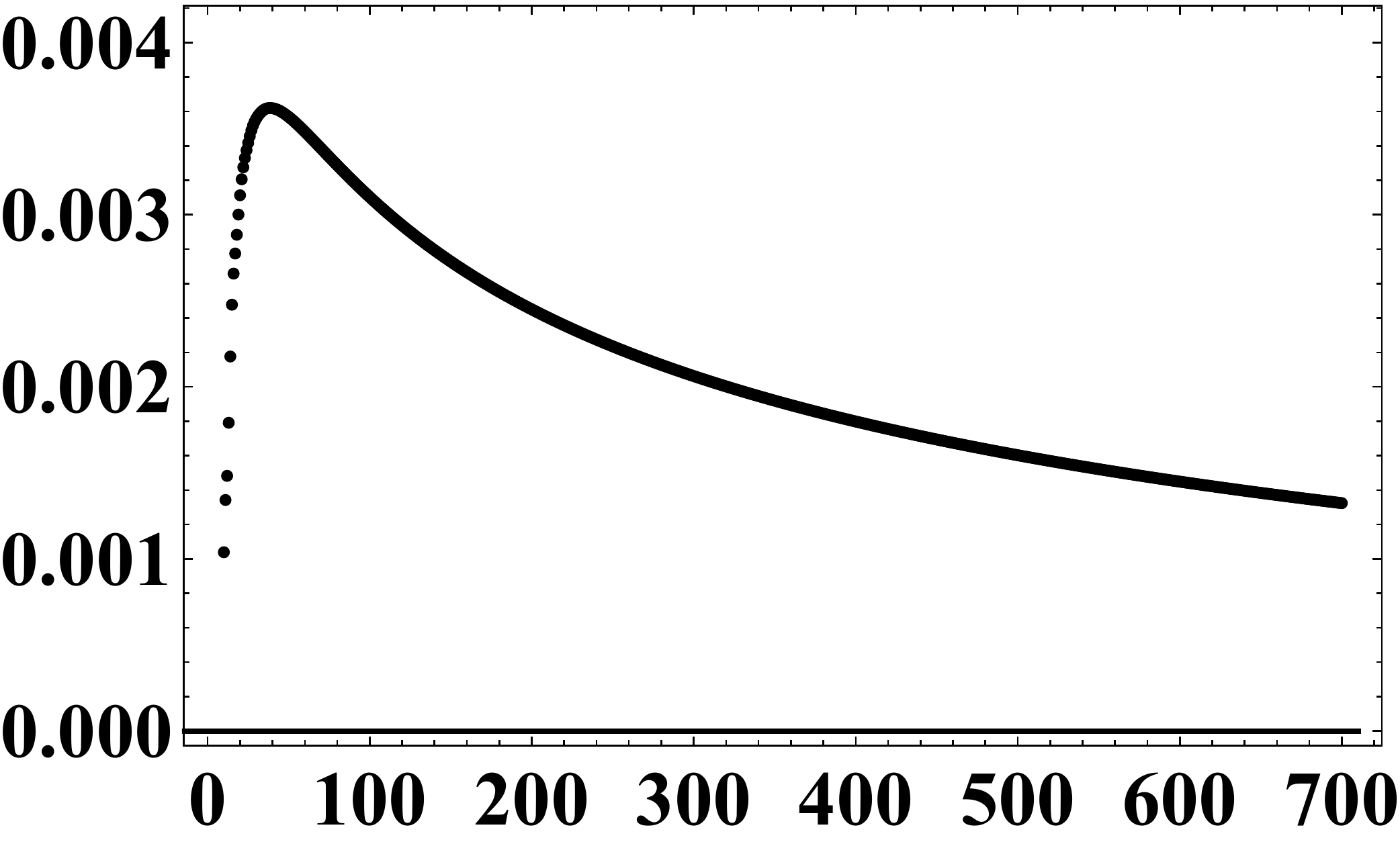}};
         \node at (3.7,-1.75) {\textbf{\textit{N}}};
         \node at (0.3,2.3) {Difference \eqref{e:sigmahatdiff}};
         \end{tikzpicture}
	\end{minipage}
	\caption{As in Figure \ref{f:muhat}, on the left we have the exact values of $\hat{\sigma}^2_2 (N)$ from \eqref{4.15} for $N=10, 11, \dots, 700$ calculated using \eqref{Var1} [black dots] compared to the conjectured form in \eqref{c:musigma} [red line]. On the right is plotted the difference \eqref{e:sigmahatdiff}.}
	\label{f:sigmahat}
\end{figure}

\subsection{The quantities  ${\rm Pr}\Big ( {\linc_N  - 2 \sqrt{N} \over N^{1/6}}  \le t \Big )$ and 
${\rm Pr} \Big ( {\ldec_N  - 2 \sqrt{N} \over N^{1/6}}  \le t \Big )$}

Analogous to (\ref{4.5}) we have
\begin{multline}
 \sum_{N=0}^{\infty} \frac{z^N}{N!!} {\rm Pr} \Big(l^{\boxslash}_{N} \leq l \Big) \\ 
=e^{z^2/2}
\label{e:Ginc}  \exp \left( \frac{1}{2} \int_0^{4z^2} \frac{v (r;l-1)}{r} dr \right) \cosh \left(- \frac{1}{4} \int_0^{4z^2} \frac{p_{\rmhard} (r; l-1)}{\sqrt{r}} dr \right) :=  G^{\boxslash} (z;l) 
  \end{multline}
and
\begin{multline} 
  \sum_{N=0}^{\infty} \frac{z^N}{N!!} {\rm Pr} \Big(l^{\boxbslash}_{N} \leq 2l \Big) \\  
   = e^{z^2/2} \exp \left( \frac{1}{2} \int_0^{4z^2} \frac{v (r; 2l+1)}{r} dr \right) \exp \left(- \frac{1}{4} \int_0^{4z^2} \label{e:Gdec} \frac{p_{\rmhard} (r; 2l+1)}{\sqrt{r}} dr \right) :=  G^{\boxbslash} (z;l) 
  \end{multline}
  These follow from (\ref{e:lBoxE4Hard}),  (\ref{e:lBoxE1Hard}), (\ref{e:E1hardDE}) and  (\ref{e:E4hardDE}); for example see \cite[\S 10.7]{Fo10}.  Hence
 \begin{align}
\label{G1}   G^{\boxslash} (z;l)    = \sum_{N=0}^{\infty} z^{2N} c_N^{\boxslash} (l), \qquad c_N^{\boxslash} (l) N!! := \Pr \left( l_N^{\boxslash} \leq l \right),
\end{align}
and 
 \begin{align}
\label{G2}   G^{\boxbslash} (z;l)    = \sum_{N=0}^{\infty} z^{2N} c_N^{ \boxbslash } (l), \qquad c_N^{\boxbslash} (2l) N!! := \Pr \left( l_N^{\boxbslash} \leq 2l \right),
\end{align}
We now proceed as detailed in Section \ref{S4.2}, which provides us with the exact values of $\{ c_N^{\boxbslash} (l) \}$ and $\{ c_N^{\boxslash} (l) \}$ for $N$ up to $400$. That is, we find a series solution of degree $400$ to the differential equation in \eqref{e:E2hardDEb}, and use it (along with the $v(r;l)$ from Section \ref{S4.2}) to expand \eqref{e:Ginc} and \eqref{e:Gdec} in powers of $z$.

In Figures \ref{f:SymmLDSn400} and \ref{f:SymmLISn400} we display the cases $N = 400$ in graphical form, along with the scaled differences
\begin{align}
\label{d:SymmLDSerror}   \delta_1 (l) := N^{1/3} \left [  \Pr \left( \ldec_{N} \leq l \right) - E_1^{\rmsoft} \left(0; \left(\frac{l + 1  -2 \sqrt{N}}{N^{1/6}}, \infty \right) \right) \right ],\\
\label{d:SymmLISerror}  \delta_4 (l) := N^{1/3} \left [  \Pr \left( \linc_{N} \leq l \right) - \tilde{E}_4^{\rmsoft} \left(0; \left(\frac{l  - 1 -2 \sqrt{N}}{N^{1/6}}, \infty \right) \right) \right ].
\end{align}


\begin{figure}
\centering
     \begin{minipage}{0.45\textwidth}
	\begin{tikzpicture}[scale=1, every node/.style={transform shape}]
         \node[inner sep=0pt] at (0,0) {\includegraphics[height=4cm, align=t]{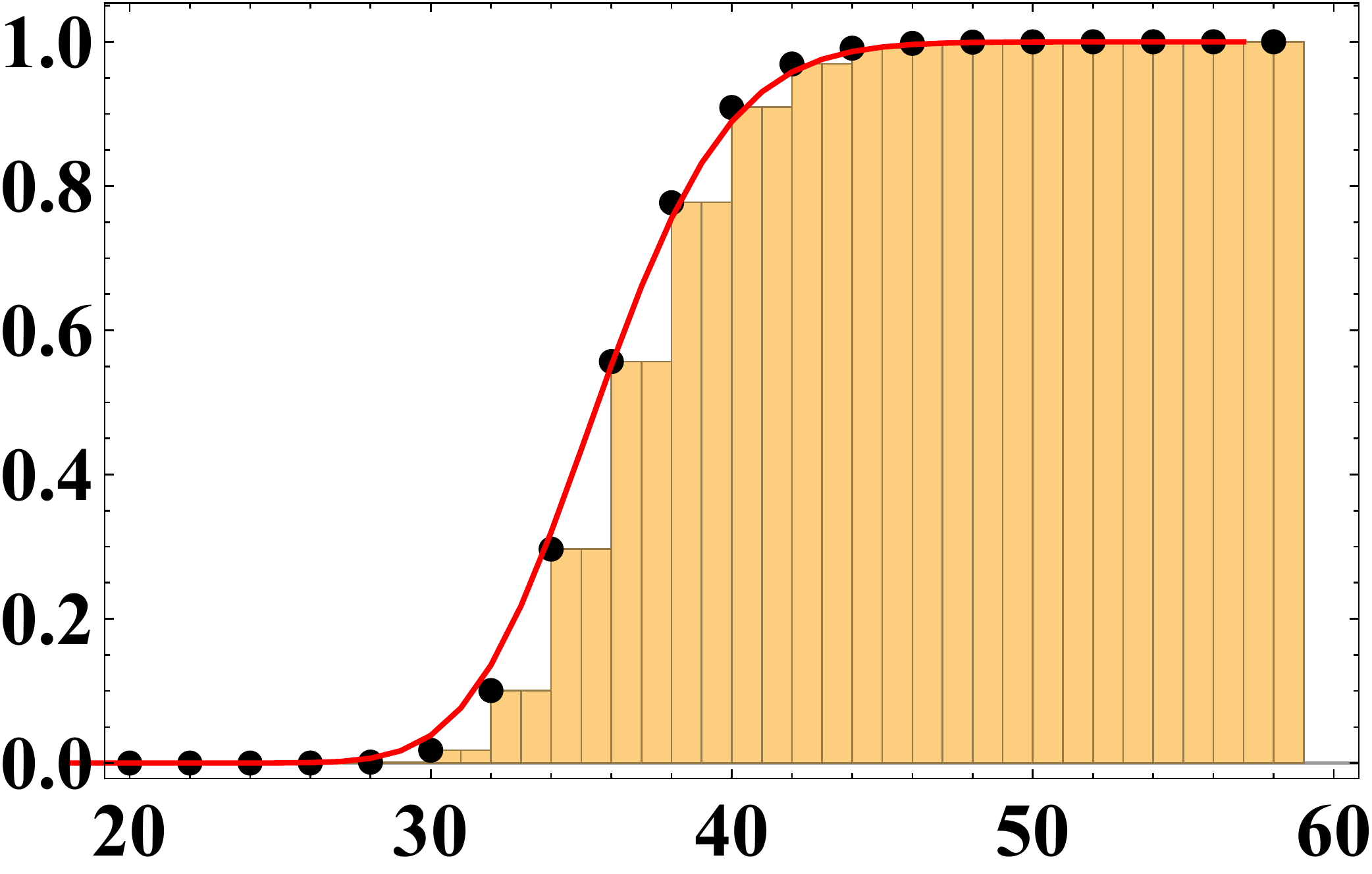}};
         \node at (3.5,-1.7) {\textbf{\textit{l}}};
         \node at (0.3,2.4) {$\Pr\big( l^{\boxbslash}_{400} \leq l\big)$};
         \end{tikzpicture}
	\end{minipage} \quad
	\begin{minipage}{0.45\textwidth}
	\begin{tikzpicture}[scale=1, every node/.style={transform shape}]
         \node[inner sep=0pt] at (0,0) {\includegraphics[height=4cm, align=t]{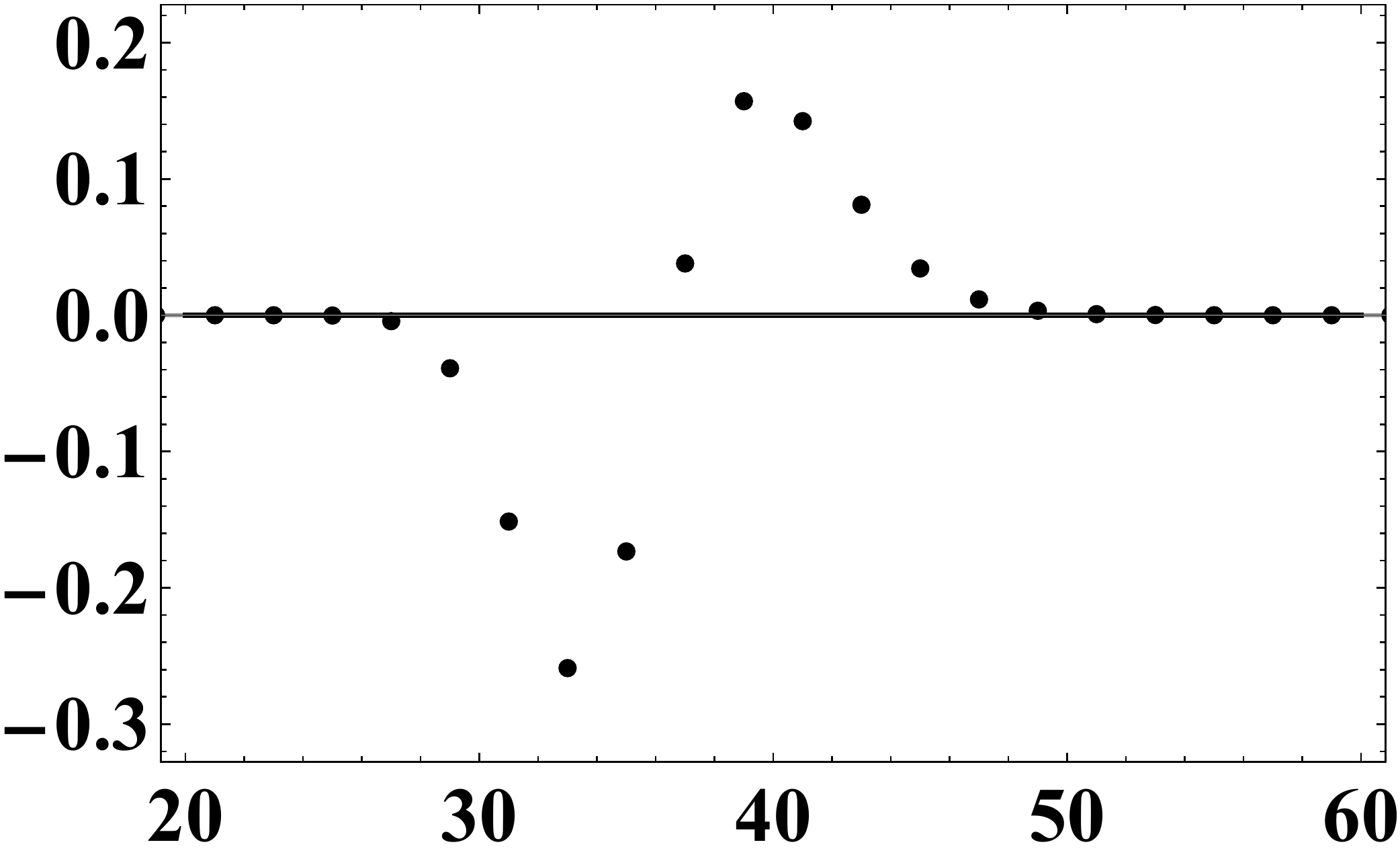}};
         \node at (3.7,-1.75) {\textbf{\textit{l}}};
         \node at (0.3,2.3) {$\delta_1 (l)$};
         \end{tikzpicture}
	\end{minipage}
        \caption{On the left we have the empirical CDF of the longest decreasing subsequences of 1,000,000 random permutations (which consist entirely of two cycles) of length $N= 400$ along with the calculation of the exact CDF using $c_{400}^{\boxbslash} (l)$ in \eqref{G2} [black dots] and the limiting CDF given by the right hand side of (\ref{1.1o}) with $t= (l -1 -2\sqrt{N})/N^{1/6}$ [red curve]. On the right is plotted the quantity $\delta_1 (l)$ from \eqref{d:SymmLDSerror}.}
        \label{f:SymmLDSn400}
\end{figure}

\begin{figure}
\centering
     \begin{minipage}{0.45\textwidth}
	\begin{tikzpicture}[scale=1, every node/.style={transform shape}]
         \node[inner sep=0pt] at (0,0) {\includegraphics[height=4cm, align=t]{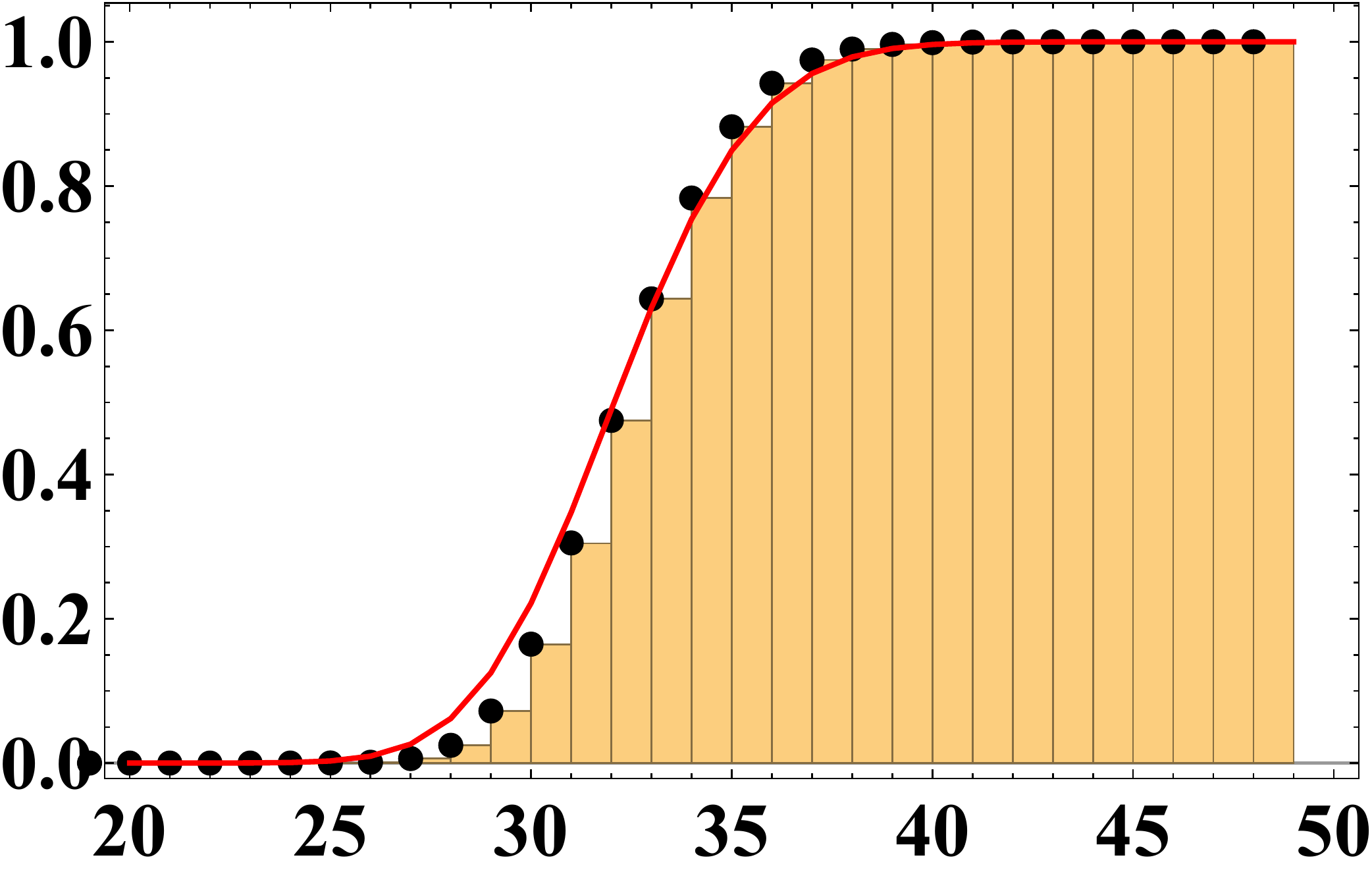}};
         \node at (3.5,-1.7) {\textbf{\textit{l}}};
         \node at (0.3,2.4) {$\Pr\big( l^{\boxslash}_{400} \leq l\big)$};
         \end{tikzpicture}
	\end{minipage} \quad
	\begin{minipage}{0.45\textwidth}
	\begin{tikzpicture}[scale=1, every node/.style={transform shape}]
         \node[inner sep=0pt] at (0,0) {\includegraphics[height=4cm, align=t]{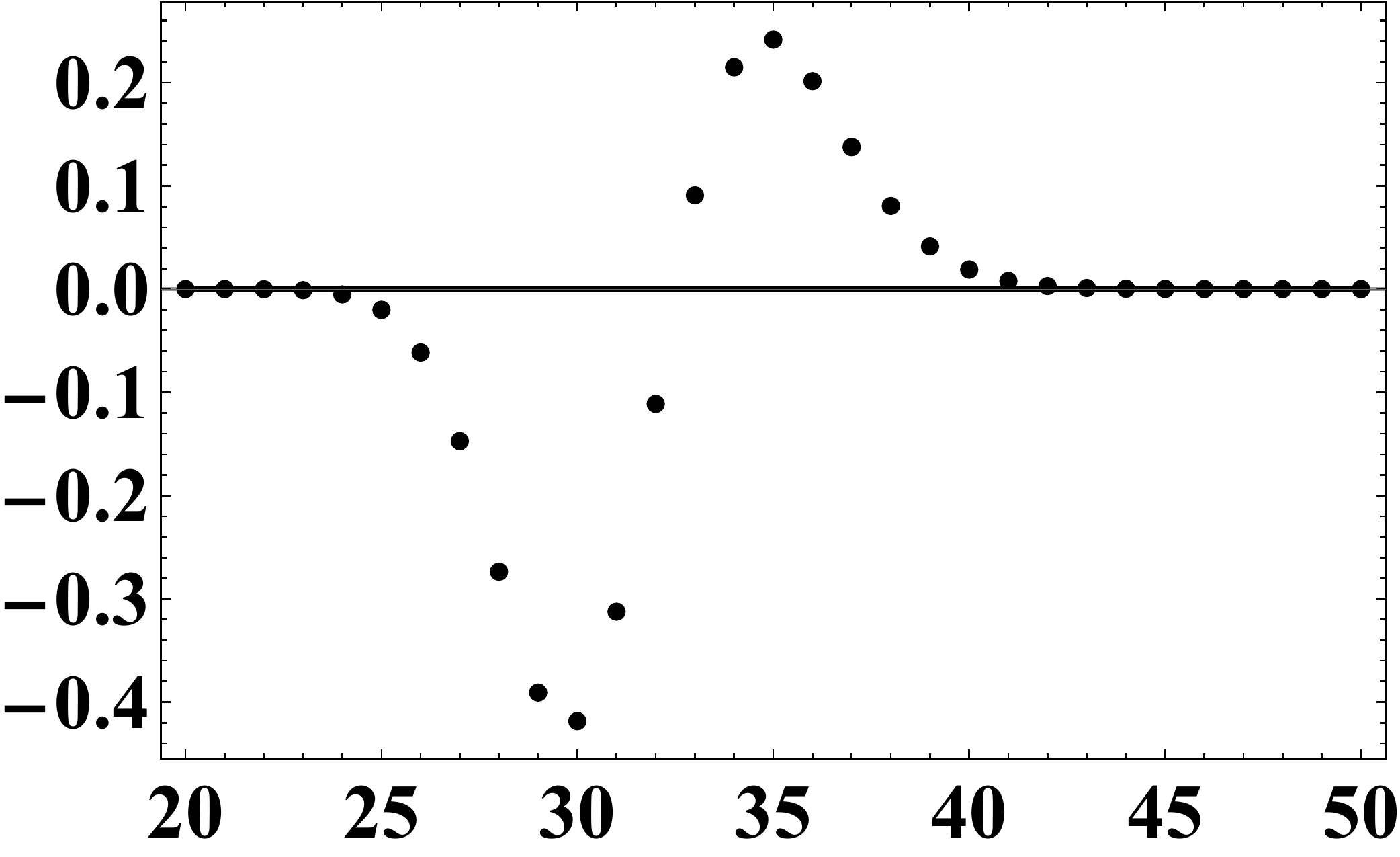}};
         \node at (3.7,-1.75) {\textbf{\textit{l}}};
         \node at (0.3,2.3) {$\delta_4 (l)$};
         \end{tikzpicture}
	\end{minipage}
        \caption{Here we have the plots analogous to those in Figure \ref{f:SymmLDSn400}, now counting the longest \textit{increasing} subsequences, where the limiting CDF is given by the right hand side of \eqref{1.1n} with $t= (l +1 -2\sqrt{N})/N^{1/6}$ [red curve]. The right panel is again the scaled difference between the red curve and the black dots, i.e. $\delta_4 (l)$ from \eqref{d:SymmLISerror}}
        \label{f:SymmLISn400}
\end{figure}

The exact values for $\Pr \Big(l^{\boxbslash}_{N} \leq l \Big)$ and $\Pr \Big(l^{\boxslash}_{N} \leq l \Big)$ can be supplemented by simulations as in Section \ref{S4.3}. For this we use the C++ code for generating self-inverse permutations from \cite{Arndt2011}, which samples the permutations uniformly for very large $N$. To find the longest increasing subsequence we use the C++ implementation from \cite{Sanaka2022} of an optimal algorithm. Plotting the scaled differences (\ref{d:SymmLDSerror}) and (\ref{d:SymmLISerror}) for $N=400, 20000$ and $10^5$, with the horizontal axis rescaled by $t= (l \pm 1- 2\sqrt{N})/N^{1/6}$ --- see Figure \ref{f:CompDelta14} --- gives evidence for the analogue of Conjecture \ref{C1}. 

\begin{conj}\label{C2}
Specify $t^*$ as in (\ref{1.1i}). Set $ F_{1,0}(t)  = E_1^{\rm soft}(0;(t,\infty))$. For some $ F_{1,1}(t)$ we  have
\begin{equation}\label{1.1iX1}
 {\rm Pr} \left ( {l_N^\boxbslash +1 - 2 \sqrt{N} \over N^{1/6}}  \le t \right ) =  F_{1,0}(t^*) +  {1\over N^{1/3}}   F_{1,1}(t) + \cdots
  \end{equation}
  Similarly, with $ F_{4,0}(t)  = \tilde{E}_4^{\rm soft}(0;(t,\infty))$, for some $ F_{4,1}(t)$ we  have
\begin{equation}\label{1.1iX4}
 {\rm Pr} \left ( {l_N^\boxslash -1 - 2 \sqrt{N} \over N^{1/6}}  \le t \right ) =  F_{4,0}(t^*) +  {1\over N^{1/3}}   F_{4,1}(t) + \cdots
  \end{equation}
\end{conj}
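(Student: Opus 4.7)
The natural route to Conjecture \ref{C2} is to de-Poissonise the large-$z$ expansions of Proposition \ref{P1a} and Corollary \ref{C3.2}. Under the identification $z \leftrightarrow \sqrt{N}$, the Poissonised correction of order $z^{-2/3}$ becomes one of order $N^{-1/3}$, which matches the conjectured exponent, so the content of the conjecture is that this matching persists after removing the Poisson smoothing and accounting for the integer-part shift $t \mapsto t^*$. Since the proofs of Propositions \ref{P1} and \ref{P1a} already identify the shifts $\pm 1$ inside $G^{\boxslash}$, $G^{\boxbslash}$ (inherited from $v(r;l-1), p_{\rm hard}(r;l-1)$ and $v(r;2l+1), p_{\rm hard}(r;2l+1)$) as the origin of the $\pm 1$ centring in \eqref{1.1iX1} and \eqref{1.1iX4}, the problem reduces to a sharp asymptotic analysis of the coefficients $c_N^{\boxslash}(l)$ and $c_N^{\boxbslash}(l)$.

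My plan is to fix $l = [2\sqrt{N} + tN^{1/6}]$ and apply Cauchy's formula to \eqref{G1}, \eqref{G2} to write
\begin{equation*}
c_N^{\boxslash}(l) = \frac{1}{2\pi i}\oint \frac{G^{\boxslash}(z;l)}{z^{2N+1}}\,dz, \qquad c_N^{\boxbslash}(l) = \frac{1}{2\pi i}\oint \frac{G^{\boxbslash}(z;l)}{z^{2N+1}}\,dz.
\end{equation*}
The Painlev\'e representations \eqref{e:Ginc}, \eqref{e:Gdec}, combined with the asymptotic work underlying Proposition \ref{P3.3} and Corollary \ref{C3.2}, present the integrands as $e^{z^2} z^{-(2N+1)}\bigl(F_{\beta,0}^{\rm H}(\tilde t) + z^{-2/3} F_{\beta,1}^{\rm H}(t) + O(z^{-4/3})\bigr)$ for $\beta = 1, 4$, uniformly in a suitable neighbourhood of the saddle $z = \sqrt{N}$. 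A steepest-descent analysis around this saddle then yields \eqref{1.1iX1} and \eqref{1.1iX4}, with $F_{\beta,1}(t)$ obtained from $F_{\beta,1}^{\rm H}(t)$ after absorbing the Gaussian fluctuations of the $e^{z^2}/z^{2N+1}$ factor and the discretisation correction $t \mapsto t^*$. An equivalent formulation replaces this contour analysis by a refined de-Poissonisation sharpening Proposition \ref{P4.1} to an asymptotic expansion $q_N = \phi(N) + N^{-1/3}\Psi(N) + O(N^{-2/3})$, in which $\Psi$ is determined by the second derivative of $\phi$ evaluated on the Poisson saddle, using the monotonicity and log-concavity of $N \mapsto {\rm Pr}(l_N^{\boxbslash} \le l)$.

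The hard part, and the reason the statement remains a conjecture, is controlling the integrand (equivalently, upgrading Johansson's bound) uniformly beyond leading order. The expansions in Sections 2 and 3 are sharp for $l = 2z + tz^{1/3}$ with $t$ in compact sets, but they degrade as $l/z$ moves away from $2$, so one must combine them with non-trivial off-saddle estimates on $G^{\boxslash}(z;l)$ and $G^{\boxbslash}(z;l)$. The cleanest route is a Riemann--Hilbert analysis of the coupled pair $(v(r;l), p_{\rm hard}(r;l))$ in the double-scaling regime $l, z \to \infty$ with $(l - 2z)/z^{1/3}$ fixed, from which uniform estimates on the integrands follow; this would also establish the analogous Conjecture \ref{C1} in the asymmetric case as a byproduct. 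The obstruction is that the Riemann--Hilbert problem for $p_{\rm hard}$ together with its coupling to $v$ through \eqref{e:E1hardDE}--\eqref{e:E4hardDE} is substantially more delicate than the pure $\beta = 2$ case handled in \cite{BJ13}, and making this coupling explicit at the level of subleading terms is where the genuine work lies.
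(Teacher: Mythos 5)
This statement is a conjecture, and the paper offers no proof of it: after noting (below Proposition \ref{P4.1}) that Johansson's de-Poissonisation lemma transfers only the leading order and that ``we don't know of any analytic approach to this question,'' the authors support \eqref{1.1iX1} and \eqref{1.1iX4} purely numerically --- exact coefficients $c_N^{\boxbslash}(l)$, $c_N^{\boxslash}(l)$ up to $N=400$ from the generating functions \eqref{e:Ginc}, \eqref{e:Gdec}, Monte Carlo data for $N=2\times 10^4$ and $10^5$, and the collapse of the rescaled differences $\delta_1(t)$, $\delta_4(t)$ in Figure \ref{f:CompDelta14}. Your submission is therefore not comparable to a ``paper proof''; it is a proposed analytic programme, and you are candid that it does not close. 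As a programme it is sensible and genuinely different in kind from what the paper does: the Cauchy-integral/saddle-point heuristic correctly predicts the $N^{-1/3}$ order, since the saddle of $e^{z^2}z^{-2N-1}$ at $z\approx\sqrt{N}$ has width ${\rm O}(1)$ in $z$, which perturbs $t=(l-2z)/z^{1/3}$ by ${\rm O}(N^{-1/6})$, and the first surviving contribution after the symmetric Gaussian integration is the ${\rm O}(N^{-1/3})$ term proportional to $F_{\beta,0}''(t)$.

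The concrete gaps, beyond the one you name, are these. First, the contour integral requires control of $G^{\boxslash}(z;l)$, $G^{\boxbslash}(z;l)$ for \emph{complex} $z$ on the whole circle $|z|=\sqrt{N}$, whereas everything in Sections 2--3 (the Painlev\'e boundary-value problems \eqref{e:vDE}, \eqref{e:E2hardDEb} and the Bessel asymptotics \eqref{2.1f}) is established only for real positive argument; this is a separate obstruction from the degradation as $l/z$ leaves $2$. Second, your alternative route via a refined de-Poissonisation faces a quantitative mismatch: the window $\sqrt{n\log n}$ in Proposition \ref{P4.1} corresponds to a shift of order $\sqrt{\log N}\,N^{-1/6}$ in the $t$ variable, which swamps the $N^{-1/3}$ correction you are trying to resolve, so ``sharpening'' the lemma is not a refinement of the existing argument but requires the full off-saddle estimates anyway. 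Third, the log-concavity of $N\mapsto{\rm Pr}(l_N^{\boxbslash}\le l)$ that you invoke is nowhere established, and even the applicability of the monotone de-Poissonisation to the involution generating functions (which are of $z^N/N!!$ type rather than $e^{-z^2}z^{2N}/N!$ type) needs a separate argument in the style of Baik--Rains. Finally, note that the conjecture as stated asserts the existence of \emph{some} $F_{1,1}$, $F_{4,1}$ with the leading term evaluated at the discretised $t^*$; your programme, if completed, would deliver strictly more, namely explicit formulas for these corrections in terms of $F_{1,1}^{\rm H}$, $F_{4,1}^{\rm H}$ and derivatives of $F_{\beta,0}$ --- which is precisely the information the paper can only exhibit graphically.
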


\begin{figure}
\centering

     \begin{minipage}{0.45\textwidth}
	\begin{tikzpicture}[scale=1, every node/.style={transform shape}]
         \node[inner sep=0pt] at (0,0) {\includegraphics[height=4cm, align=t]{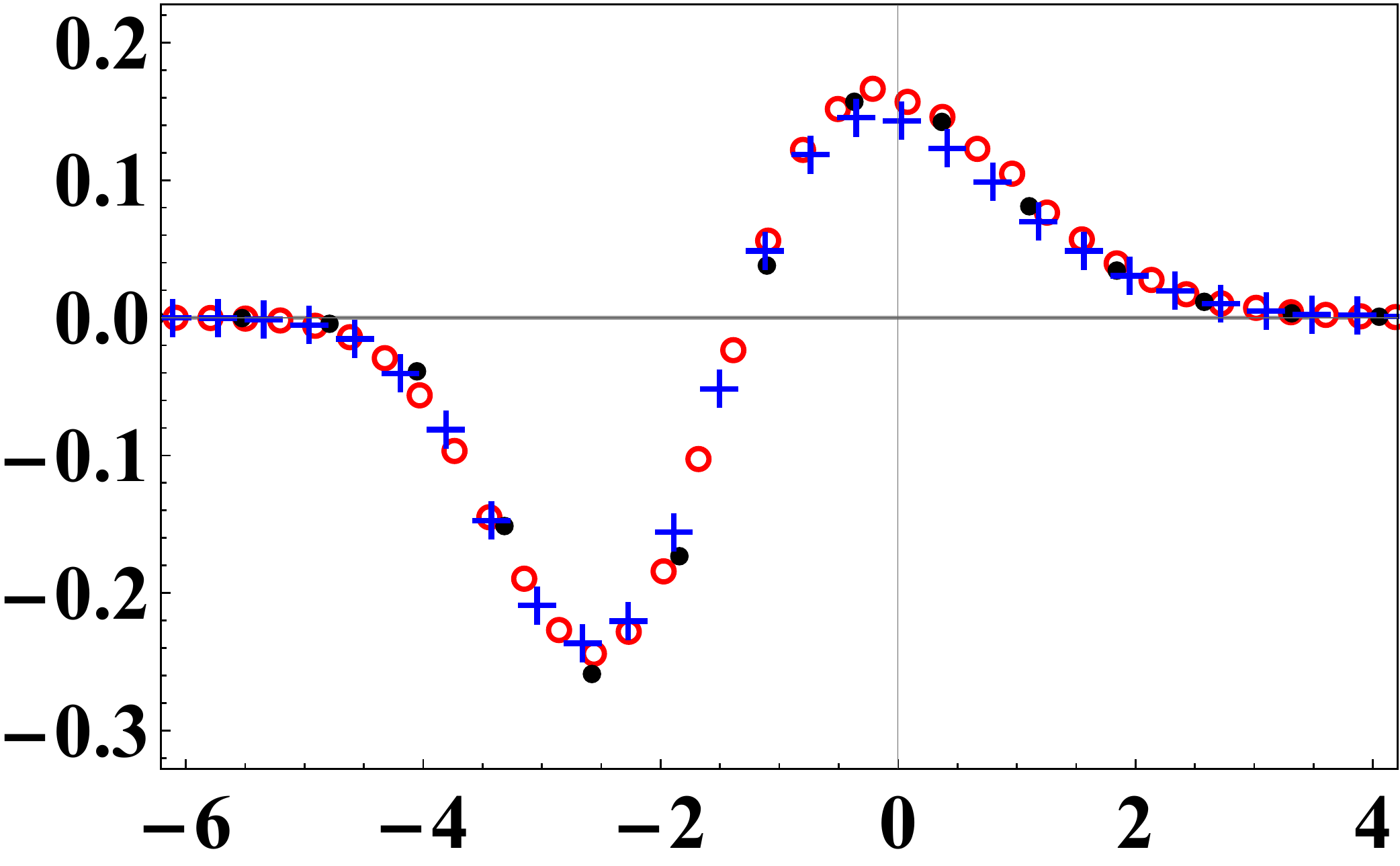}};
         \node at (3.5,-1.7) {\textbf{\textit{t}}};
         \node at (0.3,2.3) {$\delta_1 (t)$};
         \end{tikzpicture}
	\end{minipage} \quad
	\begin{minipage}{0.45\textwidth}
	\begin{tikzpicture}[scale=1, every node/.style={transform shape}]
         \node[inner sep=0pt] at (0,0) {\includegraphics[height=4cm, align=t]{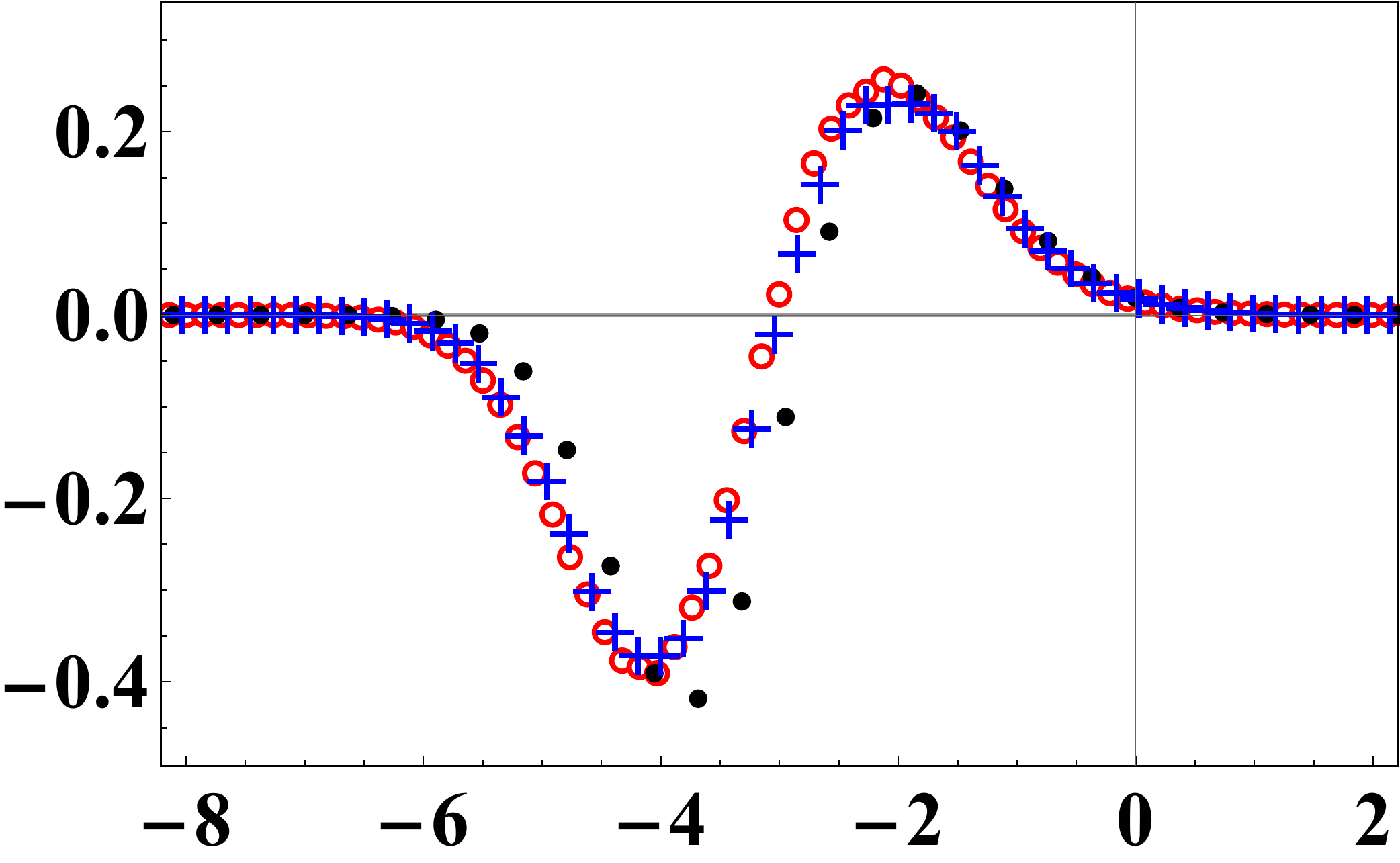}};
         \node at (3.5,-1.7) {\textbf{\textit{t}}};
         \node at (0.3,2.3) {$\delta_4 (t)$};
         \end{tikzpicture}
	\end{minipage}
\caption{Comparison of $\delta_1 (t)$ from \eqref{d:SymmLDSerror} and $\delta_4 (t)$ from \eqref{d:SymmLISerror} rescaled by $t= (l \pm 1 - 2\sqrt{N})/N^{1/6}$, with $N= 400$ [black dots], $20000$ [blue crosses] and $10^5$ [red circles].}
\label{f:CompDelta14}
\end{figure}

Finally, we comment on  the analogues of (\ref{4.15}) in relation to the large $N$ expansion of the mean and variance. 
We proceeded as for $l_N^\Box$ and postulated that the quantities $\hat{\mu}_1(N),  \, \hat{\sigma}^2_1(N), \, \hat{\mu}_4(N),
\,   \hat{\sigma}^2_4(N)$ each have a large $N$ expansion of the form $c + dN^{-1/3} + \cdots$. The exact data for
$l^{\boxbslash}_N, \, l^{\boxslash}_N$ available for $N$ up to 400 was then used to find best fits for the corresponding
values of $c$ and $d$. However, as distinct from our findings in the case of $l_N^\Box$
seen in Figures \ref{f:muhat} and \ref{f:sigmahat}, when calculating the differences
analogous to (\ref{e:sigmahatdiff}) a decrease to zero as $N$ increased was not observed. Hence, as yet we do not have
convincing evidence for $N$ dependence of higher order terms in the large $N$ expansion of the mean and standard
deviation of $l^{\boxbslash}_N, \, l^{\boxslash}_N$.

  \subsection*{Acknowledgements}
	This research is part of the program of study supported
	by the Australian Research Council Centre of Excellence ACEMS
	and the Discovery Project grant DP210102887.
	We thank Eric Rains for providing the computer program as referenced
	in the text. We are most grateful to Jinho Baik for bringing to our attention  the crucial
	reference \cite{BJ13}, which unfortunately was missed when we prepared the first draft
	of this work. Also, we acknowledge the contribution of Allan Trinh in
	collaborating in the early stages of this project.

\providecommand{\bysame}{\leavevmode\hbox to3em{\hrulefill}\thinspace}
\providecommand{\MR}{\relax\ifhmode\unskip\space\fi MR }
\providecommand{\MRhref}[2]{%
  \href{http://www.ams.org/mathscinet-getitem?mr=#1}{#2}
}
\providecommand{\href}[2]{#2}

\end{document}